


\documentclass[aps,amssymb,amsmath,amsfonts,pra]{revtex4}

\usepackage{bm,bbm}

\usepackage{graphicx}
\usepackage{epstopdf}
\usepackage{amsthm}
\usepackage{amsmath}
\usepackage{color}

\usepackage{hyperref}
\usepackage{graphicx}
\usepackage{graphics}



\usepackage[T1]{fontenc}
\usepackage[english]{babel}
\usepackage[utf8]{inputenc}

\newtheorem{theorem}{Theorem}
\newtheorem{proposition}[theorem]{Proposition}
\newtheorem{lemma}[theorem]{Lemma}
\newtheorem{corollary}[theorem]{Corollary}

\renewcommand{\Re}{\mathbf{Re}}

\newcommand{\R}{\mathbb{R}}

\DeclareMathOperator{\id}{Id}

\newcommand{\<}{\langle}
\renewcommand{\>}{\rangle}

\newcommand{\scalar}[2]{\langle #1 | #2 \rangle}

\newcommand{\ket}[1]{| #1 \rangle}
\newcommand{\bra}[1]{\langle #1 |}
\newcommand{\tr}{\mathrm{Tr}}

\newcommand{\C}{\mathbb{C}}

\newcommand{\E}{\mathbb{E}}

\usepackage{amsmath, amssymb, amsthm}

\def\Ex{\mathbb{E}}
\def\er{\mathbb{R}}
\def\ce{\mathbb{C}}
\def\ve{\varepsilon}
\newcommand{\calE}{\mathcal{E}}
\newcommand{\p}{\mathbb{P}}

\newcommand{\Arg}{{\rm Arg\,}}


\newcommand{\constantCzero}{C_{0}}

\newcommand{\constantCa}{C_{1}}

\newcommand{\constantCb}{\alpha}

\newcommand{\constantCc}{C_{3}}

\newcommand{\constantCd}{C_{13}}

\newcommand{\constantCe}{C_{14}}

\newcommand{\constantCf}{C_{2}}

\newcommand{\constantCg}{C_{4}}

\newcommand{\constantCh}{C_{15}}

\newcommand{\constantCi}{C_{16}}

\newcommand{\constantCj}{C_{17}}

\newcommand{\constantCk}{C_{5}}

\newcommand{\constantCl}{C_{6}}

\newcommand{\constantCm}{C_{7}}

\newcommand{\constantCn}{C_{8}}

\newcommand{\constantCo}{C_{9}}

\newcommand{\constantCp}{C_{10}}

\newcommand{\constantCq}{C_{11}}

\newcommand{\constantCr}{C_{12}}



\begin{document}
\title{Asymptotic entropic uncertainty relations}


\author{Rados{\l}aw Adamczak}

\author{Rafa{\l} Lata{\l}a}

\affiliation{Institute of Mathematics, University of Warsaw,
ul. Banacha 2,  PL-02-097 Warsaw, Poland}

\author{Zbigniew Pucha\l{}a}

\affiliation{Institute of Theoretical and Applied Informatics, Polish Academy
of Sciences, ul. Ba\l{}tycka 5, 44-100 Gliwice, Poland}

\affiliation{Institute of Physics, Jagiellonian University, ul.  \L{}ojasiewicza 11, 30-059
Krak{\'o}w, Poland}

\author{Karol \.{Z}yczkowski}

\affiliation{Institute of Physics, Jagiellonian University, ul. \L{}ojasiewicza 11, 30-059
Krak{\'o}w, Poland}

\affiliation{Center for Theoretical Physics, Polish Academy of Sciences, Aleja
Lotnik{\'o}w 32/46, PL-02-668 Warsaw, Poland}


\date{7/10/2015}

\begin{abstract}
We analyze entropic uncertainty relations for two orthogonal measurements on a
$N$-dimensional Hilbert space, performed in two generic bases. It is assumed
that the unitary matrix $U$ relating both bases is distributed according to the
Haar measure on the unitary group. We provide lower bounds on the average
Shannon entropy of probability distributions related to both measurements. The
bounds are stronger than these obtained with use of the entropic uncertainty
relation by Maassen and Uffink, and they are optimal up to additive constants.

We also analyze the case of a large number of measurements and obtain
strong entropic uncertainty
relations which hold with high probability with respect to the random choice of
bases. The lower bounds we obtain are optimal up to additive constants and
allow us to establish the conjecture by Wehner and Winter on the asymptotic
behavior of constants in entropic uncertainty relations
as the dimension tends to infinity.

As a tool we develop estimates on the maximum operator norm of a submatrix of a
fixed size of a random unitary matrix distributed according to the Haar measure,
which are of an independent interest.
\end{abstract}

\maketitle


\section{Introduction}
Uncertainty relations belong to key features of quantum theory. In the original
approach of Heisenberg \cite{He27}, Kennard~\cite{kennard1927quantenmechanik}
and Robertson \cite{Ro29} one considers the product of variances
characterizing measurements of two non-commuting observables. In a later
complementary approach one  studies entropies of probability vectors associated
with both measurements and derives lower bounds for the sum of the two entropies
\cite{BBM75}.

State independent bounds for any two orthogonal measurements
performed on a state from a Hilbert space ${\cal H}_N$
of a finite dimension $N$ were obtained first by Deutsch \cite{De83}
and later improved by Maassen and Uffink \cite{MU88}.
The problem is entirely specified by the unitary matrix $U$
defining the transition from one measurement basis to the other one. The bounds
of \cite{De83} and \cite{MU88} are both expressed in terms of the absolute value
of the largest entry of $U$.
More information on entropic uncertainty relations
can be found in review articles \cite{WW10,IBBLR}, while
some of their numerous applications
in the theory of quantum information are discussed in
\cite{Berta,Ra12,ZBP13,G++13}.
Certain improvements with respect to the result of Maassen and Uffink
have been recently obtained in
\cite{FGG13,puchala2013majorization,CP14,rudnicki2014strong,
gour15,prcpz15}.

Although usually one aims to obtain bounds for two measurements
in bases related by a specific unitary matrix $U$,  alternatively
one may benchmark the quality of a given bound by averaging it
over the set of all unitaries with respect to the Haar measure on the unitary group $U(N)$.
Such an approach was advocated in the papers of Hayden et. al \cite{Hayden2004randomizing}
and of Wehner and Winter~\cite{WW10},
in which the authors considered the special case, where
the number $L$ of measurements taken was a function of the dimension $N$ of the
Hilbert space.

Following their approach we analyze entropic uncertainty principles for a fixed
number of measurements in bases related by random unitary matrices (throughout
the article all random unitary matrices we consider are distributed according to
the Haar measure on the unitary group) and provide lower bounds on the sum of
entropies which hold with high probability and differ from the best possible
only by an additive, dimension independent constant.

Our goals and motivation depend on the number of measurements we consider. For
$L = 2$ measurements, many uncertainty relations are known, including the
Maassen--Uffink bound \cite{MU88}, the majorization bounds
\cite{FGG13,puchala2013majorization}, strong majorization relation of
\cite{rudnicki2014strong} or a recent result by Coles and Piani \cite{CP14}.
While in general such inequalities complement each other, it is of interest to
verify how they perform on typical measurements, i.e. on measurements related by
a random unitary matrix. To answer this question we
derive an optimal entropic uncertainty relation for generic measurements -- see
Theorem~\ref{th:asymptotic-strong} in section~\ref{sec:asym-eur}.

A question of entropic uncertainty relations for a large number $L$ of generic
bases in $N$ dimensional Hilbert space, was posted by Wehner and
Winter~\cite{WW10}. In this work we prove  that, with high probability, the
average entropy is bounded from below by $\frac{L-1}{L} \log N - c$, where $c$
is na additive constant independent of $N$ and $L$. 
This allows us to give the affirmative answer to a strong form of a conjecture
by Wehner and Winter \cite{WW10} -- see Theorem~\ref{prop:uncertainty} and
Corollary~\ref{cor:WW-conjecture} in section~\ref{sec:several-measurements}.
Asymptotic uncertainty relations derived in this work improve estimations on the
quality of the information locking protocols recently obtained by Fawzi et al.
\cite{Fawzi2011}.


Our approach is based on the Schur concavity of entropy which together with the
approach proposed in \cite{rudnicki2014strong}
allows us to reduce the problem of finding lower bounds on the sums of Shannon
entropies, to the problem of finding upper bounds on norms of submatrices of a
random unitary matrix. The latter can be then obtained by employing the
concentration of measure phenomenon on the unitary group. We believe that
estimates of maximum norms of a submatrix of fixed size of a random unitary
matrix in high dimensions are of independent interest as similar quantities have
previously appeared in the context of asymptotic geometric analysis and
compressed sensing.

At a technical level it may be noted that the Schur concavity of entropy allows to
reduce the analysis to functions whose Lipschitz constants behave better (as the
dimension increases) then the Lipschitz constant of the entropy itself, thus
allowing us to obtain the right balance between the complexity of approximation
and available tail bounds.

\medskip

This work is organized as follows. In section~\ref{sec:II} we briefly recall the
Maassen--Uffink relations and their improvements. Bounds for the norms of
submatrices of random unitary matrices, also called their truncations
\cite{ZS00}, are presented in Section~\ref{sec:norms}. Asymptotic entropic
uncertainty relations are analyzed in Section~\ref{sec:asym-eur} for the case of
two measurements, while the case of several measurements is discussed in
Section~\ref{sec:several-measurements}. The presentation and discussion of the
results is concluded in Section VI while the proofs of some lemmas are deferred
to the Appendices.

\section{Entropic uncertainty relations}\label{sec:II}

In this section we present entropic uncertainty relations we are going to study
in the asymptotic case. The most important bound for the sum of entropies is due
to Maassen and Uffink~\cite{MU88}.

Consider a normalized  vector $|\psi\rangle$ belonging to
a $N$--dimensional complex Hilbert space ${\cal H}_N$ and a non-degenerate
observable $A$, whose  eigenstates $\ket{a_i}$, $i =1,\ldots,N$, form an
orthonormal basis of ${\cal H}_N$. The probability that this observable measured
in the state $|\psi\rangle$ gives the $i$--th outcome is given by
$p_i^{\psi}=|\<a_i|\psi\>|^2$. Clearly $\sum_{i=1}^Np_i^\psi = 1$, so the vector
$p^\psi = (p_1^\psi,\ldots,p_N^\psi)$ can be identified with a probability
distribution on the set $\{1,\ldots,N\}$. The uncertainty associated to the
measurement $A$ can be then described by the Shannon entropy of $p^\psi$,
defined as
\begin{displaymath}
H(p^\psi)=-\sum_{i=1}^N p^\psi_i\ln p^\psi_i.
\end{displaymath}

Consider now another observable $B$ and let $|b_i\>$, $i = 1,\ldots,N$, be its
eigenstates. Let $ q^\psi$ be the probability distribution associated with $B$,
i.e. $q = (q^\psi_1,\ldots,q^\psi_N)$, where $q_j^{\psi}=|\<b_j|\psi\>|^2$. The
uncertainty corresponding to $B$ can be quantified by the corresponding
Shannon's entropy $H(q^\psi)$. If the observables $A$ and $B$ do not commute,
then the sum of both entropies for any state $|\psi\rangle$ is bounded from
below, and (as one can easily see) the bound depends only on the unitary matrix
$U = (U_{ij})_{i,j=1}^N$, where $U_{ij}=\langle a_i|b_j\rangle$.

In~1988 Maassen and Uffink  \cite{MU88}
obtained the result  of the form
\begin{equation}
\label{MU}
H(p^\psi) + H(q^\psi) \geq - \ln c^2\equiv B_{MU} 
\end{equation}
where $c= \max_{ij} |U_{ij}|$.
The Maassen-Uffink bound has been recently improved in the whole range of the
parameter $c$ by Coles and Piani \cite{CP14} who provided
a state independent bound
\begin{equation}\label{ColPia}
H\left(p^\psi\right)+H\left(q^\psi\right)\geq
-\ln c^2+\left(\frac12 -\frac{c}{2} \right)
\ln\frac{c^2}{c^2_{2}}\equiv B_{\textrm{CP}},
\end{equation}
with $c_{2}$ being the second largest value among $\left|U_{ij}\right|$, $1\le i,j \le N$.
Since $c_{2}\leq c$, the second term in (\ref{ColPia}) is a non-negative
correction to (\ref{MU}).


Let us now pass to uncertainty relations based on the Schur concavity of the
Shannon entropy.
They will take into account not only the largest or the two largest
elements of the transition matrix $U$, but the behavior of the operator norms
of all submatrices of $U$.

Let us first introduce some auxiliary notation related to matrices. By $U(N)$ we
will denote the unitary group of $N\times N$ unitary matrices. For $U =
(U_{ij})_{i,j=1}^N \in U(N)$ and nonempty sets $I,J \subset \{1,\ldots,N\}$, let
$U(I,J) = (U_{ij})_{i\in I, j \in J}$, i.e. $U(I,J)$ is the matrix obtained from
$U$ by restricting to rows and columns corresponding to the elements of $I$ and
$J$ respectively. For a matrix $M$, by $\|M\|$ we denote its operator norm,
equal to its largest singular value, $\sigma_{\rm max}(M)$. Finally, for $1\le
n,m \le N$ we define
\begin{equation}\label{eq:definition-U-n-m}
\| \widehat{U}^{(n,m)}\| = \max\Big\{ \| U(I,J)\| \colon I,J \subset \{1,\ldots,N\},
|I| = n, |J| = m\Big\},
\end{equation}
i.e. $\| \widehat{U}^{(n,m)}\|$ is the maximal norm of a submatrix of $U$ of
size $n \times m$.

For any fixed matrix $U$ we shall introduce a set of $N$ coefficients
\begin{equation}\label{eqn:def-sk}
s_k  := \ \max \left\{  || \widehat{U}^{(1,k)}||,\ ||\widehat{U}^{(2,k-1)}||, \dots ,
 \  ||\widehat{U}^{(k,1)}|| \right\},\; k =1,\ldots,N.
\end{equation}

In the next step we define coefficients
\begin{equation}\label{eq:definition_of_R}
R_k = \left(\frac{1 + s_k}{2} \right)^2, \ k=1, \dots, N,
\end{equation}
so that $\left( \frac{1+c}{2}\right)^2 = R_1 \leq R_2 \leq \dots \leq R_N=1$.

\medskip

Recall also that if $x, y \in \R^N$ have nonnegative coordinates then we say
that $x$ is majorized by $y$ (which we denote by $x \prec y$) if
for $k=1,\ldots, N$, $\sum_{i=1}^k x^{\downarrow}_i \le \sum_{i=1}^k
y^{\downarrow}_i$ and $\sum_{i=1}^N x^{\downarrow}_i = \sum_{i=1}^N
y^{\downarrow}_i$, where  $x^{\downarrow}_i$ is the non-increasing rearrangement
of the numbers $x_i$. We say that a function $f \colon \R_+^N \to \R$ is Schur
concave if $f(x) \ge f(y)$ whenever $x \prec y$. It is well known that the
function $f(x) = \sum_{i=1}^N - x_i \ln x_i$ is Schur concave (see e.g.
\cite{BZ2006}).

\medskip
We are now ready formulate a result proved in \cite{puchala2013majorization}.

\begin{theorem}\label{th:main-theorem_P2013}
Let $(|a_i\>)_{i=1}^N$ and $(|b_i\>)_{i=1}^N$ be two orthonormal bases in ${\cal
H}_N$ and $U = (\< a_i|b_j\>)_{i,j=1}^N$ be the corresponding transition matrix.
Define
$Q = \left(R_1,R_2-R_1,R_3-R_2, \dots, R_N-R_{N-1} \right),$
where the coefficients $R_i$ are given above.
Then for any pure state $|\psi\> \in {\cal H}_N$, the probability vectors
$p^\psi = (p^\psi_1,\ldots,p^\psi_N)$, $q^\psi = (q^\psi_1,\ldots,q^\psi_N)$
with $p_i^{\psi}=|\<a_i|\psi\>|^2$, $q_i^{\psi}=|\<b_i|\psi\>|^2$, satisfy
$p^\psi\otimes q^\psi \prec Q.$
\end{theorem}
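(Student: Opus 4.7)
The plan is to reduce the majorization $p^\psi\otimes q^\psi\prec Q$ to the sequence of partial-sum inequalities
\begin{displaymath}
\sum_{(i,j)\in T_k}p_i^\psi q_j^\psi\le R_k,\qquad k=1,\ldots,N,
\end{displaymath}
where $T_k$ denotes (any choice of) the positions of the $k$ largest entries of the $N\times N$ array $(p_i^\psi q_j^\psi)$. This is enough because $R_k=\sum_{\ell=1}^kQ_\ell$ and the sum of the $k$ largest entries of $Q$ is at least $R_k$; the total sums agree since $\sum_{i,j}p_i^\psi q_j^\psi=1$ and $R_N=1$ (the latter because each column of a unitary matrix is a unit vector, so $s_N\ge\|\widehat{U}^{(N,1)}\|=1$).

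The first ingredient is an operator inequality linking $(\sum_{i\in I}p_i^\psi)(\sum_{j\in J}q_j^\psi)$ to $\|U(I,J)\|$. For subsets $I,J\subset\{1,\ldots,N\}$ let $P_I^a=\sum_{i\in I}\ket{a_i}\bra{a_i}$ and $P_J^b=\sum_{j\in J}\ket{b_j}\bra{b_j}$; since $\<a_i|P_J^b|a_{i'}\>=(U(I,J)U(I,J)^*)_{i,i'}$ one has $\|P_I^a P_J^b\|=\|U(I,J)\|$. Using the standard bound $\|P_I^a+P_J^b\|\le 1+\|P_I^a P_J^b\|$ for two orthogonal projections (which is transparent from the $2\times 2$ blocks in the Jordan/CS decomposition of a pair of projections) and evaluating at $\psi$ gives
\begin{displaymath}
\sum_{i\in I}p_i^\psi+\sum_{j\in J}q_j^\psi=\<\psi|(P_I^a+P_J^b)|\psi\>\le 1+\|U(I,J)\|,
\end{displaymath}
and AM--GM upgrades this to
\begin{displaymath}
\Big(\sum_{i\in I}p_i^\psi\Big)\Big(\sum_{j\in J}q_j^\psi\Big)\le\Big(\frac{1+\|U(I,J)\|}{2}\Big)^2.
\end{displaymath}

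The second ingredient is a sharp combinatorial cover of $T_k$. After reordering the bases so that $(p_i^\psi)$ and $(q_j^\psi)$ are non-increasing, the array $(p_i^\psi q_j^\psi)$ is non-increasing in each coordinate, so $T_k$ may be chosen to be a Young diagram of size $k$. If this diagram has $r$ rows and $s$ columns, its hook at the corner $(1,1)$ has length $r+s-1$ and lies inside $T_k$, which forces $r+s\le k+1$. Taking $I$ to be the $r$ largest entries of $p^\psi$ and $J$ the $s$ largest of $q^\psi$ gives $T_k\subset I\times J$, and combining with the previous step yields
\begin{displaymath}
\sum_{(i,j)\in T_k}p_i^\psi q_j^\psi\le\Big(\frac{1+\|\widehat{U}^{(r,s)}\|}{2}\Big)^2\le R_{r+s-1}\le R_k,
\end{displaymath}
where the last two inequalities use the definition of $s_{r+s-1}$, the inequality $r+s-1\le k$, and the fact that $s_k$ is non-decreasing in $k$ (since enlarging either dimension of a submatrix can only increase its operator norm). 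Together with the equality of total sums, this establishes $p^\psi\otimes q^\psi\prec Q$.

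I expect the delicate point to be the second ingredient rather than the operator inequality: an arbitrary set of $k$ points in the $N\times N$ grid only fits into a rectangle $I\times J$ with $|I|\cdot|J|\ge k$, which is far too weak to match $R_k$. It is precisely the monotonicity of $p^\psi\otimes q^\psi$ under coordinate sorting, together with the Young-diagram hook-length identity $r+s-1\le k$, that replaces the multiplicative constraint by the additive one $|I|+|J|\le k+1$ that is dual to the definition of $s_k$.
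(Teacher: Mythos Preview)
Your argument is correct. Note, however, that the present paper does not supply its own proof of this theorem: it is quoted as a result of \cite{puchala2013majorization}, so there is nothing here to compare against directly. That said, the two ingredients you use---the projection inequality $\|P_I^a+P_J^b\|\le 1+\|P_I^aP_J^b\|=1+\|U(I,J)\|$ followed by AM--GM, and the Young-diagram covering that turns ``top $k$ entries of $p\otimes q$'' into a rectangle $I\times J$ with $|I|+|J|\le k+1$---are exactly the steps in the original reference, so your proof coincides with the standard one.
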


Notice, that from the above theorem and the Schur concavity of the Shannon's
entropy we obtain directly the following corollary.

\begin{corollary}\label{cor:renyi-bounds}
In the setting of Theorem \ref{th:main-theorem_P2013},
\begin{equation}\label{eqn:prod-bound}
\begin{split}
\min_{\psi} \Big( H (p^\psi) + H (q^\psi)\Big)\; \geq \; H \left(Q\right),
\end{split}
\end{equation}
where the minimum is taken over the set of all pure states $|\psi\> \in {\cal H}_N$.

\end{corollary}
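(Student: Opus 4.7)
The plan is to observe that this corollary is essentially an immediate consequence of combining Theorem~\ref{th:main-theorem_P2013} with two standard facts about Shannon entropy, namely its Schur concavity and its additivity on product distributions.

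More concretely, I would proceed as follows. First, fix an arbitrary pure state $|\psi\rangle \in \mathcal{H}_N$ and form the probability vector $p^\psi \otimes q^\psi \in \R_+^{N^2}$, whose coordinates are the products $p_i^\psi q_j^\psi$ for $1 \le i,j \le N$. By Theorem~\ref{th:main-theorem_P2013} we already have the majorization relation $p^\psi \otimes q^\psi \prec Q$, where $Q = (R_1, R_2 - R_1, \ldots, R_N - R_{N-1}) \in \R_+^N$. To compare vectors of different lengths one simply pads $Q$ with $N^2 - N$ additional zero coordinates; this does not affect either the majorization ordering or the Shannon entropy (since $0 \ln 0 = 0$ by convention).

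Next, I would invoke the Schur concavity of the Shannon entropy, a fact already recalled in the excerpt, to conclude that
\begin{equation*}
H(p^\psi \otimes q^\psi) \;\ge\; H(Q).
\end{equation*}
The final step is to use the well-known additivity of Shannon entropy under tensor products of probability distributions, that is, $H(p^\psi \otimes q^\psi) = H(p^\psi) + H(q^\psi)$, which follows by a direct computation from the definition of $H$ together with $\sum_i p_i^\psi = \sum_j q_j^\psi = 1$. Combining these two identities gives $H(p^\psi) + H(q^\psi) \ge H(Q)$ for every pure state $|\psi\rangle$, and taking the infimum over $\psi$ yields \eqref{eqn:prod-bound}.

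There is no real obstacle here: the entire content of the corollary is packaged in Theorem~\ref{th:main-theorem_P2013}, and the remaining work is only to translate the majorization conclusion into an entropic inequality via Schur concavity and additivity. The only minor point worth stating explicitly in the write-up is the convention used to compare probability vectors of different dimensions, but this is standard and causes no difficulty.
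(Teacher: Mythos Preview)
Your proposal is correct and follows essentially the same approach as the paper, which simply notes that the corollary follows directly from Theorem~\ref{th:main-theorem_P2013} together with the Schur concavity of the Shannon entropy. Your write-up makes explicit the additivity $H(p^\psi\otimes q^\psi)=H(p^\psi)+H(q^\psi)$ and the padding convention, which the paper leaves implicit.
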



Recently  an improved version of majorization entropic uncertainty relations was
derived  in ~\cite{rudnicki2014strong}.

\begin{theorem}\label{th:strong}
In the setting of Theorem \ref{th:main-theorem_P2013},
we define the numbers $x_i, i = 1,\ldots,2N$ by the equality $p^\psi\oplus
q^\psi = (x_1,\ldots,x_{2N})$. Then for $k = 1,\ldots,2N$,
\begin{equation}\label{eq:partial-sum-estimate}
\sum_{i=1}^k x^{\downarrow}_i \le 1 + s_{k-1},
\end{equation}
where we additionally set $s_0 = 0$. As a consequence,
\begin{equation}
p^\psi \oplus q^\psi \prec (1,s_1,s_2-s_1,s_3-s_2,\dots,s_N-s_{N-1}).
\end{equation}

\end{theorem}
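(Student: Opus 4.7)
The plan is to bound the partial sums $\sum_{i=1}^k x_i^\downarrow$ directly by reducing them to operator norms of submatrices of $U$, via the identity $\|P+Q\|=1+\|PQ\|$ for two orthogonal projections.

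First I would observe that the $k$ largest entries of $p^\psi \oplus q^\psi$ consist of $n$ entries from $p^\psi$, say indexed by $I\subset\{1,\ldots,N\}$ with $|I|=n$, and $m=k-n$ entries from $q^\psi$, say indexed by $J\subset\{1,\ldots,N\}$ with $|J|=m$. Setting $P=\sum_{i\in I}|a_i\rangle\langle a_i|$ and $Q=\sum_{j\in J}|b_j\rangle\langle b_j|$ — orthogonal projections onto coordinate subspaces in the two bases — the partial sum is
\begin{equation*}
\sum_{i=1}^k x_i^\downarrow=\sum_{i\in I}|\langle a_i|\psi\rangle|^2+\sum_{j\in J}|\langle b_j|\psi\rangle|^2=\langle\psi|(P+Q)|\psi\rangle\le\|P+Q\|.
\end{equation*}

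Next I would invoke the standard identity $\|P+Q\|=1+\|PQ\|$ valid for any two orthogonal projections (which follows, for instance, from the CS-decomposition: the non-trivial eigenvalues of $P+Q$ are $1\pm\cos\theta_i$, where $\theta_i$ are the principal angles between the ranges of $P$ and $Q$, and $\|PQ\|=\cos\theta_1$). Then $\|PQ\|$ is immediately identified with the norm of the submatrix: a short computation in coordinates shows that on the range of $P$ the operator $PQP$ is represented by $U(I,J)U(I,J)^\ast$, so $\|PQ\|=\|PQP\|^{1/2}=\|U(I,J)\|$. Combining these gives
\begin{equation*}
\sum_{i=1}^k x_i^\downarrow\le 1+\|U(I,J)\|.
\end{equation*}

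For $1\le n,m\le k-1$ the definition of $s_{k-1}$ yields $\|U(I,J)\|\le s_{k-1}$, proving \eqref{eq:partial-sum-estimate}. The degenerate cases $n=0$ or $m=0$ are handled separately using $\sum q_j^\psi\le 1$ (respectively $\sum p_i^\psi\le 1$) and the fact that $s_{k-1}\ge 0$; the case $k=1$ is covered by the convention $s_0=0$ since every probability is at most $1$.

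Finally, to deduce the majorization $p^\psi\oplus q^\psi\prec(1,s_1,s_2-s_1,\ldots,s_N-s_{N-1})$, I would note that $(s_k)$ is non-decreasing so the target vector has non-negative entries, that the partial sum bound \eqref{eq:partial-sum-estimate} is exactly the defining majorization inequality of order $k$ (the target vector is already in non-increasing order), and that the totals match: $\sum_i x_i=2$, while the RHS sums to $1+s_N=2$ because $\|\widehat U^{(N,1)}\|=1$ (columns of $U$ are unit vectors). For indices $k>N+1$ the target vector is padded with zeros, which is consistent since for such $k$ one has $s_{k-1}=1$ by the same column/row-norm argument. The main technical point — and the only nontrivial step — is the operator identity $\|P+Q\|=1+\|PQ\|$ together with the identification $\|PQ\|=\|U(I,J)\|$; everything else is bookkeeping.
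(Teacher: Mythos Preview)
The paper does not prove Theorem~\ref{th:strong}; it is quoted from \cite{rudnicki2014strong}. Your argument is precisely the one given there and is correct: writing the partial sum as $\langle\psi|(P+Q)|\psi\rangle\le\|P+Q\|$, invoking $\|P+Q\|=1+\|PQ\|$ for two orthogonal projections (valid whenever at least one of them is nonzero, which is automatic for $k\ge 1$), and identifying $\|PQ\|=\|U(I,J)\|$ via $PQP|_{\mathrm{ran}\,P}=U(I,J)U(I,J)^\ast$.

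One small correction to the final paragraph: the parenthetical claim that ``the target vector is already in non-increasing order'' is neither obvious nor necessary. It is not clear that the increments $s_k-s_{k-1}$ are monotone for an arbitrary unitary $U$. Fortunately this does not matter: for any vector $y$ one has $\sum_{i\le k}y_i\le\sum_{i\le k}y_i^{\downarrow}$, so the bound \eqref{eq:partial-sum-estimate}, which reads $\sum_{i\le k}x_i^{\downarrow}\le 1+s_{k-1}=\sum_{i\le k}y_i$, already implies $\sum_{i\le k}x_i^{\downarrow}\le\sum_{i\le k}y_i^{\downarrow}$ and hence the majorization. You should simply drop the parenthetical.
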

The majorization relation of Theorem \ref{th:strong} implies the following
uncertainty relation
\begin{equation}
\label{eqn:strong-relation}
\min_{\psi} \Big(H(p) + H(q)\Big) \geq H((s_1,s_2-s_1,s_3-s_2,\dots,s_N-s_{N-1})),
\end{equation}
where as usual the minimum is taken over the set of all pure states.

Note that in this case we apply majorization techniques working with positive
vectors which are not normalized to unity. In paper~\cite{rudnicki2014strong} it
has been shown, that the bound~\eqref{eqn:strong-relation} based on the direct
sum is not weaker than the bound~\eqref{eqn:prod-bound} based on the tensor
product of probability vectors.

Another result proved in \cite{rudnicki2014strong} is an uncertainty relation
for many measurements, which we now recall. For $L\ge 2$ consider $N\times N$
unitary matrices $U_1,\ldots,U_L$ and let $\ket{u^{(i)}_j}$ be the $j$-th column
of $U_i$. Consider the probability distributions $p^{(i)}$, $i = 1,\ldots,L$
given by $p^{(i)}_j = |\scalar{u^{(i)}_j}{\psi}|^2$, $i=1,\ldots,L$, $j =
1,\ldots,N$. Note that to simplify the notation we suppress here the dependence
on $|\psi\>$.

Let finally $U$ be the concatenation of matrices $U_1,\ldots,U_L$ and for a set
$I \subset \{1,\ldots,LN\}$ with $|I| = k$, let $U_I$ be the $N \times k$ matrix
obtained from $U$ by selecting the columns of $U$ corresponding to the set $I$.
Define for $k=0,\ldots,NL-1$,
\begin{equation}\label{eq:definition-S_k}
\mathcal{S}_k = \max\{\|U_I\|^2\colon I \subset \{1,\ldots,LN\}, |I| = k+1\}.
\end{equation}
Note that $\mathcal{S}_0 = 1$, independently of the choice of unitary matrices $U_i$.

The following theorem was proved in \cite{rudnicki2014strong}.

\begin{theorem}\label{th:strong-many_measurements}

In the setting described above, define the coefficients $x_1,\ldots,x_{NL}$ by
the equality
$p^{(1)}\oplus\cdots\oplus p^{(L)} = (x_1,\ldots,x_{NL}).$
Then for $k \le NL$,
$\sum_{i=1}^k x^\downarrow_i\le \mathcal{S}_{k-1}.$
As a consequence,
$
p^{(1)}\oplus\cdots\oplus p^{(L)} \prec (\mathcal{S}_0,\mathcal{S}_1-\mathcal{S}_0,\mathcal{S}_2-\mathcal{S}_1,\ldots,\mathcal{S}_{LN} - \mathcal{S}_{LN-1})
$
and
\begin{equation}
\sum_{i=1}^L H(p^{(i)}) \ge - \sum_{i=1}^{LN} (\mathcal{S}_{i} - \mathcal{S}_{i-1})\ln(\mathcal{S}_{i} - \mathcal{S}_{i-1}).
\end{equation}
\end{theorem}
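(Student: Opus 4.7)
The plan is to recognize the entries of $p^{(1)}\oplus\cdots\oplus p^{(L)}$ as squared moduli of coordinates of $U^*\psi$, where $U$ is the $N\times NL$ concatenation of $U_1,\ldots,U_L$, and then bound the sum of any $k$ of them by the operator norm of a submatrix of $U$. The majorization and entropy inequality will then follow from Schur concavity.

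First I would note that $p^{(i)}_j=|\langle u^{(i)}_j|\psi\rangle|^2=|(U_i^*\psi)_j|^2$, so the $NL$ coordinates of $p^{(1)}\oplus\cdots\oplus p^{(L)}$ are precisely the numbers $|(U^*\psi)_\ell|^2$, $\ell=1,\ldots,NL$. Consequently, for any $I\subset\{1,\ldots,NL\}$ with $|I|=k$,
\[
\sum_{\ell\in I}|(U^*\psi)_\ell|^2=\|U_I^*\psi\|^2\le\|U_I\|^2\|\psi\|^2=\|U_I\|^2\le\mathcal{S}_{k-1},
\]
using the operator-norm inequality, the normalization $\|\psi\|=1$, and the definition (\ref{eq:definition-S_k}). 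Choosing $I$ to index the $k$ largest coordinates of $|U^*\psi|^2$ makes the left-hand side equal to $\sum_{i=1}^k x^\downarrow_i$, which establishes the estimate $\sum_{i=1}^k x^\downarrow_i\le\mathcal{S}_{k-1}$.

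For the majorization, observe that $\mathcal{S}_k$ is nondecreasing in $k$ (adjoining a column cannot decrease a submatrix norm), so the vector $Y:=(\mathcal{S}_0,\mathcal{S}_1-\mathcal{S}_0,\ldots,\mathcal{S}_{LN-1}-\mathcal{S}_{LN-2})$ has nonnegative entries, and its unsorted initial segment of length $k$ already sums to $\mathcal{S}_{k-1}$; hence its decreasing rearrangement satisfies $\sum_{i=1}^k Y^\downarrow_i\ge\mathcal{S}_{k-1}\ge\sum_{i=1}^k x^\downarrow_i$. Both sides have total sum $L$: on the left because each $p^{(i)}$ is a probability distribution, and on the right because $UU^*=L\cdot I_N$ forces $\mathcal{S}_{LN-1}=\|U\|^2=L$. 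Therefore $p^{(1)}\oplus\cdots\oplus p^{(L)}\prec Y$, and Schur concavity of $x\mapsto-\sum x_i\ln x_i$ on $\mathbb{R}_+^{NL}$ (already invoked earlier in the paper) yields the desired entropy inequality.

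The argument reduces essentially to a single application of the operator-norm bound $\|Mv\|\le\|M\|\|v\|$ together with Schur concavity, so there is no real obstacle. The only mild subtlety is that the comparison vector $Y$ need not be sorted in decreasing order, so one must pass to its rearrangement, relying on the fact that partial sums of the sorted version dominate those of any other ordering.
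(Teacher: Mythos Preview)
Your argument is correct. Note, however, that the paper does not itself prove this theorem: it is quoted from \cite{rudnicki2014strong}, so there is no in-paper proof to compare against. That said, your approach---writing each $x_\ell$ as $|(U^*\psi)_\ell|^2$ and bounding $\sum_{\ell\in I}|(U^*\psi)_\ell|^2=\|U_I^*\psi\|^2\le\|U_I\|^2\le\mathcal{S}_{k-1}$, then invoking monotonicity of $\mathcal{S}_k$ and Schur concavity---is precisely the natural argument underlying the result in \cite{rudnicki2014strong}, including your observation that $UU^*=L\cdot I_N$ forces the total sum on the right to equal $L$. The only cosmetic point is that the paper's displayed majorizing vector runs up to $\mathcal{S}_{LN}-\mathcal{S}_{LN-1}$, whereas $\mathcal{S}_k$ is only defined for $k\le LN-1$; your version with $LN$ entries ending at $\mathcal{S}_{LN-1}-\mathcal{S}_{LN-2}$ is the cleaner formulation, and the discrepancy is harmless since $\mathcal{S}_{LN-1}=L$ already.
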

We note that for general $L > 2$, it is an open problem  to construct
deterministic unitary $N\times N$ matrices $U_1,\ldots,U_L$ such that for all
pure states $|\psi\>$,
\begin{equation}\label{eq:strong-relation-several-measurements}
\sum_{i=1}^L H(p^{(i)}) \ge (L-1)\ln N - CL,
\end{equation}
where $C$ is a constant independent of $N$.  A deterministic construction is
known only for $L = N + 1$, in which case it was proved by Ivanovic
\cite{Ivanovic} and S{\'a}nchez-Ruiz \cite{Sanchez-Ruiz} that the above bound
holds for unitary matrices corresponding to a maximal set of mutually unbiased
basis. On the other hand, as shown in \cite{Ballester2007}, if $N$ is an even
power of a prime number and $L \le \sqrt{N}+1$, then there exist $L$ mutually
unbiased bases such that for some state $\ket{\psi}$,
\begin{equation}
\sum_{i=1}^L H(p^{(i)}) =  \frac{1}{2}L \ln N.
\end{equation}
In particular this shows that the approach of \cite{Ivanovic,Sanchez-Ruiz}
cannot be generalized to arbitrary $L$.

To the best of our knowledge, for a `small' number of measurements the only
available constructions of bases satisfying
\eqref{eq:strong-relation-several-measurements} are given by the random choice
of bases and work for $L \ge \ln^4 N$ \cite{Hayden2004randomizing}. We will
discuss them in Section \ref{sec:several-measurements} (together with related
work \cite{Fawzi2011}), where we show that random bases provide strong
uncertainty relations also for a smaller number of measurements.

\section{Norms of truncations of random unitaries}\label{sec:norms}
In this section we will provide estimates for the operator norms of submatrices
of a random unitary matrix, which as seen in the previous section, appear in
majorization entropic uncertainty principles. These estimates will become
crucial in the proofs of entropic uncertainty principles for random unitaries.
We emphasize that although from the point of view of uncertainty principles,
bounds on norms of submatrices are simply a tool, we have decided to state them
in a separate section as we believe that they may be of independent interest,
especially from the perspective of Random Matrix Theory or Asymptotic Geometric
Analysis.

Before stating our results let us recall some basic notions related to random
unitary matrices. As is well known, the unitary group $U(N)$ of all $N\times N$
unitary matrices admits a unique probability measure invariant under left and
right multiplications, i.e. the Haar measure. In what follows by a $N\times N$
random unitary matrix we will always mean a random element of the group $U(N)$
distributed according to the Haar measure. Usually we will denote such a random
matrix by $U$, suppressing the dependence on $N$, as it is customary in the
Random Matrix Theory literature.

\medskip

Motivated by the result of Maassen and Uffink, for $U = (U_{ij})_{i,j=1}^N$, we
denote
\begin{equation}
c(U) = \max_{1\le i,j\le N} |U_{ij}|.
\end{equation}

The behavior of $c(U)$ for random unitaries was studied by Jiang \cite{Jiang},
who obtained

\begin{theorem}\label{th:jiang}
If $U$ is a $N\times N$ random unitary matrix, then for all $\varepsilon > 0$,
\begin{equation}\label{eqn:s1-bound}
\p\left(
(1 - \varepsilon)\sqrt{\frac{2}{N} \ln N}
\leq c(U) \leq
(1 + \varepsilon)\sqrt{\frac{2}{N} \ln N}
\right) \to 1
\text{ as } N\to \infty.
\end{equation}
\end{theorem}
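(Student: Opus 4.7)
The plan is to prove the two inequalities in \eqref{eqn:s1-bound} separately, since they are asymmetric in difficulty. The upper bound is straightforward: for a Haar-random $U \in U(N)$ each entry satisfies $|U_{ij}|^2 \sim \mathrm{Beta}(1, N-1)$ (a consequence of the invariance of the uniform measure on the sphere), so $\p(|U_{ij}|^2 > t) = (1-t)^{N-1}$ for $t \in [0,1]$. Taking $t = (1+\varepsilon)^2 \cdot 2N^{-1}\ln N$ and applying a union bound over the $N^2$ entries yields
\[
\p(c(U)^2 > t) \le N^2 (1-t)^{N-1} \le \exp\bigl(2\ln N - (N-1)t\bigr) = N^{\,2 - 2(1+\varepsilon)^2 + o(1)},
\]
which tends to $0$ since $(1+\varepsilon)^2 > 1$.

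For the lower bound I would run a second-moment argument on the count
\[
S := \#\bigl\{(i,j) : |U_{ij}|^2 > s\bigr\}, \qquad s = (1-\varepsilon)^2 \cdot 2N^{-1}\ln N.
\]
From the same marginal computation, $\E[S] = N^{\,2 - 2(1-\varepsilon)^2 + o(1)} \to \infty$, and since $\{c(U)^2 \le s\} = \{S = 0\}$, the Paley--Zygmund inequality reduces the problem to showing $\E[S^2] \le (1+o(1))\E[S]^2$. This amounts to estimating the pair probabilities $\p(|U_{ij}|^2 > s, |U_{kl}|^2 > s)$ for distinct pairs. Pairs sharing a row or a column can be handled explicitly via the fact that for each row $i$ the vector $(|U_{i,1}|^2, \ldots, |U_{i,N}|^2)$ is uniformly distributed on the probability simplex; this gives $\p(|U_{i,j}|^2 > s, |U_{i,j'}|^2 > s) = (1-2s)_+^{N-1}$, and since there are only $O(N^3)$ such pairs their aggregate contribution to $\E[S^2]$ is $O(N^{-1})\E[S]^2 = o(\E[S]^2)$.

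The main obstacle is to control the pairs with disjoint rows and columns, where the entries are only approximately independent. The natural tool is a comparison of small submatrices of $U$ with matrices of i.i.d.\ complex Gaussian entries of variance $1/N$: by standard results on the Haar measure (essentially the QR decomposition of Ginibre matrices), any $n \times m$ block of $U$ can be coupled with such a Gaussian matrix so that the total variation distance vanishes, provided $nm = o(N)$. Since the total number of entries $N^2$ far exceeds this regime, the coupling cannot be applied globally in one step. The plan is to partition the index set into disjoint patches of size roughly $N^{1/2-\delta} \times N^{1/2-\delta}$, apply the Gaussian coupling patch by patch so that the entries within each patch are effectively independent, and combine the local estimates using the sparsity of the event $\{|U_{ij}|^2 > s\}$ across patches. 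Equivalently, one may recast this as a Chen--Stein Poisson approximation, showing that $S$ is asymptotically Poisson with mean $\E[S]\to\infty$; either route gives $\p(S=0) \to 0$ and hence the desired lower bound.
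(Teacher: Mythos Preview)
The paper does not prove Theorem~\ref{th:jiang}; it is quoted as a result of Jiang \cite{Jiang}. So there is no ``paper's proof'' to compare against, but your strategy can still be assessed, and it can be compared with the way the paper handles the closely related lower bound in Theorem~\ref{th:submatrices-fixed-size}.

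Your upper bound is correct and is the standard argument.

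For the lower bound, the second-moment plan is sound in spirit, but the patching step has a real gap. A Gaussian coupling for a single $N^{1/2-\delta}\times N^{1/2-\delta}$ block only controls the joint law of entries \emph{inside} that block; it says nothing about the joint distribution of entries lying in two different blocks, which is exactly what you need for the dominant contribution to $\E[S^2]$ (pairs $(i,j),(k,l)$ with $i\neq k$, $j\neq l$). Coupling patches ``one at a time'' cannot be combined into a global statement, because the Haar measure correlates all of them. Restricting $S$ to a single such patch does not help either: a patch contains only $N^{1-2\delta}$ entries, whose maximum is of order $\sqrt{(1-2\delta)N^{-1}\ln N}$, strictly below the target $(1-\varepsilon)\sqrt{2N^{-1}\ln N}$.

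The clean fix---and this is precisely what the paper does when proving the lower bound in Theorem~\ref{th:submatrices-fixed-size}---is to use a single \emph{rectangular} block of all $N$ rows and $M_N=N/\ln^2 N$ columns. Jiang's strong approximation (his Theorem~6, invoked around \eqref{eq:A19}) couples this entire $N\times M_N$ block with i.i.d.\ complex Gaussians up to an entrywise error $o(\sqrt{N^{-1}\ln N})$. The block then has $N M_N=N^2/\ln^2 N$ approximately independent entries, and the maximum of that many i.i.d.\ exponentials is $\ln(N^2/\ln^2 N)=(2+o(1))\ln N$, which gives the lower bound directly---no second-moment argument or patching needed. If you prefer to keep the second-moment framework, you can alternatively bound $\p(|U_{11}|^2>s,\,|U_{22}|^2>s)$ explicitly by conditioning on the first column and using that $|U_{22}|^2/(1-|U_{21}|^2)\sim\mathrm{Beta}(1,N-2)$ independently of the first column; this yields the required $(1+o(1))$ factor without any coupling.
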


The next theorem is a generalization of the result obtained by Jiang to the
maximal norm of submatrices of a random unitary matrix (as defined by
\eqref{eq:definition-U-n-m}).

\begin{theorem}\label{th:submatrices-fixed-size}
For any fixed positive integers $n,m$ and any $\varepsilon > 0$, if $U$ is a
$N\times N$ random unitary matrix, then
\begin{equation}\label{eq:submatrices-fixed-size}
\p\Big((1-\varepsilon)\sqrt{\frac{n+m}{N}\ln N } \le \|\widehat{U}^{(n,m)}\| \le (1+\varepsilon)\sqrt{\frac{n+m}{N}\ln N } \Big) \to 1 \text{ as } N\to \infty.
\end{equation}
\end{theorem}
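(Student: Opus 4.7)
The proof of Theorem~\ref{th:submatrices-fixed-size} splits into matching upper and lower bounds on $\|\widehat{U}^{(n,m)}\|$. The common ingredient is the following consequence of the left- and right-invariance of the Haar measure: for any deterministic unit vectors $u, v \in \C^N$, the bilinear form $\langle u, U v\rangle$ has the same distribution as the first coordinate of a uniform random vector on the unit sphere of $\C^N$, so that $|\langle u, U v\rangle|^2 \sim \mathrm{Beta}(1, N-1)$, and
$$ \p\big(|\langle u, U v\rangle| \ge t\big) = (1 - t^2)^{N-1} \le e^{-(N-1)t^2}, \qquad 0 \le t \le 1. $$

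\textbf{Upper bound.} For the upper half of \eqref{eq:submatrices-fixed-size} I would take a union bound over the $\binom{N}{n}\binom{N}{m} \le N^{n+m}/(n!\,m!)$ choices of index sets $(I, J)$. For each fixed pair I would control $\|U(I, J)\|$ via a $\delta$-net discretization on the unit spheres of $\C^n$ and $\C^m$; since $n, m$ are fixed, these nets have cardinality depending only on $n, m, \delta$ (and not on $N$). This reduces the problem to bilinear forms $\langle \tilde u, U \tilde v\rangle$ for zero-padded unit vectors $\tilde u, \tilde v \in \C^N$, to which the Beta tail above applies. Setting $t = (1+\varepsilon/4)\sqrt{(n+m)\log N/N}$ and choosing $\delta$ so that $(1-\delta)^{-2}(1+\varepsilon/4) \le 1+\varepsilon$, the union bound gives failure probability
$$ C_{n,m,\delta}\, N^{n+m} \cdot e^{-(N-1)(1+\varepsilon/4)^2 (n+m)\log N/N} = N^{(n+m)(1 - (1+\varepsilon/4)^2 + o(1))} = o(1). $$

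\textbf{Lower bound.} For the lower half I would use a Paley--Zygmund/second moment argument. For each $(I, J)$ fix unit vectors $u_{I,J} \in \C^n$, $v_{I,J} \in \C^m$, extended by zero to $\C^N$ on the supports $I$ and $J$ respectively, chosen so that the resulting bilinear forms are genuinely distinct across different index pairs, and set
$$ Z = \sum_{(I,J)} \1\Big\{ |\langle u_{I,J}, U v_{I,J}\rangle| \ge (1-\varepsilon)\sqrt{(n+m)\log N/N}\,\Big\}. $$
By the exact Beta tail $(1-t^2)^{N-1}$ one checks $\E Z \ge N^{(n+m)(1 - (1-\varepsilon)^2 - o(1))} \to \infty$. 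The crux is to prove $\E Z^2 \le (1+o(1))(\E Z)^2$, which requires quantifying the near-independence of the bilinear forms associated with different index pairs, using that for index sets of bounded total size the relevant marginal of the Haar measure is well approximated by a block of i.i.d.\ complex Gaussians of variance $1/N$. Paley--Zygmund then yields $\p(Z \ge 1) \to 1$, and the event $\{Z \ge 1\}$ is contained in $\{\|\widehat{U}^{(n,m)}\| \ge (1-\varepsilon)\sqrt{(n+m)\log N/N}\}$.

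\textbf{Main obstacle.} The delicate point in both directions is to extract the sharp leading constant $n + m$ rather than merely the correct scaling $\sqrt{\log N/N}$. In the upper bound this rules out a generic Lipschitz/concentration-of-measure estimate on $U(N)$, which would lose a multiplicative constant in the exponent, and forces one to use the exact Beta tail; in the lower bound it forces the second moment of $Z$ to be matched by the first moment squared up to $1 + o(1)$, a step absent from Jiang's single-entry argument (Theorem~\ref{th:jiang}) and which I expect to constitute the bulk of the technical work, via careful estimates on the joint distribution of overlapping bilinear forms under the Haar measure.
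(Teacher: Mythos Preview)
Your upper bound argument is correct and close in spirit to the paper's: both are net plus union bound over the $\binom{N}{n}\binom{N}{m}$ index pairs plus a sub-Gaussian tail for $\langle y|U|x\rangle$. The paper derives that tail from the log-Sobolev inequality on $U(N)$ via Theorem~\ref{th:upper}, whereas your use of the exact Beta tail is slightly more direct for fixed $n,m$.

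The lower bound, however, has a genuine gap: the inequality $\E Z^2\le(1+o(1))(\E Z)^2$ you need for Paley--Zygmund \emph{fails} for bilinear-form indicators. After passing to the Gaussian model, the forms attached to $(I,J)$ and $(I',J')$ have correlation $\rho=|I\cap I'|\,|J\cap J'|/(nm)$ (for the natural all-ones vectors; no deterministic choice does essentially better, since the restrictions to a common overlap of size $b$ are vectors in a $b$-dimensional space and cannot all be nearly orthogonal), and for correlated standard complex Gaussians the joint tail $\p(|X|\ge t,\,|X'|\ge t)$ decays like $e^{-2t^2/(1+\rho)}$. Take $n=1$, $m=3$ and pairs with $I=I'$, $|J\cap J'|=2$: then $\rho=2/3$, there are $\asymp N^5$ such pairs, and at the threshold $t^2=4(1-\varepsilon)^2\ln N$ each contributes $\asymp N^{-24(1-\varepsilon)^2/5}$, for a total of $\asymp N^{5-4.8(1-\varepsilon)^2}$ in $\E Z^2$; meanwhile $(\E Z)^2\asymp N^{8-8(1-\varepsilon)^2}$. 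The ratio $\asymp N^{-3+3.2(1-\varepsilon)^2}$ diverges for small $\varepsilon$, so the second moment is not controlled. Restricting to disjoint index sets would kill the correlations but leave only $\asymp N^2$ forms, which gives the constant $2$ instead of $n+m$.

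The paper circumvents this by changing the \emph{event}, not the counting. After Jiang's coupling of $\sqrt{N}\,U$ with an i.i.d.\ complex Gaussian block $\Gamma$, it takes $\mathcal{E}(I,J)$ to be the event that every entry $\Gamma_{ij}$ with $i\in I$, $j\in J$ satisfies $|\Gamma_{ij}|^2\ge\frac{n+m}{nm}\ln N$ and $\Arg\Gamma_{ij}\in[0,2\pi\delta)$. This is an intersection of $nm$ \emph{independent} single-entry events, so the Bonferroni/second-moment computation is pure combinatorics and yields $\p\big(\bigcup\mathcal{E}(I,J)\big)\ge d>0$; on $\mathcal{E}(I,J)$ the all-ones test vector certifies $\|\Gamma(I,J)\|\ge(1-\varepsilon)\sqrt{(n+m)\ln N}$. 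Gaussian concentration of $\|\widehat{\Gamma}^{(n,m)}\|$ around its mean then upgrades ``probability $\ge d$'' to ``probability $\to 1$''. The decisive idea you are missing is to encode the lower-bound event through the $nm$ individual entries rather than through a single bilinear form.
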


The above theorem works for fixed $n,m$, independent of the dimension $N$. Its
proof is based on the following result, which provides an estimate on the
maximal norm $\|\widehat{U}^{(n,m)}\|$ for arbitrary $n,m \le N$. Before we
formulate the theorem, let us recall also that $o(1)$ denotes any sequence which
converges to zero as $N \to \infty$, in particular for non-vanishing sequences
of real numbers $a_N$ and $b_N$, we have $a_N = (1+o(1))b_N$ if and only if
$\lim_{N\to \infty} \frac{a_N}{b_N} = 1$.

\begin{theorem}\label{th:upper}
Let $U$ be a $N\times N$ random unitary matrix. Then
\begin{equation}
\label{conc_Ank}
\p \left( \left| \|\widehat{U}^{(n,m)}\| -
\Ex  \|\widehat{U}^{(n,m)}\| \right|\geq t \right)\leq
2\exp\Big(-\frac{Nt^2}{12}\Big)\quad \mbox{ for }t\geq 0.
\end{equation}
Moreover, for any $0<\ve<1/3$,
\begin{equation}\label{eq:norm_expectation_estimate}
\Ex \|\widehat{U}^{(n,m)}\| \leq
\frac{1}{1-2\ve-\ve^2}\sqrt{\frac{2}{2N-1}}\left(m\ln \frac{eN}{m}+n\ln\frac{eN}{n}+2(n+m)\ln(1+\frac{2}{\ve})\right)^{1/2}.
\end{equation}
In particular for any fixed $n,m$ and $N\rightarrow\infty$,
\begin{equation}
\Ex \|\widehat{U}^{(n,m)}\| \leq (1+o(1))\sqrt{\frac{m+n}{N}\ln N}.
\label{trun1}
\end{equation}
\end{theorem}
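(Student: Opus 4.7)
My plan is to treat the three parts of the theorem separately.

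For the concentration inequality \eqref{conc_Ank}, I would first verify that the map $U \mapsto \|\widehat{U}^{(n,m)}\|$ is $1$-Lipschitz on $U(N)$ with respect to the Hilbert--Schmidt (Frobenius) metric. Indeed, for each pair $(I,J)$ the restriction $U \mapsto U(I,J)$ is a contraction in Hilbert--Schmidt norm (it simply drops coordinates), the operator norm is dominated by the Hilbert--Schmidt norm, so each $U \mapsto \|U(I,J)\|$ is $1$-Lipschitz, and the maximum of finitely many $1$-Lipschitz functions inherits the same constant. Then \eqref{conc_Ank} follows from the standard sub-Gaussian concentration of the Haar measure on $U(N)$ (Gromov's theorem, or in the sharp form of Meckes--Meckes coming from the positive Ricci curvature of $U(N)$), with the constant $12$ absorbing numerical normalization factors from the log-Sobolev inequality.

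For the expectation estimate \eqref{eq:norm_expectation_estimate}, I would combine an $\varepsilon$-net argument with a single-entry tail bound. Fix $I,J \subset \{1,\ldots,N\}$ with $|I| = n$, $|J|=m$, and choose $\varepsilon$-nets $\mathcal{N}_n \subset S^{2n-1}_{\mathbb{C}}$, $\mathcal{N}_m \subset S^{2m-1}_{\mathbb{C}}$ of cardinality at most $(1+2/\varepsilon)^{2n}$ and $(1+2/\varepsilon)^{2m}$ respectively. A standard approximation yields
\begin{equation*}
\|U(I,J)\| \leq \frac{1}{1-2\varepsilon-\varepsilon^2}\max_{x \in \mathcal{N}_m,\, y \in \mathcal{N}_n}|\langle U(I,J)x, y\rangle|.
\end{equation*}
For any fixed unit vectors $x \in \mathbb{C}^N$ supported on $J$ and $y \in \mathbb{C}^N$ supported on $I$, the unitary invariance of the Haar measure implies that $\langle y, Ux\rangle$ is distributed as the first coordinate of a uniform random point on the complex sphere $S^{2N-1}_{\mathbb{C}}$, hence $\p(|\langle y, Ux\rangle|\geq t) = (1-t^2)^{N-1}$. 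A union bound over the two nets and over the $\binom{N}{n}\binom{N}{m} \leq (eN/n)^n(eN/m)^m$ choices of $(I,J)$ yields a tail bound for $\|\widehat{U}^{(n,m)}\|$; integrating this over $[0,1]$ (using the trivial bound $\|\widehat{U}^{(n,m)}\|\leq 1$, since $U$ is unitary) and optimizing the cutoff at the threshold where the bound ceases to be trivial produces the claimed expectation estimate. The particular prefactor $\sqrt{2/(2N-1)}$ is obtained by exploiting the exact identity $(1-t^2)^{N-1}$ rather than its cruder exponential upper bound.

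Statement \eqref{trun1} then follows immediately from \eqref{eq:norm_expectation_estimate}: for fixed $n,m$ we have $m\ln(eN/m)+n\ln(eN/n) = (n+m)\ln N \cdot (1+o(1))$, the bounded additive term $2(n+m)\ln(1+2/\varepsilon)$ is absorbed into this $1+o(1)$, $\sqrt{2/(2N-1)} = (1+o(1))/\sqrt{N}$, and the prefactor $(1-2\varepsilon-\varepsilon^2)^{-1}$ can be driven arbitrarily close to $1$ by first letting $N\to\infty$ and then $\varepsilon\to 0$. I expect the main technical hurdle to be the sharp constant in \eqref{eq:norm_expectation_estimate}: a naive use of $-\ln(1-t^2)\geq t^2$ delivers only $\sqrt{1/(N-1)}$ (which already suffices for the asymptotic \eqref{trun1}), so obtaining the precise $\sqrt{2/(2N-1)}$ requires careful bookkeeping in the tail integration, working directly with $(1-t^2)^{N-1}$ and a well-chosen threshold.
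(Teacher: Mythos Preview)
Your plan matches the paper's proof almost exactly for \eqref{conc_Ank} and \eqref{trun1}. The one substantive difference is in how \eqref{eq:norm_expectation_estimate} is obtained.

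The paper does not integrate a tail bound. Instead, it observes that for fixed unit $|x\>$, the vector $U|x\>$ is uniform on $S_\C^{N-1}\simeq S^{2N-1}$, applies the log-Sobolev inequality on $S^{2N-1}$ (constant $1/(2N-1)$) to the $1$-Lipschitz functional $z\mapsto \Re\langle y|z\rangle$, and obtains the sub-Gaussian moment bound $\E e^{\lambda\Re\langle y|U|x\rangle}\le e^{\lambda^2/(2(2N-1))}$. The standard inequality $\E\max_E X\le \sqrt{2\sigma^2\ln|E|}$ for sub-Gaussians then gives the prefactor $\sqrt{2/(2N-1)}$ in one line, and the $\varepsilon$-net step (identical to yours) finishes the proof.

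Your union-bound-plus-integration route is perfectly valid and yields the same asymptotics, but your diagnosis of the sharp constant is off: the identity $\p(|\langle y,Ux\rangle|\ge t)=(1-t^2)^{N-1}$ is the tail of the \emph{modulus}, and for it the exponential majorant $e^{-(N-1)t^2}$ is essentially tight, so no amount of bookkeeping will upgrade $\sqrt{1/(N-1)}$ to $\sqrt{2/(2N-1)}$. Indeed, at the threshold $t_0=\sqrt{2\ln|E|/(2N-1)}$ one has $|E|(1-t_0^2)^{N-1}\approx |E|^{1/(2N-1)}>1$, so the union bound is not even informative there. The extra factor comes specifically from passing to the \emph{real part}, which lives on the real $(2N-1)$-sphere and inherits its log-Sobolev constant. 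This discrepancy is cosmetic (it affects \eqref{eq:norm_expectation_estimate} only by $O(1/N)$ and is invisible in \eqref{trun1}), but it explains where the stated constant actually originates.
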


In the special case, when one of the parameters $n,m$ equals to one, more
precise estimates are provided by subsequent theorems. The first one relies on a
geometric argument, exploiting the fact that in the special situation when $n =
1$ or $m=1$, the norms we consider are Euclidean.

\begin{theorem}\label{th:single-column}
If $U$ is a $N\times N$ random unitary matrix, then for all $\varepsilon >0$,
\begin{equation}\label{eq:harmonic}
\min_{1\le n\le N}\p \left(
\frac{n}{N}(1 + H_N - H_n) - \varepsilon \leq \|\widehat{U}^{(n,1)}\|^2 \leq \frac{n}{N}(1 + H_N - H_n) + \varepsilon
\right) \to 1 \text{ as } N \to \infty,
\end{equation}
where $H_m =  \sum_{j=1}^m 1/j$ denotes the $m$-th harmonic number.
\end{theorem}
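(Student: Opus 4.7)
The plan is to reduce the problem to analyzing the top-$n$ squared magnitudes of a single column of $U$, compute the expectation of this quantity in closed form, and then combine Lipschitz concentration on the complex sphere with a union bound over the $N$ columns.

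First I would rewrite
\begin{equation*}
\|\widehat{U}^{(n,1)}\|^2 \;=\; \max_{1\le j\le N} T_j,\qquad T_j \;:=\; \max_{|I|=n}\sum_{i\in I}|U_{ij}|^2,
\end{equation*}
since $U(I,\{j\})$ is a single column whose operator norm equals its Euclidean norm. By Haar invariance, each column of $U$ is distributed uniformly on the complex unit sphere of $\C^N$, so the $T_j$ share a common law.

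Next I would compute $\Ex T_j$ exactly. Writing a uniform vector on the complex sphere as $g/\|g\|$ for a standard complex Gaussian $g$, the vector $(|U_{ij}|^2)_{i=1}^N$ has the same distribution as $(X_i/S)_{i=1}^N$, where $X_1,\ldots,X_N$ are i.i.d.\ Exp$(1)$ and $S=\sum_i X_i$. Because the normalized vector has a Dirichlet$(1,\ldots,1)$ distribution, it is independent of $S$, and therefore
\begin{equation*}
\Ex T_j \cdot \Ex S \;=\; \Ex\Bigl[\sum_{k=N-n+1}^N X^{\uparrow}_{(k)}\Bigr].
\end{equation*}
Renyi's representation of exponential order statistics gives $\Ex X^{\uparrow}_{(k)}=H_N-H_{N-k}$, and the classical identity $\sum_{j=1}^{m}H_j=(m+1)H_m-m$ collapses the double sum into $n(1+H_N-H_n)$. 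Dividing by $\Ex S=N$ yields
\begin{equation*}
\Ex T_j \;=\; \frac{n}{N}\bigl(1+H_N-H_n\bigr),
\end{equation*}
which is precisely the centering appearing in \eqref{eq:harmonic}.

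For the probabilistic step I would observe that $\sqrt{T_j}(u)=\max_{|I|=n}\|P_I u\|$, with $P_I$ the coordinate projection onto the block $I$, is a supremum of $1$-Lipschitz maps and hence itself $1$-Lipschitz; consequently $T_j=(\sqrt{T_j})^2$ is $2$-Lipschitz on the unit sphere. Standard concentration on $S^{2N-1}\cong\{u\in\C^N:\|u\|=1\}$ then gives
\begin{equation*}
\p\bigl(|T_j-\Ex T_j|\ge\varepsilon\bigr)\;\le\; 2\exp\bigl(-(2N-1)\varepsilon^2/8\bigr),
\end{equation*}
with both the Lipschitz constant and the centering \emph{uniform in} $n$. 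A union bound over $j=1,\ldots,N$ yields
\begin{equation*}
\p\Bigl(\max_j|T_j-\Ex T_j|\ge\varepsilon\Bigr)\;\le\; 2N\exp\bigl(-(2N-1)\varepsilon^2/8\bigr)\;\longrightarrow\; 0.
\end{equation*}
On the complement of this event, $\max_j T_j\le\Ex T_j+\varepsilon$ is automatic while the lower bound follows from $\max_j T_j\ge T_1\ge\Ex T_1-\varepsilon$. Since every estimate is uniform in $n$, taking $\min_{1\le n\le N}$ preserves the conclusion.

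The main obstacle is the exact identification of $\Ex T_j$: arriving at the precise closed form $\frac{n}{N}(1+H_N-H_n)$ requires matching Renyi's order-statistic representation with the telescoping $\sum H_j$ identity carefully, and the independence of the Dirichlet vector from its total must be invoked to promote the simple $\Ex Y/\Ex S$ formula to an equality. Once this centering is in hand, the Lipschitz property of $\sqrt{T_j}$, spherical concentration and the union bound knit together with no further subtlety, and — crucially — yield estimates uniform in $n$, which is what licenses the $\min_{1\le n\le N}$ in the statement.
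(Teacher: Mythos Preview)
Your proof is correct and follows essentially the same architecture as the paper's: reduce to the top-$n$ sum of squared moduli of a single column, compute its expectation exactly, invoke Lipschitz concentration on $S_\C^{N-1}$, and take a union bound over the $N$ columns. The only noteworthy difference is in the expectation step---you use the exponential/Dirichlet representation together with R\'enyi's formula for exponential order statistics and the identity $\sum_{j=1}^m H_j=(m+1)H_m-m$, whereas the paper identifies $(q_1^\downarrow,\ldots,q_N^\downarrow)$ as uniform on an explicit simplex and reads off $\Ex q_m^\downarrow=\tfrac{1}{N}(H_N-H_{m-1})$ directly from its barycenter---but these are equivalent computations leading to the same closed form.
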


The next theorem provides a complete characterization of the behavior of
$\|\widehat{U}^{(n,1)}\|$ for large random unitary matrices. Its proof relies on
a combination of Theorem \ref{th:submatrices-fixed-size}, which allows to handle
the case of `small' $n$ and Theorem \ref{th:single-column} which provides good
estimates for large values of $n$.
\begin{theorem}\label{th:one-times-n}
Let $U$ be a $N\times N$ random unitary matrix. For all $\varepsilon > 0$,
\begin{equation}\label{eq:uniform-one-times-n}
\p\bigg( \forall_{1 \le n \le N} \; (1-\varepsilon)\sqrt{\frac{n+1}{N}\Big(1 + \ln\Big(\frac{N}{n}\Big)\Big)} \le \|\widehat{U}^{(n,1)}\| \le (1+\varepsilon)\sqrt{\frac{n+1}{N}\Big(1 + \ln\Big(\frac{N}{n}\Big)\Big)}\bigg) \to 1
\end{equation}
as $N \to \infty$.
\end{theorem}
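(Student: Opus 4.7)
The plan is to split the index set $\{1,\ldots,N\}$ at an $N$-independent threshold $n_0=n_0(\varepsilon)$, applying Theorem \ref{th:submatrices-fixed-size} on the small range $1\le n\le n_0$ and Theorem \ref{th:single-column} together with the Haar concentration inequality \eqref{conc_Ank} on the large range $n_0<n\le N$. First, verify that the two theorems' formulas match the target formula of Theorem \ref{th:one-times-n} asymptotically. Using $H_N-H_n=\ln(N/n)+O(1/n)$ together with $(n+1)/n=1+O(1/n)$ gives
\begin{equation*}
\frac{n}{N}(1+H_N-H_n)=\frac{n+1}{N}\bigl(1+\ln(N/n)\bigr)\bigl(1+O(1/n)\bigr),
\end{equation*}
while for each fixed $n$ one has $\ln N=(1+o_N(1))\bigl(1+\ln(N/n)\bigr)$. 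Choosing $n_0$ large enough in terms of $\varepsilon$ ensures that both approximations are tight to within a factor $1\pm\varepsilon/3$ on their respective ranges.

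For the small range $1\le n\le n_0$, apply Theorem \ref{th:submatrices-fixed-size} with $m=1$ for each of the finitely many $n$; a finite union bound gives simultaneous validity of all $n_0$ bounds with probability tending to $1$.

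The delicate part is the large range $n_0<n\le N$, where a uniform bound over $\Theta(N)$ values of $n$ is required. Combine the expectation estimate of Theorem \ref{th:upper} with the Haar concentration \eqref{conc_Ank}: for $t_n=(\varepsilon/3)\sqrt{\bigl((n+1)/N\bigr)\bigl(1+\ln(N/n)\bigr)}$,
\begin{equation*}
\p\Bigl(\bigl|\|\widehat{U}^{(n,1)}\|-\Ex\|\widehat{U}^{(n,1)}\|\bigr|>t_n\Bigr)\le 2\exp(-Nt_n^2/12)\le 2N^{-(\varepsilon/3)^2(n_0+1)/12},
\end{equation*}
where the last step uses $(n+1)(1+\ln(N/n))\ge(n_0+1)\ln N$ for $n\ge n_0$ and $N$ large. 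Summing the failure probabilities over $n_0<n\le N$ gives a total bound of order $N^{1-(\varepsilon/3)^2(n_0+1)/12}$, which tends to $0$ provided $n_0$ is chosen so that $(\varepsilon/3)^2(n_0+1)>12$. To finish, identify $\Ex\|\widehat{U}^{(n,1)}\|$ with the target: the upper direction is \eqref{eq:norm_expectation_estimate} from Theorem \ref{th:upper}, and the matching lower direction is obtained from Theorem \ref{th:single-column} (its in-probability equivalence of $\|\widehat{U}^{(n,1)}\|^2$ with $(n/N)(1+H_N-H_n)$, combined with the concentration above, pins down $(\Ex\|\widehat{U}^{(n,1)}\|)^2$ up to the same tolerance).

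The principal obstacle is this last step. The Gaussian tail $\exp(-cNt^2)$ only just beats the $N$-fold union bound, forcing $n_0$ of size quadratic in $1/\varepsilon$. Matching the expectation to the target formula requires using both Theorems \ref{th:upper} and \ref{th:single-column} in tandem, rather than either alone. Finally, one must be careful at the transition $n\sim n_0$, where the target scale $\sqrt{((n+1)/N)(1+\ln(N/n))}$ attains its smallest value on the large range and the additive slack from concentration is hardest to absorb into a multiplicative error.
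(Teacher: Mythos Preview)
Your overall split at an $N$-independent threshold $n_0(\varepsilon)$ and the use of Theorem \ref{th:submatrices-fixed-size} on the small range are exactly what the paper does. The gap is in the large range $n_0<n\le N$, specifically in the step ``identify $\Ex\|\widehat{U}^{(n,1)}\|$ with the target.''

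First, the upper direction via \eqref{eq:norm_expectation_estimate} does not give the sharp constant. With $m=1$ that bound carries the prefactor $(1-2\ve-\ve^2)^{-1}$ together with the additive term $2(n+1)\ln(1+2/\ve)$ inside the square root; optimizing over the net parameter $\ve$ yields a multiplicative constant strictly larger than $1$ (for instance at $n=N$ the ratio to the target is $\approx (1-2\ve-\ve^2)^{-2}(1+2\ln(1+2/\ve))$, which is bounded away from $1$). So Theorem \ref{th:upper} cannot deliver the upper half of the $(1\pm\varepsilon)$ window uniformly over $n$. Second, the lower direction via Theorem \ref{th:single-column} is too coarse on most of the large range: that theorem gives $\|\widehat{U}^{(n,1)}\|^2=\frac{n}{N}(1+H_N-H_n)\pm\varepsilon$ with an \emph{additive} $\varepsilon$, whereas for $n$ of constant order (e.g.\ $n=n_0+1$) the target $\frac{n}{N}(1+H_N-H_n)$ is of order $(\ln N)/N\to 0$, so the additive error swamps the signal and cannot be converted into a multiplicative $(1\pm\varepsilon)$ statement. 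In short, neither input pins down $\Ex\|\widehat{U}^{(n,1)}\|$ to the required precision on the intermediate range $n_0<n\ll N$.

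The paper sidesteps this by \emph{not} working with $\Ex\|\widehat{U}^{(n,1)}\|$ at all. Instead it treats each column separately: writing $X_{n,i}$ for the maximal $n\times 1$ subvector norm of column $i$, one has the \emph{exact} identity $\Ex X_{n,i}^2=\frac{n}{N}(1+H_N-H_n)$ from \eqref{maxA2}, which (for $n>n_0$ large enough in $\varepsilon$) lies in the $(1\pm\varepsilon/2)^2$ window around the target with no appeal to Theorem \ref{th:upper}. Sphere concentration and $\Ex X_{n,i}\ge \sqrt{\Ex X_{n,i}^2}-O(N^{-1/2})$ then locate $\Ex X_{n,i}$, and a union bound over both $n>n_0$ \emph{and} $i\le N$ (the same Gaussian tail you wrote down, summed over $N^2$ events) handles the maximum $\|\widehat{U}^{(n,1)}\|=\max_i X_{n,i}$. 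The key missing idea in your proposal is this passage to single columns with an exact second-moment formula, which is what makes both the upper and lower bounds sharp simultaneously.
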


Observe, that Theorem \ref{th:one-times-n} provides a complete description of
asymptotic behavior of the whole sequence $\|\hat{U}^{(n,1)}\|$, $n =
1,2,\dots,N-1$, while by setting $m = 1$ in Theorem~\ref{th:upper} one obtains
non-trivial bounds on the norm only for $n \leq N/ \ln N -1$.

Proofs of all the results described in this section are deferred to Appendix \ref{submatr2}.

\section{Asymptotic entropic uncertainty relations} \label{sec:asym-eur}

In this section we assume that $N \gg 1 $ and analyze the asymptotic behavior of
entropic uncertainty relations for random unitary matrices. We consider two
orthogonal von Neumann measurements with respect to two bases related by a
random unitary matrix $U$ distributed according to the Haar measure on the
unitary group. We note that as mentioned in Section \ref{sec:II}, if
$(|a_i\>)_{i=1}^N$, $(|b_i\>)_{i=1}^N$ are two orthonormal bases in ${\cal H}_N$
and for a pure state $|\psi\>$ in ${\cal H}_N$, the vectors $p^\psi =
(p_1^\psi,\ldots,p_N^\psi)$, $q^\psi = (q_1^\psi,\ldots,q_N^\psi)$ are given by
\begin{displaymath}
p_i^{\psi}=|\<a_i|\psi\>|^2, \; q_i^{\psi}=|\<b_i|\psi\>|^2,
\end{displaymath}
then the quantity
\begin{displaymath}
\min_{\psi} \Big(H(p^\psi) + H(q^\psi)\Big)
\end{displaymath}
depends only on the unitary transition matrix $U = (U_{ij})_{i,j=1}^N$, given by
$U_{ij} = \<a_i|b_j\>$. Therefore, when $U$ is a $N\times N$ random unitary
matrix, we can speak about probabilities of the form
\begin{displaymath}
\p\Big(\min_{\psi} \Big(H(p^\psi) + H(q^\psi)\Big) \ge r\Big).
\end{displaymath}
There is clearly a slight abuse of notation in this convention since to define
$p^\psi$ or $q^\psi$ one has to choose the bases $(|a_i\>)_{i=1}^N$,
$(|b_i\>)_{i=1}^N$, but it should not lead to ambiguity. Alternatively, to give
definite meaning to $p^\psi$ and $q^\psi$ one can decide (without loss of
generality) that ${\cal H}_N = \C^N$, $(|a_i\>) _{i=1}^N$ is some fixed basis of
${\cal H}_N$ (e.g. the standard one) and $|b_i\> = U|a_i\>$.

Let us start our study of uniform uncertainty principles in the random setting
by evaluating the typical behavior of deterministic bounds of Section
\ref{sec:II}. We emphasize that this part of our analysis will follow easily
from known bounds on maximal entries of random unitary matrices. The more
challenging part will be to obtain optimal bounds, given in Theorem
\ref{th:asymptotic-strong} below, which will allow us to conclude that in
generic situations bounds of Maassen-Uffink type give only sub-optimal results.

The first proposition evaluates the performance of the Maassen-Uffink entropic
uncertainty relation.

\begin{proposition}\label{th:asymptotic-MU}
Let $U$ be a $N\times N$ random unitary matrix and let $B_{MU} = - \ln c^2(U)$.
Then for any $\varepsilon > 0$,
\begin{equation}
\p(\ln N - \ln \ln N - \ln 2 - \varepsilon \le B_{MU} \le \ln N - \ln \ln N -
\ln 2 + \varepsilon) \to 1
\end{equation}
as $N \to \infty$.
\end{proposition}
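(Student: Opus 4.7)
The plan is to derive Proposition~\ref{th:asymptotic-MU} directly from Jiang's Theorem~\ref{th:jiang}, which already concentrates $c(U)$ around $\sqrt{(2/N)\ln N}$, by applying the continuous monotone transformation $x \mapsto -2\ln x$ and tracking the error terms carefully.

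First I would rewrite $B_{MU} = -\ln c^2(U) = -2\ln c(U)$. Since $x \mapsto -2\ln x$ is decreasing and continuous on $(0,\infty)$, any two-sided control on $c(U)$ immediately yields two-sided control on $B_{MU}$. Concretely, on the high-probability event from Theorem~\ref{th:jiang} (with some auxiliary parameter $\varepsilon'>0$ to be chosen) one has
\begin{equation*}
(1-\varepsilon')\sqrt{\tfrac{2\ln N}{N}}\;\le\; c(U)\;\le\;(1+\varepsilon')\sqrt{\tfrac{2\ln N}{N}},
\end{equation*}
and hence
\begin{equation*}
-2\ln(1+\varepsilon') + \ln\tfrac{N}{2\ln N}\;\le\; B_{MU}\;\le\; -2\ln(1-\varepsilon') + \ln\tfrac{N}{2\ln N}.
\end{equation*}
Noting that $\ln\tfrac{N}{2\ln N} = \ln N - \ln\ln N - \ln 2$, the event above is contained in
\begin{equation*}
\bigl\{\ln N - \ln\ln N - \ln 2 - \delta(\varepsilon')\le B_{MU}\le \ln N - \ln\ln N - \ln 2 + \delta(\varepsilon')\bigr\},
\end{equation*}
where $\delta(\varepsilon') := \max\{-2\ln(1-\varepsilon'),2\ln(1+\varepsilon')\}$.

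Given $\varepsilon>0$, I would then choose $\varepsilon'>0$ small enough so that $\delta(\varepsilon')<\varepsilon$; this is possible because $\delta(\varepsilon')\to 0$ as $\varepsilon'\to 0$ by continuity of $\ln$ at $1$. With this choice the Jiang event is contained in the desired event, so its probability tends to $1$ by Theorem~\ref{th:jiang}, completing the proof.

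There is no real obstacle here, as the statement is essentially a change-of-variables consequence of a previously established concentration result; the only mild subtlety is making sure the asymmetry in $\ln(1\pm\varepsilon')$ is absorbed into a single $\varepsilon$ via the $\delta(\varepsilon')$ above, which is handled by taking the larger of the two deviations.
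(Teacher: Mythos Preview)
Your proof is correct and follows exactly the same route as the paper's own argument: both simply plug Jiang's estimate $c(U)=(1+o(1))\sqrt{(2/N)\ln N}$ into $B_{MU}=-2\ln c(U)$ and read off $\ln N-\ln\ln N-\ln 2+o(1)$. Your version merely makes the $o(1)$ bookkeeping explicit via the auxiliary $\varepsilon'$ and $\delta(\varepsilon')$, which is a fine elaboration of the one-line computation the paper gives.
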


The interpretation of this result in the context of entropic uncertainty
relations is that in sufficiently large dimension $N$ the lower bound obtained
by an application of the Maassen-Uffink inequality to a typical (i.e. related by
a random unitary matrix) pair of orthogonal von Neumann measurements is (with
probability close to one)
\begin{equation}
\min_{\psi} \Big(H(p^\psi) + H(q^\psi)\Big) \geq \ln N - \ln \ln N - \ln 2 - o(1),
\end{equation}
As we will see in Theorem \ref{th:asymptotic-strong}, this bound is off by the
term of the order $\ln\ln N$. This shows that while in the extreme situation
(e.g. when the measurements are related by a Hadamard matrix), the
Maassen-Uffink bound cannot be improved, its typical performance is sub-optimal.

We postpone the proof of the above proposition till Appendix
\ref{sect:simple-maj}. Here we just mention that the argument is an elementary
corollary to Jiang's estimates on $c(U)$, given in Theorem \ref{th:jiang}.

\medskip

In view of the discussion above, one may wonder whether in typical situations it
is possible to obtain a significant gain by employing the Coles and Piani
relation \eqref{ColPia} instead of the Maassen-Uffink bound. It turns out
however that this will not provide a notable improvement, since for large $N$
with high probability we have $c(U) \simeq c_2(U)$ (recall that $c_2(U)$ is the
second largest number among $|U_{ij}|$, $1\le i,j\le N$). This is formalized in
the following proposition.

\begin{proposition}\label{prop:CP-asymptotic}
Let $U$ be a random $N\times N$ unitary matrix and let
\begin{equation}
B_{\textrm{CP}}(U) = -\ln c^2(U)
+
\left( 1 - c(U)\right) \ln\frac{c(U)}{c_2(U)},
\end{equation}
where $c(U)$ and $c_2(U)$ denote respectively the largest and second largest
absolute value of an entry of $U$.
Then for every $\varepsilon > 0$,
\begin{equation}\label{eqn:CP-asymptotic}
\p( B_{\textrm{CP}} \leq \ln N - \ln \ln N - \frac12 \ln 2 + \varepsilon) \to 1
\end{equation}
as $N \to \infty$.
\end{proposition}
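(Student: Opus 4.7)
The starting point is the algebraic identity
\[
B_{\mathrm{CP}}(U) \;=\; -\ln\!\big(c(U)\,c_2(U)\big) \;-\; c(U)\ln\!\big(c(U)/c_2(U)\big),
\]
obtained by expanding $(1-c(U))\ln(c(U)/c_2(U))$ and collecting terms. Since $0\le c(U)\le 1$ and $c_2(U)\le c(U)$, the second summand is nonnegative and may be discarded, giving
\[
B_{\mathrm{CP}}(U) \;\le\; -\ln\!\big(c(U)\,c_2(U)\big).
\]
It therefore suffices to show that $c(U)\,c_2(U) \ge (1-o(1))\sqrt{2}\,(\ln N)/N$ with probability tending to $1$, which reduces the task to separate lower bounds on $c(U)$ and on $c_2(U)$.

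The lower bound on $c(U)$ is immediate from Theorem~\ref{th:jiang}: for every $\varepsilon>0$, $c(U)\ge (1-\varepsilon)\sqrt{2(\ln N)/N}$ with probability tending to $1$. The real work is to produce the matching bound $c_2(U)\ge (1-\varepsilon)\sqrt{(\ln N)/N}$.

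To obtain this, I would look at the row maxima $M_i := \max_{1\le j\le N}|U_{ij}|$ for $i=1,2$. Since $M_1$ and $M_2$ are moduli of entries lying in distinct rows, they correspond to two different entries of $U$, so $c_2(U)\ge \min(M_1,M_2)$. Each row of a Haar-distributed $U(N)$ is uniform on the complex sphere $S^{2N-1}$, and its squared-modulus vector has the distribution of $(E_1,\dots,E_N)/\sum_k E_k$ for i.i.d.\ $\mathrm{Exp}(1)$ variables $E_k$. The law of large numbers $\sum_k E_k = N(1+o_{\mathbb{P}}(1))$ together with the elementary extreme-value estimate
\[
\mathbb{P}\!\Big(\max_{k\le N} E_k < (1-\varepsilon)\ln N\Big) \;\le\; \big(1-N^{-(1-\varepsilon)}\big)^N \;\longrightarrow\; 0
\]
yields $\mathbb{P}\!\big(M_i^2 \ge (1-\varepsilon)(\ln N)/N\big)\to 1$ for each $i\in\{1,2\}$; a union bound then gives $c_2(U)\ge (1-\varepsilon)\sqrt{(\ln N)/N}$ with probability tending to $1$.

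Multiplying the two lower bounds gives $c(U)\,c_2(U) \ge (1-\varepsilon)^2\sqrt{2}\,(\ln N)/N$ with probability tending to $1$, so that
\[
B_{\mathrm{CP}}(U) \;\le\; -\ln\!\big(c(U)\,c_2(U)\big) \;\le\; \ln N - \ln\ln N - \tfrac{1}{2}\ln 2 - 2\ln(1-\varepsilon).
\]
Given any prescribed $\varepsilon'>0$, it is enough to choose $\varepsilon$ small enough that $-2\ln(1-\varepsilon)<\varepsilon'$. The only nontrivial probabilistic input beyond Jiang's theorem is the lower bound on $c_2(U)$; the key quantitative point, and arguably the main obstacle, is that an individual row-max concentrates at $\sqrt{(\ln N)/N}$, which is a factor $\sqrt 2$ below the global max appearing in Theorem~\ref{th:jiang}. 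This $\sqrt 2$ gap is precisely what produces the constant $\tfrac{1}{2}\ln 2$ (rather than $\ln 2$) in the statement.
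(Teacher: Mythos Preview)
Your argument is correct, with one verbal slip: the term $-c(U)\ln\!\big(c(U)/c_2(U)\big)$ is \emph{nonpositive}, not nonnegative, since $c(U)\ge c_2(U)$; this is exactly what is needed to drop it and obtain the upper bound $B_{\mathrm{CP}}(U)\le -\ln\!\big(c(U)c_2(U)\big)$, so the inequality you write is right even though the sign word is wrong.

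The route differs from the paper's in how the lower bound on $c_2(U)$ is obtained. The paper invokes Theorem~\ref{th:submatrices-fixed-size} with $n=1,m=2$ to get $\|\widehat{U}^{(1,2)}\|^2\ge (1-\varepsilon)\,3(\ln N)/N$, then subtracts Jiang's upper bound $c(U)^2\le (1+\varepsilon)\,2(\ln N)/N$ using the elementary inequality $c(U)^2+c_2(U)^2\ge \|\widehat{U}^{(1,2)}\|^2$ to reach $c_2(U)^2\gtrsim (\ln N)/N$. Your argument bypasses Theorem~\ref{th:submatrices-fixed-size} entirely: you use only that individual rows are uniform on the complex sphere, represent the squared moduli as normalised exponentials, and read off the row-max behaviour from a one-line extreme-value estimate. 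This is more elementary and fully self-contained; the paper's approach has the advantage of fitting into the submatrix-norm framework developed for the rest of the article, but for this proposition alone your method is shorter. Both arrive at exactly the same quantitative conclusion $c_2(U)\ge (1-\varepsilon)\sqrt{(\ln N)/N}$, and from there the arithmetic is identical.
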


As one can see from the above proposition, in typical situations the gain
obtained from the Coles and Piani relation with respect to the Massen-Uffink
bound is just $2^{-1}\ln 2$. Proposition \ref{prop:CP-asymptotic} will be proved
in Appendix \ref{sect:simple-maj}. Let us remark that it will be again a
relatively simple corollary to the estimates on $c(U)$ and $\|\hat{U}^{(1,2)}\|$
given in Theorems \ref{th:jiang} and \ref{th:submatrices-fixed-size}.

\medskip

Let us now turn to the majorization entropic uncertainty relation discussed in
Section \ref{sec:II}. As shown
in~\cite{puchala2013majorization} in many cases it provides a tighter bound than
the Maassen-Uffink relation (\ref{MU}), however it turns out that this is not
the case for typical measurements in high dimension as we have the following
proposition.

\begin{proposition}\label{prop:majorization-weak}
Assume that $N \ge 4$ and let $U$ be any $N\times N$ unitary matrix and let
\begin{equation}\label{eqn:def-Q}
Q = \left(R_1,R_2-R_1,R_3-R_2, \dots, R_N-R_{N-1} \right),
\end{equation}
where the coefficients $R_i$ are given by formula \eqref{eq:definition_of_R}.
Then
\begin{equation}\label{eq:majo-weak}
H(Q) \leq \frac{3}{4} \ln(N-1) + H\left(\frac{1}{4},\frac{3}{4}\right) = \frac{3}{4}\ln(N-1) + \frac{1}{4}\ln 4 + \frac{3}{4}\ln\frac{4}{3}.
\end{equation}
\end{proposition}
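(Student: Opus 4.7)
My plan is to combine three ingredients: the grouping property of Shannon entropy, the crude bound $H \le \ln(\text{support})$, and the elementary monotonicity of a one-variable function on $[1/4,1]$.

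First, I observe that $Q$ is a genuine probability vector on $N$ atoms. Indeed, the sequence $R_1,\dots,R_N$ is nondecreasing, so the entries $R_i - R_{i-1}$ are nonnegative, and $R_N = 1$ (since $s_N \ge \|\widehat U^{(1,N)}\| = 1$, because any single row of a unitary matrix has Euclidean norm $1$) together with the telescoping sum forces the entries to sum to one. The first coordinate of $Q$ is $R_1 = \left(\frac{1+c(U)}{2}\right)^2$, where $c(U) = \max_{i,j}|U_{ij}|$. Since each column of $U$ has unit Euclidean norm, $c(U) \ge 1/\sqrt N > 0$, which gives $R_1 > 1/4$. Only the bound $R_1 \ge 1/4$ will be used below.

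Next, I apply the grouping identity for Shannon entropy, separating the first atom from the remaining $N-1$ atoms:
\begin{equation*}
H(Q) \;=\; H(R_1,\,1-R_1) + (1-R_1)\, H(\widetilde Q),
\end{equation*}
where $\widetilde Q$ is the probability vector on $N-1$ atoms obtained by normalizing $(R_2-R_1,\dots,R_N-R_{N-1})$ by $1-R_1$. The maximum entropy bound gives $H(\widetilde Q)\le \ln(N-1)$, hence
\begin{equation*}
H(Q)\;\le\; f(R_1),\qquad f(p) := H(p,1-p)+(1-p)\ln(N-1).
\end{equation*}

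The remaining task is to show $f(p)\le f(1/4)$ whenever $p\in[1/4,1]$ and $N\ge 4$. Differentiating gives $f'(p) = \ln\frac{1-p}{p} - \ln(N-1)$, which vanishes precisely at $p = 1/N$ and is negative for $p>1/N$. The assumption $N\ge 4$ ensures $1/4 \ge 1/N$, so $f$ is nonincreasing on $[1/4,1]$, and therefore $f(R_1)\le f(1/4) = H(1/4,3/4) + \frac34\ln(N-1)$. Rewriting $H(1/4,3/4) = \frac14\ln 4 + \frac34\ln\frac43$ yields the stated inequality.

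The argument is essentially obstruction-free; the only point worth checking carefully is the lower bound $R_1\ge 1/4$, which pins down both the endpoint of optimization and the role of the hypothesis $N\ge 4$ (if $N<4$ the critical point $1/N$ would lie in $(1/4,1]$ and the maximum of $f$ would shift away from $p=1/4$).
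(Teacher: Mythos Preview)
Your proof is correct and follows essentially the same route as the paper: both arguments reduce to the bound $H(Q)\le H(R_1,1-R_1)+(1-R_1)\ln(N-1)$ (you via the grouping identity plus the maximum-entropy bound, the paper via an equivalent Jensen step) and then use monotonicity of $p\mapsto H(p,1-p)+(1-p)\ln(N-1)$ on $[1/N,1]$ together with $R_1\ge 1/4$. The only cosmetic omission is the degenerate case $R_1=1$ (e.g.\ $U$ a permutation matrix), where $\widetilde Q$ is undefined but $H(Q)=0$ trivially; the paper disposes of this in one line.
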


Let us note that in the above proposition the matrix $U$ is not random, it can
be any $N\times N$ unitary matrix. Together with examples presented in
\cite{puchala2013majorization}, the inequality \eqref{eq:majo-weak} shows that
entropy estimates based on tensor product majorization of Theorem
\ref{th:main-theorem_P2013} do not perform well in typical or extremal
situations, even though they can still outperform the classical Maassen-Uffink
bound when the entropy is small. This is intuitively clear, since the
probability distribution $Q$ has an atom $R_1$ of size at least $\frac{1}{4}$.
The proof Proposition of \ref{prop:majorization-weak} is presented in
Appendix~\ref{sect:simple-maj}.

\bigskip
We will now pass to the first main result of the article, i.e. to optimal (up to
a universal additive constant) entropic uncertainty principles for typical
measurements, which hold with high probability on the unitary group. We
emphasize that the method of proof will rely heavily on strong (direct sum)
majorization of Theorem \ref{th:strong}, more specifically on the bound
\eqref{eq:partial-sum-estimate}, combined with the results of Section
\ref{sec:norms}. Our result shows in particular that strong majorization
techniques of \cite{rudnicki2014strong} perform in typical high-dimensional
scenarios in an almost optimal way. We refer the reader to
\cite{rudnicki2014strong} for a comparison of the inequality of Theorem
\ref{th:strong} with the result~\eqref{eqn:prod-bound} and with the
Maassen-Uffink bound in deterministic, low-dimensional situations and here we
just mention that Theorem \ref{th:strong} is stronger than~\eqref{eqn:prod-bound} and in general incomparable with the relation of Maassen
and Uffink (one can construct examples in which any of the bounds outperforms
the other one).

The following theorem,
provides optimal uncertainty relations for generic measurements.

\begin{theorem}\label{th:asymptotic-strong}
Let $U$ be a $N\times N$ random unitary matrix and let $\constantCa =  3.49$.
Then
\begin{equation}
\p \Big(\min_{\psi} (H(p^\psi) + H(q^\psi) )\geq \ln N - \constantCa \Big) \to 1
\end{equation}
as $N \to \infty$.
\end{theorem}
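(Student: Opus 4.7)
The plan is to combine the strong-majorization bound of Theorem~\ref{th:strong} with the Schur concavity of the Shannon entropy (extended to non-negative vectors of matching total mass) in order to reduce the problem to a statement about the scalars $s_k$ defined in~\eqref{eqn:def-sk}. Since every row and column of $U$ has Euclidean norm one, $s_N=1$ and both sides of the majorization $p^\psi\oplus q^\psi\prec(1,s_1,s_2-s_1,\ldots,s_N-s_{N-1})$ carry total mass $2$; because $-1\ln 1=0$, Schur concavity then yields $H(p^\psi)+H(q^\psi)\ge -\sum_{k=1}^N (s_k-s_{k-1})\ln(s_k-s_{k-1}) =: \mathcal{H}(\mathbf{s})$. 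This functional depends on $U$ only through the $N$ scalars $s_1,\ldots,s_N$, so what remains is to combine concentration on $U(N)$ with a real-variable estimate.

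I would next use the submatrix-norm results of Section~\ref{sec:norms} to pin down each $s_k$ with high probability. A lower bound is free from Theorem~\ref{th:one-times-n}, since $s_k\ge\|\widehat{U}^{(k,1)}\|\ge(1-\varepsilon)\sqrt{((k+1)/N)(1+\ln(N/k))}$ uniformly in $k$. A matching upper bound follows by applying the concentration inequality~\eqref{conc_Ank} together with the expectation bound~\eqref{eq:norm_expectation_estimate} of Theorem~\ref{th:upper}, and then taking a union bound over the $O(N^2)$ admissible shapes $(n,m)$ with $n+m\le N+1$; since the exponent in~\eqref{conc_Ank} is of order $Nt^2$, the $O(\log N)$ loss from the union is negligible on the scale of the main term. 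This gives an event of probability $1-o(1)$ on which the whole sequence $(s_k)$ is pinched, uniformly in $k$, between $(1\pm o(1))\bar s_k$, where $\bar s_k:=\sqrt{((k+1)/N)(1+\ln(N/k))}$.

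On this good event the proof reduces to a deterministic estimate on $\mathcal{H}(\mathbf{s})$. Rescaling $k=Nu$ and introducing the continuum profile $\sigma(u):=\sqrt{u(1-\ln u)}$ (so that $\sigma(0)=0$, $\sigma(1)=1$, and $\bar s_k=\sigma(k/N)$ exactly), the sum $-\sum(s_k-s_{k-1})\ln(s_k-s_{k-1})$ is a Riemann approximation of $\int_0^1(-\sigma'(u))(\ln\sigma'(u)-\ln N)\,du = \ln N - \int_0^1\sigma'(u)\ln\sigma'(u)\,du$. The remainder integral is a finite, explicit number (the apparent singularity of $\sigma'$ at $u=0$ is tamed because $\sigma'$ is integrable with $\int_0^1\sigma'=\sigma(1)-\sigma(0)=1$), and an explicit estimate yields $\int_0^1\sigma'(u)\ln\sigma'(u)\,du\le\constantCa$, giving $\mathcal{H}(\mathbf{s})\ge\ln N-\constantCa$.

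The principal obstacle is this last deterministic step: the Lipschitz constant of the entropy as a function of the probability vector $(a_k)$ with $a_k=s_k-s_{k-1}$ is of order $\ln N$ (because $-x\ln x$ is steep near zero), so both the Riemann-sum approximation and the replacement of $s_k$ by $\bar s_k$ must be carried out carefully, splitting the sum into a ``small-$k$'' range (where Theorem~\ref{th:submatrices-fixed-size} provides sharper two-sided control) and an ``intermediate-$k$'' range, and tracking approximation errors region by region. As stressed in the introduction, the key enabling observation is that viewing $\mathcal{H}$ as a function of $\mathbf{s}$ rather than of $\mathbf{a}$ dramatically lowers the effective Lipschitz constant, so the $O(\sqrt{\log N/N})$-accuracy concentration available for the $s_k$ is sufficient to nail down the entropy bound up to the additive constant $\constantCa$.
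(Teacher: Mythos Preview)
Your overall plan---reduce via Theorem~\ref{th:strong} to the scalars $s_k$, then control these by concentration for submatrix norms---matches the paper's. The gap is in the last step, where you propose to estimate $\mathcal{H}(\mathbf{s})=-\sum_k(s_k-s_{k-1})\ln(s_k-s_{k-1})$ by pinching each $s_k$ to $(1\pm o(1))\bar s_k$ and passing to the Riemann integral for $\sigma$. This does not go through, for two reasons. First, the tools of Section~\ref{sec:norms} do \emph{not} give a uniform $(1+o(1))$ upper bound on $s_k$: the expectation bound~\eqref{eq:norm_expectation_estimate} carries, after optimising in $\varepsilon$, a fixed multiplicative constant (this is exactly the $\sqrt{\constantCf}$ of Proposition~\ref{prop:uniform-bound-for-sk}), and Theorem~\ref{th:submatrices-fixed-size} is only stated for \emph{fixed} $n,m$. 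Second, and more seriously, even a genuine pinching $s_k=(1\pm o(1))\bar s_k$ would not control the increments $s_k-s_{k-1}$: for $k$ of order $N$ these increments are $O(1/N)$, while the additive fluctuation of $s_k$ allowed by~\eqref{conc_Ank} after a union bound over $k$ is of order $\sqrt{\ln N/N}\gg 1/N$. So the random increments can differ wildly from $\bar s_k-\bar s_{k-1}$ and your Riemann-sum replacement is unjustified. Your closing remark that ``viewing $\mathcal{H}$ as a function of $\mathbf{s}$ rather than of $\mathbf{a}$ dramatically lowers the effective Lipschitz constant'' is not correct either: $\mathbf{a}$ is a bounded linear image of $\mathbf{s}$, so the Lipschitz constants agree up to a factor~$2$.

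The paper sidesteps this entirely: it never tries to evaluate $\mathcal{H}(\mathbf{s})$ for the random $\mathbf{s}$. Instead it uses \emph{only} the upper bound $s_k\le\tilde s_k:=\sqrt{\constantCf\,(k+1)N^{-1}(1+\ln(2N/(k+1)))}$ from Proposition~\ref{prop:uniform-bound-for-sk}, notes that the map $x\mapsto\sqrt{x(1-\ln x)}$ is concave, and bounds the \emph{deterministic} increments $\tilde s_k-\tilde s_{k-1}$ above by the derivative values $r_k$. One then has directly $\sum_{i\le k}z^\downarrow_i\le 1+s_{k-1}\le 1+\tilde s_{k-1}\le 1+r_1+\cdots+r_{k-1}$, hence $z\prec 1\oplus r$ and $H(p^\psi)+H(q^\psi)\ge H(r)$. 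Since $r$ is an explicit deterministic vector, $H(r)$ is computed as a genuine Riemann sum approximating an explicit integral (now carrying the constant $\constantCf$), which yields the value $\constantCa$. The lower bound on $s_k$ you derive from Theorem~\ref{th:one-times-n} is correct but plays no role.
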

Recall the definition of the parameters $s_k$ given in \eqref{eqn:def-sk}. The
proof of Theorem \ref{th:asymptotic-strong} will be based on the following
proposition proved in Appendix~\ref{sec:uniform-bound-for-sk},

\begin{proposition}\label{prop:uniform-bound-for-sk}
With probability tending to 1 as $N \to \infty$,
\begin{align}\label{eq:basic}
s_k \le \sqrt{\constantCf\frac{k+1}{N} \left(1+\ln\left(\frac{2 N}{k+1}\right)
\right)}
\quad \textrm{for}\; 1\leq k\leq N,
\end{align}
where $\constantCf = 4.18$.
\end{proposition}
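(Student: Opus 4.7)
The plan is to combine Theorems \ref{th:upper} and \ref{th:one-times-n} with a case split based on the shape of the submatrix and the size of $k$. Throughout, set $A_k := 1 + \ln(2N/(k+1))$, so that the target reads $s_k \le \sqrt{\constantCf(k+1)A_k/N}$.

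The first step is to reduce the expectation estimate of Theorem \ref{th:upper} to a function of $k$ alone. Since $\phi(x) = x\ln(eN/x)$ is concave on $(0,N]$, Jensen's inequality applied to $\tfrac{1}{2}\phi(n) + \tfrac{1}{2}\phi(m)$ yields, for $n + m = k+1$ with $n, m \ge 1$,
$$n\ln(eN/n) + m\ln(eN/m) \le (k+1)\bigl(1 + \ln\tfrac{2N}{k+1}\bigr) = (k+1)A_k.$$
Plugging this into \eqref{eq:norm_expectation_estimate} gives, for any fixed $\varepsilon \in (0, 1/3)$,
$$\mathbb{E}\|\widehat{U}^{(n,m)}\| \le c_\varepsilon\sqrt{\tfrac{2(k+1)(A_k + B_\varepsilon)}{2N-1}},$$
with $c_\varepsilon := (1 - 2\varepsilon - \varepsilon^2)^{-1}$ and $B_\varepsilon := 2\ln(1 + 2/\varepsilon)$.

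Next, split the analysis into three regimes. For \emph{boundary submatrices} with $\min(n,m) = 1$, Theorem \ref{th:one-times-n} already yields the claim uniformly in $1 \le n \le N$: since $1 + \ln(N/n) \le A_n$ (because $(n+1)/(2n) \le 1$) and $\constantCf > (1+\eta)^2$ for small $\eta$, with high probability $\|\widehat{U}^{(n,1)}\| \le \sqrt{\constantCf(n+1)A_n/N}$, and $\|\widehat{U}^{(1,m)}\|$ satisfies the same bound by the transposition symmetry of the Haar measure. For \emph{interior submatrices} ($n, m \ge 2$) with $k+1 \le \alpha N$, where $\alpha$ is a small universal constant, we invoke the concentration inequality \eqref{conc_Ank} with the gap
$$t_{n,m} := \sqrt{\tfrac{\constantCf(k+1)A_k}{N}} - c_\varepsilon\sqrt{\tfrac{2(k+1)(A_k + B_\varepsilon)}{2N-1}}.$$
In this range $A_k \ge 1 + \ln(2/\alpha)$, and for a suitable $\varepsilon$ together with $\alpha$ small enough the condition $A_k(\constantCf - c_\varepsilon^2) > c_\varepsilon^2 B_\varepsilon$ holds with a margin, forcing $t_{n,m} \ge \gamma\sqrt{(k+1)A_k/N}$ for some universal $\gamma > 0$. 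A union bound over the at most $k$ eligible pairs $(n,m)$ followed by a sum over $k$ then yields
$$\sum_{k} 2k\exp\bigl(-\gamma^2(k+1)A_k/12\bigr) \longrightarrow 0,$$
since $(k+1)A_k$ grows at least as $(k+1)\ln N$ for bounded $k$ and as $N$ for $k$ comparable to $\alpha N$. Finally, for $k + 1 > \alpha N$ the target $\sqrt{\constantCf(k+1)A_k/N}$ exceeds $1$ (a numerical check depending on $\alpha$), while $\|\widehat{U}^{(n,m)}\| \le 1$ holds deterministically, so the bound is trivial in this regime.

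The principal difficulty is that the constant $\constantCf = 4.18$ is numerically tight. The parameter $\varepsilon$ in \eqref{eq:norm_expectation_estimate} must be tuned so that the interior-regime condition $A_k(\constantCf - c_\varepsilon^2) > c_\varepsilon^2 B_\varepsilon$ holds with a positive margin all the way up to $k+1 \approx \alpha N$, while simultaneously $\constantCf \alpha(1 + \ln(2/\alpha)) > 1$ so that the trivial regime can cover the remaining $k$. Numerical optimization (e.g.\ $\varepsilon \approx 0.09$ and $\alpha \approx 1/16$) shows that such a choice exists, but the verification is a quantitative arithmetic check rather than a soft asymptotic argument.
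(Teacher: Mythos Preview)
Your proof is correct, but it takes a more elaborate route than the paper's. The paper does not split into boundary versus interior submatrices and does not invoke Theorem~\ref{th:one-times-n} at all: the expectation bound \eqref{eq:norm_expectation_estimate} from Theorem~\ref{th:upper}, combined with the same Jensen/concavity step you use, already covers every pair $(n,m)$ with $n+m=k+1$, including $\min(n,m)=1$. The paper then couples the threshold for the trivial regime directly to the target constant: for $k+1>N/D$ the right-hand side of \eqref{eq:basic} exceeds $1$ and the bound holds deterministically, while for $k+1\le N/D$ one automatically has $A_k\ge\ln(2eD)$, which is precisely the lower bound needed to absorb the additive $2\ln(1+2/\varepsilon)$ term. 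This reduces the problem to a one-parameter optimization in $\varepsilon$ together with the self-consistency check
\[
\frac{1}{(1-2\varepsilon-\varepsilon^2)^2}\Bigl(1+\frac{2\ln(1+2/\varepsilon)}{\ln(2eD)}\Bigr)\le D,
\]
satisfied at $D=4.175$ with $\varepsilon\approx0.039$; concentration then allows passage to $\constantCf=4.18$. Your decoupling of the threshold $\alpha$ from the constant and the separate handling of $\min(n,m)=1$ via Theorem~\ref{th:one-times-n} both work, but neither buys anything here: they introduce an extra parameter to tune and an extra theorem to cite, whereas the paper's argument is shorter and structurally cleaner.
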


Now we are in position to prove the Theorem~\ref{th:asymptotic-strong},
\begin{proof}[Proof of Theorem \ref{th:asymptotic-strong}]
Let as fix a unitary vector $\ket{\psi}$ and let $p:=p^\psi$, $q:=q^\psi$.
Recall that $\constantCf = 4.18$. We define the sequence $m_i$ as
\begin{equation}
m_1 = \sqrt{\constantCf\frac{2}{N}( 1 + \ln(N))},
\end{equation}
and for $2 \leq i \leq 2N -1$, 
\begin{equation}
m_i = \sqrt{\constantCf\frac{i+1}{N}\left( 1 + \ln\left(\frac{2 N }{i+1}\right)
\right)}
-
\sqrt{\constantCf\frac{i}{N} \left( 1 + \ln \left( \frac{ 2 N }{ i } \right)
\right)} > 0,
\end{equation}
which 
we can rewrite as
\begin{equation}
m_i = \sqrt{ 2\constantCf } \left( f \left(\frac{i+1}{2N} \right) - f
\left(\frac{i}{2N}
\right) \right),
\end{equation}
where $f \colon (0,e) \to \R$ is given by $f(x) = \sqrt{x\left( 1 - \ln x
\right)}$. The function $f$ is concave, which can be verified by simple
calculations, i.e.
\begin{equation}
\frac{d}{dx} f(x) = \frac{- \ln x}{2 \sqrt{x\left( 1 - \ln x \right)}}
\end{equation}
and
\begin{equation}
\frac{d^2}{dx^2} f(x) =
\frac{
- (1 - \ln x)^2 -1
}{
4 (x (1 - \ln x ))^{3/2}
} < 0.
\end{equation}
From concavity we obtain that for $2 \le i \le 2N - 1$,
\begin{equation}
m_i \le \sqrt{2\constantCf} \frac{1}{2N} \frac{d}{dx}f \left(\frac{i}{2N}
\right)
=\frac{1}{N} \frac{ \sqrt{\constantCf} \ln ( \frac{2N}{i} ) }{ 2\sqrt{
\frac{i}{N} (\ln(\frac{2N}{i} ) + 1 ) } }.
\end{equation}
Note that
\begin{equation}
m_1 + \sum_{i=2}^{N}
\frac{1}{N}\frac{\sqrt{\constantCf}\ln(\frac{2N}{i})}{2\sqrt{\frac{i}{N}(\ln(\frac{2N}{i})+1)}}
\ge
\sum_{i=1}^N m_i
=
\sqrt{\constantCf\frac{N+1}{N} \left(1 + \ln\left( \frac{2 N}{N+1}\right)
\right)} > 1.
\end{equation}
Let $N_0$ be the greatest integer not exceeding $N$, such that
\begin{equation}
 m_1 + \sum_{i=2}^{N_0}
 \frac{1}{N}\frac{\sqrt{\constantCf}\ln(\frac{2N}{i})}{2\sqrt{\frac{i}{N}(\ln(\frac{2N}{i})+1)}}
  \le 1
\end{equation}
and define a vector $r = (r_1,\ldots,r_{N_0+1}) \in \R^{N_0+1}$ by specifying
its coordinates as follows. Set $r_1 = m_1$ and
\begin{equation}
r_i =
\frac{1}{N}\frac{\sqrt{\constantCf}\ln(\frac{2N}{i})}{2\sqrt{\frac{i}{N}(\ln(\frac{2N}{i})+1)}}
\end{equation}
for $i = 2,\ldots,N_0$. As the last coordinate set $r_{N_0+1} = 1 - \sum_{i=1}^{N_0}
r_i$, so $r$ is a probability vector. 
Note that
\begin{align}\label{eq:last_r}
r_{N_0+1} \le
\frac{1}{N}\frac{\sqrt{\constantCf}\ln(\frac{2N}{N_0+1})}{2\sqrt{\frac{N_0+1}{N}(\ln(\frac{2N}{N_0+1})+1)}}
\le \sqrt{\frac{\constantCf\ln N}{N}}.
\end{align}
Let $z$ be the non-increasing rearrangement of $p\oplus q$. For $k \le N_0+1$,
Theorem \ref{th:strong} and Proposition \ref{prop:uniform-bound-for-sk} give
\begin{equation}
\begin{split}
z_1 + \ldots + z_k &\le 1 + s_{k-1} \le 1 +
\sqrt{\constantCf\frac{k}{N}\Big( \ln\Big(\frac{2N}{k}\Big)+1\Big)} \\
&= 1 + m_1+\ldots+m_{k-1}
\le 1 +r_1 +\ldots+r_{k-1}.
\end{split}
\end{equation}
Obviously we also have $z_1+\ldots+ z_{k} \le 2 = 1 +r_1+\ldots+r_{N_0+1}$ for
$k > N_0+1$, and so $z \prec 1\oplus r$.
As a consequence,
\begin{equation}
H(p) + H(q) \ge H(r).
\end{equation}
We will now bound from below the entropy of the vector $r$. We have
\begin{equation}
\begin{split}
H(r) &\ge - \sum_{i=2}^{N_0} r_i\ln r_i
=
- \sum_{i=2}^{N_0}  r_i \ln
\left(\frac{1}{N}\frac{\sqrt{\constantCf}\ln(\frac{2N}{i})}{2\sqrt{\frac{i}{N}(\ln(\frac{2N}{i})+1)}}
\right)\\
&=
\sum_{i=2}^{N_0} r_i \ln N - \sum_{i=2}^{N_0}
\frac{1}{N}\frac{\sqrt{\constantCf}\ln(\frac{2N}{i})}{2\sqrt{\frac{i}{N}(\ln(\frac{2N}{i})+1)}}\ln\left(\frac{\sqrt{\constantCf}\ln(\frac{2N}{i})}{2\sqrt{\frac{i}{N}(\ln(\frac{2N}{i})+1)}}\right)\\
& = \ln N - (m_1 + r_{N_0+1})\ln N - A_N = \ln N -
O\Big(\frac{\ln^{3/2}N}{\sqrt{N}}\Big) - A_N,
\end{split}
\end{equation}
where
\begin{equation}
A_N = \sum_{i=2}^{N_0}
\frac{1}{N}\frac{\sqrt{\constantCf}\ln(\frac{2N}{i})}{2\sqrt{\frac{i}{N}(\ln(\frac{2N}{i})+1)}}
\ln
\left(\frac{\sqrt{\constantCf}\ln(\frac{2N}{i})}{2\sqrt{\frac{i}{N}(\ln(\frac{2N}{i})+1)}}
 \right).
\end{equation}
Above we used \eqref{eq:last_r} and the estimate $m_1 = O(\sqrt{\frac{\ln N}{N}})$.

Let us now bound $N_0/N$ from above. Since
\begin{equation}
m_1 + \sum_{i=2}^{k}
\frac{1}{N}\frac{\sqrt{\constantCf}\ln(\frac{2N}{i})}{2\sqrt{\frac{i}{N}(\ln(\frac{2N}{i})+1)}}
 \ge \sum_{i=1}^k m_i
=
\sqrt{\constantCf\frac{k+1}{N}\left( 1 + \ln\Big(\frac{2 N}{k+1}\Big) \right)},
\end{equation}
we have
$N_0 \le N_1$, where $N_1$ is the largest integer smaller than $N$, such that
\begin{equation}
\sqrt{\constantCf\frac{N_1+1}{N}  \left( 1 + \ln\Big(\frac{2 N}{N_1+1}\Big)
\right)} \le 1.
\end{equation}
We have $N_1/N \to x^\ast$, where $x^\ast$ is the unique solution of
\begin{equation}
\constantCf x^\ast (1 + \ln(2/x^\ast)) = 1.
\end{equation}
Since $\constantCf = 4.18$ we can evaluate numerically that $x^\ast \simeq
0.051$ and so we can write
\begin{equation}
\limsup {A_N} <
\int_0^{0.052}
\frac{\sqrt{\constantCf}\ln\left(\frac{2}{x}\right)}{2
\sqrt{x\ln\left(\frac{2e}{x}\right)}}
\ln\left(\frac{\sqrt{\constantCf}\ln\left(\frac{2}{x}\right)}{2\sqrt{x
\ln\left(\frac{2e}{x}\right)}}\right) dx
\simeq 3.488,
\end{equation}
which ends the proof (note that the integrand above is positive on the interval of integration).
\end{proof}

The state independent lower bound $\ln N - \constantCa$ on the sum of entropies
is
clearly stronger than all the bounds derived from known entropic uncertainty
relations that we have analyzed above. Also it differs from the best possible
one by at most $\constantCa$, since by choosing $\psi$ to be a member of one of
the
bases related to measurements we can enforce the equality $H(p^\psi) = 0$,
whereas trivially $H(q^\psi) \le \ln N$. In fact by taking the randomness into
account one can show that the gap between the result of Theorem
\ref{th:asymptotic-strong} and the optimal one is even smaller, since for random
$U$ and fixed $\psi$, the quantity $H(q^\psi)$ can be interpreted as the entropy
of a random state. An estimation for the mean entropy of a random state  follows
from the work \cite{jones1990entropy} by Jones.  Let $\ket{\psi}$ and
$\ket{\phi}$ be $N$-dimensional normalized vectors in $\C^N$ and  $ \mathrm{d}
\Omega_{\phi} $ be the unique, normalized unitary invariant measure
$d\Omega_{\phi}$ upon the set of pure quantum states. Jones analyzed the mean
value of the following entropy
\begin{equation}
H(1,1) = - N \int |\scalar{\psi}{\phi}|^2 \ln(|\scalar{\psi}{\phi}|^2)
\mathrm{d} \Omega_{\phi},
\end{equation}
and derived its asymptotic behavior
\begin{equation}
H(1,1) = \Psi(N + 1) - \Psi(2) \sim_{N \to \infty} \ln(N) - \Psi(2) + o(1).
\end{equation}

Here $\Psi(z)=\frac{\Gamma'(z)}{\Gamma(z)}$ denotes the digamma function, and
$\Psi(2) = 1 - \gamma \simeq 0.42$, where $\Gamma(z) = \int_0^\infty
x^{z-1}e^{-x}dx$ is the Gamma function and $\gamma$ is the Euler constant. Note,
that $H(1,1)$ is also the mean value of the entropy of a probability vector $p_i
= |\bra{\psi}U\ket{i}|^2,\  i=1,2\dots,N$ describing von~Neumann measurement of
a fixed pure state $\ket{\psi}$ with respect to a basis related to a random
unitary matrix $U$ or equivalently entropy of a pure random state with respect
to a fixed basis. Since it is known that the Shannon's entropy of a pure random
state concentrates strongly around the expectation (see Appendix B.2. of
\cite{Hayden2004randomizing}), by combining the above result with
Theorem~\ref{th:asymptotic-strong} we arrive at a sandwich relation described by
the following theorem.

\begin{theorem}
Let $U$ be a $N\times N$ random unitary matrix.  Let $\constantCzero$ be any
real number
smaller than $1 - \gamma \simeq 0.42$ and let $\constantCa =  3.49$. Then
\begin{equation}
\p( \ln N - \constantCzero  \geq   \min_{\psi} (H(p^\psi) + H(q^\psi))   \geq
\ln N -
\constantCa) \to 1  ,
\label{sandw}
\end{equation}
as $N \to \infty$.
\end{theorem}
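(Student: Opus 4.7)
The right-hand inequality is already the content of Theorem \ref{th:asymptotic-strong}, so the entire task is to prove the upper bound
\[
\min_{\psi}\bigl(H(p^\psi)+H(q^\psi)\bigr)\le \ln N - \constantCzero
\]
with probability tending to $1$. The plan is to produce an explicit witness state achieving this, and then use the concentration result quoted from Appendix B.2 of \cite{Hayden2004randomizing}.

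The natural witness is $\psi = |a_1\rangle$. With the convention $|b_j\rangle = U|a_j\rangle$ this gives $p^\psi = (1,0,\ldots,0)$, hence $H(p^\psi) = 0$, while $q_i^\psi = |\langle b_i|a_1\rangle|^2 = |U_{i1}|^2$. By unitary (left and right) invariance of the Haar measure, the first column of $U$ is distributed uniformly on the complex unit sphere in $\C^N$, so $q^\psi$ has exactly the distribution of the squared-modulus coordinate vector of a uniform random pure state. Consequently $H(q^\psi)$ has the distribution of the Shannon entropy of a random pure state.

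I then invoke Jones's formula already recalled in the text, $\Ex H(q^\psi) = \Psi(N+1) - \Psi(2)$. Using $\Psi(N+1) = \ln N + O(1/N)$ and $\Psi(2) = 1 - \gamma$, this expectation equals $\ln N - (1-\gamma) + o(1)$. For any $\constantCzero < 1-\gamma$, one can choose $\varepsilon > 0$ so small that $\Ex H(q^\psi) + \varepsilon \le \ln N - \constantCzero$ for all sufficiently large $N$. The strong concentration statement from Appendix B.2 of \cite{Hayden2004randomizing} then yields
\[
\p\bigl(H(q^\psi) \le \Ex H(q^\psi) + \varepsilon\bigr) \to 1,
\]
so with probability tending to $1$, $H(p^\psi) + H(q^\psi) = H(q^\psi) \le \ln N - \constantCzero$, proving the left inequality.

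Finally, intersecting this high-probability event with the event from Theorem \ref{th:asymptotic-strong} (on which $\min_\psi (H(p^\psi)+H(q^\psi)) \ge \ln N - \constantCa$) produces a single event of probability $1-o(1)$ on which the full sandwich \eqref{sandw} holds. There is no real obstacle here: every ingredient (the choice of witness, Jones's expectation formula, the digamma asymptotics, and the spherical concentration of the Shannon entropy) is either elementary or imported verbatim from a cited source.
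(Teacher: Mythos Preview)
Your proof is correct and follows exactly the approach sketched in the paper: the lower bound is Theorem~\ref{th:asymptotic-strong}, and the upper bound is obtained by choosing $\psi$ to be a basis vector so that one entropy vanishes and the other is the Shannon entropy of a random pure state, whose expectation is given by Jones's formula and which concentrates around its mean by the cited result of Hayden et al. One harmless slip: with $U_{ij}=\langle a_i|b_j\rangle$ and $\psi=|a_1\rangle$ one gets $q_i^\psi=|U_{1i}|^2$ (the first \emph{row}, not the first column), but since rows and columns of a Haar unitary are both uniformly distributed on $S_\C^{N-1}$ this does not affect the argument.
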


\section{Several measurements} \label{sec:several-measurements}

Here we will consider
the case of an arbitrary number of $L$ orthogonal measurements.
Assume that the measurement bases are determined by independent random
unitary matrices $U_1,\ldots,U_L$ of size $N$.
For a state $\ket{\psi}$, let $p^{(\psi,i)} =
(p^{(i)}_1,\ldots,p^{(i)}_N)$, with $p^{(i)}_j = |\langle
u^{(i)}_j|\psi\rangle|^2$, where $u^{(i)}_j$ is the $j$-th column of $U_i$.

Uncertainty relations for random unitaries were studied in
\cite{Hayden2004randomizing,Fawzi2011}. In the special case as the number $L$ of
measurements grows with the dimension $N$ as $  \constantCb  \ln^4 N$
the following asymptotic bound for the average entropy was
derived~\cite{Hayden2004randomizing}
\begin{equation} \label{eqn:Hayden2004randomizing}
\p\Big( \min_{\psi}\frac{1}{L}\sum_{i=1}^L H(p^{(\psi,i)}) \ge \ln N - \constantCb\Big)
\to 1.
\end{equation}



In our work we improve the above result and relax the assumption that number of measurements $L$ and the dimensionality of the system $N$ are related.
The second main result of this paper shows that uniform unitaries satisfy strong uncertainty
relations for an arbitrary number of measurements.

\begin{theorem}\label{prop:uncertainty}
There exists a universal constant $\constantCc$ such that if
$U_1,\ldots,U_L$ are
independent $N\times N$ random unitary matrices, then
\begin{equation}
\p\Big(\min_{\psi}\frac{1}{L}\sum_{i=1}^L H(p^{(\psi,i)}) \ge \frac{L-1}{L}\ln
N - \constantCc\Big) \to 1
\label{multi}
\end{equation}
as $N \to \infty$. Moreover, the convergence is uniform in $L \ge 2$.
\end{theorem}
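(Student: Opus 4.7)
The plan is to apply Theorem~\ref{th:strong-many_measurements} to the $N\times LN$ concatenation $U = [U_1\mid U_2\mid\cdots\mid U_L]$ of the random unitaries and to control the maximal submatrix norms $\mathcal{S}_k = \max_{|I|=k+1}\|U_I\|^2$ uniformly in $k$ and in $L$. Note that $\mathcal{S}_0 = 1$ and $\mathcal{S}_{LN-1} = \|U\|^2 = L$. For each state $\psi$, Theorem~\ref{th:strong-many_measurements} yields
\begin{equation*}
\sum_{i=1}^L H(p^{(\psi,i)})
\ge
-\sum_{k=1}^{LN}(\mathcal{S}_k-\mathcal{S}_{k-1})\ln(\mathcal{S}_k-\mathcal{S}_{k-1}),
\end{equation*}
so it is enough to show that the right hand side exceeds $(L-1)\ln N - \constantCc L$ with probability tending to $1$, uniformly in $L\ge 2$.

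The main technical step is an $L$-matrix analogue of Proposition~\ref{prop:uniform-bound-for-sk} for the concatenation: with probability tending to $1$ as $N\to\infty$ uniformly in $L\ge 2$,
\begin{equation*}
\mathcal{S}_k \le 1 + C\min\Bigl(L-1,\ \frac{k+1}{N}\ln\Bigl(\frac{eLN}{k+1}\Bigr)\Bigr)
\quad (1\le k\le LN-1),
\end{equation*}
for an absolute constant $C$. To prove this I fix $I\subset\{1,\ldots,LN\}$ with $|I|=k+1$, put $I_i = I\cap\{(i-1)N+1,\ldots,iN\}$ and $m_i = |I_i|$, and write
\begin{equation*}
U_I U_I^* = \sum_{i=1}^L P_i,
\end{equation*}
where $P_i$ is the orthogonal projection onto the uniformly random $m_i$-dimensional subspace spanned by the columns of $U_i$ indexed by $I_i$. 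The map $(U_1,\ldots,U_L)\mapsto \|U_I\|$ is $1$-Lipschitz on $U(N)^L$ equipped with the $\ell^2$-sum of Hilbert--Schmidt metrics, so the concentration input underlying~\eqref{conc_Ank} tensorises to give $\p(|\|U_I\|-\Ex\|U_I\||\ge t)\le 2\exp(-cNt^2)$. The expectation is estimated by an $\varepsilon$-net argument on the unit sphere of $\C^N$ parallel to the proof of Theorem~\ref{th:upper}, using that for a fixed unit $\psi$ the variable $\sum_i \|P_i\psi\|^2$ has mean $(k+1)/N$ and sub-Gaussian tails; a union bound over the $\binom{LN}{k+1}\le (eLN/(k+1))^{k+1}$ index sets and over $k$ yields the stated estimate, with the deterministic bound $\|U_I\|^2\le \|U\|^2 = L$ dominating once $k+1\gtrsim N/\ln L$.

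With this submatrix bound in hand, I would finish as in the proof of Theorem~\ref{th:asymptotic-strong}: construct a probability vector $r$ whose cumulative sums dominate those of the non-increasing rearrangement of $p^{(1)}\oplus\cdots\oplus p^{(L)}$, by choosing increments that saturate the above bound on $\mathcal{S}_k$, and then invoke Schur concavity of $H$ together with a Riemann-sum computation to conclude $H(r)\ge (L-1)\ln N - \constantCc L$ with a constant independent of both $L$ and $N$. The principal obstacle is exactly this uniformity in $L$ of the submatrix-norm bound: the combinatorial factor $\binom{LN}{k+1}$ in the union bound grows with $L$, so the Gromov--Milman concentration on $U(N)^L$ must be sharp enough to absorb it while losing only $O(L)$ in the final entropy estimate, and one must balance the ``random'' bound $(k+1)\ln(eLN/(k+1))/N$ (dominant for small $k$) against the deterministic bound $L$ (dominant for $k\gtrsim N/\ln L$) so that the transition region contributes only an additive $O(L)$; this balancing is precisely what allows $\constantCc$ in~\eqref{multi} to be chosen independently of both $L$ and $N$.
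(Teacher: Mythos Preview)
Your overall architecture matches the paper's proof: apply Theorem~\ref{th:strong-many_measurements}, control $\mathcal{S}_k$ uniformly in $k$ via concentration on $U(N)^L$, build a majorizing vector from the increments of the bound, and estimate its entropy. The paper carries this out via Lemma~\ref{lemma:norms_multiple} and then an argument closely parallel to the proof of Theorem~\ref{th:asymptotic-strong}.

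There is, however, a concrete error in your submatrix bound. You claim
\[
\mathcal{S}_k \le 1 + C\,\frac{k+1}{N}\ln\Bigl(\frac{eLN}{k+1}\Bigr),
\]
but this is false already for $k=1$: if $I=\{i,j\}$ indexes columns from two different blocks, then $\|U_I\|^2 = 1 + |\langle Y_i|Y_j\rangle|$, and the maximum of $|\langle Y_i|Y_j\rangle|$ over the $\Theta(L^2N^2)$ such pairs is of order $\sqrt{\ln(LN)/N}$, not $\ln(LN)/N$. The correct uniform bound (Lemma~\ref{lemma:norms_multiple} in the paper) is
\[
\sqrt{\mathcal{S}_k}\ \le\ 1 + \sqrt{\constantCg\,\frac{k+1}{N}\ln\Bigl(\frac{eLN}{k+1}\Bigr)},
\]
so that $\mathcal{S}_k - 1$ contains an unavoidable square-root term which dominates for small $k$.

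The source of the error is the side on which you discretize. Netting the unit sphere of $\C^N$ costs $e^{cN}$ points, and log-Sobolev concentration at rate $e^{-cNt^2}$ cannot absorb this unless $t$ is of constant order, which is useless. The paper instead fixes $I$ and nets the unit ball of $\C^{I}\cong\C^{k+1}$ (cardinality $e^{c(k+1)}$), uses that for each unit $x\in\C^{k+1}$ the map $(U_1,\ldots,U_L)\mapsto\|U_I x\|$ is $1$-Lipschitz with $\E\|U_I x\|=1+O(N^{-1/2})$, and then the union bound over nets and over $\binom{LN}{k+1}$ index sets forces $t\asymp\sqrt{(k+1)N^{-1}\ln(eLN/(k+1))}$. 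This produces the square-root bound above. Once you have that bound, the paper expands $(1+\sqrt{\cdot})^2$ and treats the two resulting increment terms via the concave functions $f(x)=\sqrt{x\ln(e/x)}$ and $g(x)=x\ln(e/x)$; the entropy of the majorizing vector is then shown to be at least $(L-1)\ln N - \constantCc L$ by a direct (if somewhat tedious) estimate on the resulting Riemann-type sum, with no need to invoke the deterministic cap $L$ or any ``balancing'' against it.
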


%


Note that for $L \gg \ln N$ we have $\frac{L-1}{L} \ln N = \ln N - o(1)$, so in
particular our result recovers~\eqref{eqn:Hayden2004randomizing}.

Before providing the proof we will present a few
comments concerning our approach and emphasize the differences with arguments
in \cite{Hayden2004randomizing} or \cite{Fawzi2011}.
We rely on strong majorization relations obtained in \cite{rudnicki2014strong},
 which we combine with
estimates of norms of submatrices of a random unitary matrix presented in
Section \ref{sec:norms}. The main probabilistic ingredient of our proof is the
concentration of measure phenomenon combined with discretization, also
used in \cite{Hayden2004randomizing,Fawzi2011}. The advantage of the
majorization approach stems from the fact that it reduces the problem to the
analysis of norms of matrices, which are $1$-Lipschitz functions of the matrix
(with respect to the Hilbert-Schmidt norms), whereas the Lipschitz constant of
the Shannon's entropy as a function of a state increases with the dimension. A
better Lipschitz constant yields stronger concentration results which gives more
freedom in choosing appropriate approximating sets and as a consequence allows
to find the right balance between the complexity of the problem in dimension $N$
and available probabilistic bounds.

In the proof of Theorem \ref{prop:uncertainty} we will use the following technical lemma, which will be proved
in Appendix~\ref{sec:lemma:norms_multiple}. Recall the definition of $\mathcal{S}_k$ given in formula \eqref{eq:definition-S_k} and the notation used therein:  $U$ is the concatenation of matrices $U_1,\ldots,U_L$ and for a
set $I \subset \{1,\ldots,LN\}$ with $|I| = k$, we define $U_I$ to be the $N
\times k$ matrix obtained from $U$ by selecting the columns of $U$ corresponding
to the set $I$.

\begin{lemma}\label{lemma:norms_multiple}
In the setting of Theorem \ref{prop:uncertainty}, with probability tending to 1 as $N \to \infty$ (uniformly in $L \ge 2$), for
all $k \le LN - 1$,
\begin{equation}
\sqrt{\mathcal{S}_k} = \max_{|I| = k+1} \|U_I\|
\le 1 + \sqrt{\constantCg\frac{k+1}{N}\ln\Big(\frac{eNL}{k+1}\Big)},
\end{equation}
where $\constantCg$ is a universal constant.
\end{lemma}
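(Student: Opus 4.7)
The plan is to combine concentration of measure on the product unitary group $U(N)^L$ with an $\epsilon$-net argument and union bounds, in the spirit of the proof of Theorem~\ref{th:upper} but adapted to the multi-unitary setting.

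First I fix a subset $I \subset \{1,\ldots,LN\}$ with $|I|=k+1$ and decompose $U_I = [V_1 \mid \cdots \mid V_L]$ (after reordering columns), where $V_i$ is the $N\times k_i$ submatrix extracted from $U_i$ according to $I$ and $\sum_{i=1}^{L} k_i = k+1$. Each $V_i$ has orthonormal columns since $U_i$ is unitary. The strategy has three ingredients: (i) bound $\mathbb{E}\|U_I\|$, (ii) apply subgaussian concentration of $\|U_I\|$ around its mean on the product unitary group, and (iii) union-bound over $I$ of size $k+1$ and then over $k$.

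For (i), take an $\epsilon$-net $\mathcal{N}$ of the complex unit sphere in $\mathbb{C}^{k+1}$ with $|\mathcal{N}| \le (3/\epsilon)^{2(k+1)}$, so that $\|U_I\| \le (1-\epsilon)^{-1}\max_{x\in\mathcal{N}}\|U_I x\|$. For any fixed unit vector $x=(x_1,\ldots,x_L)$, write $V_i = U_i P_i$ for a deterministic column-selection isometry $P_i$. By right-invariance of Haar measure on $U(N)$, each $V_i x_i = U_i(P_i x_i)$ is uniformly distributed on the sphere of radius $\|x_i\|$ in $\mathbb{C}^N$, and these vectors are independent across $i$; hence $\mathbb{E}\|U_I x\|^2 = \sum_i \|x_i\|^2 = 1$. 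Since $\|U_I x\|$ is $1$-Lipschitz in $(U_1,\ldots,U_L)$ for the product Hilbert--Schmidt metric, and since the logarithmic Sobolev inequality on $U(N)$ (with constant of order $1/N$) tensorizes to $U(N)^L$ with the same constant, we get the subgaussian bound $\p(\|U_I x\| \ge 1 + s) \le 2\exp(-cNs^2)$. A union bound over $\mathcal{N}$ with $\epsilon$ chosen of order $\sqrt{(k+1)/N}$ then yields
\[
\mathbb{E}\|U_I\| \le 1 + C\sqrt{\frac{k+1}{N}\ln\frac{eN}{k+1}}.
\]

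For (ii), the same concentration applied to the $1$-Lipschitz function $(U_1,\ldots,U_L)\mapsto \|U_I\|$ gives $\p(\|U_I\|\ge \mathbb{E}\|U_I\|+t) \le 2\exp(-cNt^2)$. Combining with $\binom{LN}{k+1}\le (eLN/(k+1))^{k+1}$, selecting $t = C'\sqrt{(k+1)/N\cdot \ln(eLN/(k+1))}$ with $C'$ large enough, and union-bounding over $I$ of size $k+1$ produces
\[
\max_{|I|=k+1}\|U_I\| \le 1 + \sqrt{\constantCg\frac{k+1}{N}\ln\frac{eNL}{k+1}}
\]
outside an event of probability at most $2\exp(-c''(k+1)\ln(eLN/(k+1)))$. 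For (iii) we sum these failure probabilities for $k=0,\ldots,LN-1$; the sum tends to $0$ as $N\to\infty$, uniformly in $L\ge 2$, since the exponential decay dominates the polynomial growth in $k$ and $L$.

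The hardest part will be the bookkeeping of constants. The $\epsilon$-net cost contributes $\ln(3/\epsilon)$, the union bound over subsets contributes $\ln(eLN/(k+1))$, and one must combine these with the factor $(1-\epsilon)^{-1}\sqrt{1+s}$ in such a way that the result has the \emph{additive} form $1+\sqrt{\cdot}$ rather than a multiplicative one; this forces $\epsilon$ to be chosen on the same scale as the concentration fluctuations. A related subtle point is uniformity in $L$: the $1/N$ log-Sobolev constant is preserved under tensorization, which is precisely what allows the same subgaussian bound to be used for all $L$, with the $L$-dependence entering only through the $\ln(eLN/(k+1))$ in the final estimate.
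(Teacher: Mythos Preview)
Your overall strategy---concentration on the product group $U(N)^L$ via a tensorized log-Sobolev inequality, combined with $\epsilon$-nets and union bounds---matches the paper's ingredients. However, your two-step structure (first bound $\E\|U_I\|$ for fixed $I$, then concentrate $\|U_I\|$ and take a union over $I$) runs into a genuine difficulty in step (i) that you do not address. You propose to pick $\epsilon$ of order $\sqrt{(k+1)/N}$ so as to turn the multiplicative net factor $(1-\epsilon)^{-1}$ into an additive error, but $k+1$ ranges all the way up to $LN$; once $k+1\gtrsim N$ your $\epsilon$ exceeds $1$ and the inequality $\|U_I\|\le(1-\epsilon)^{-1}\max_{x\in\mathcal N}\|U_Ix\|$ becomes vacuous. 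Correspondingly, your stated expectation bound $1+C\sqrt{(k+1)N^{-1}\ln(eN/(k+1))}$ is meaningless there (the logarithm is negative once $k+1>eN$). This regime is not empty: if $I$ consists of all $N$ columns of each of $l\le L$ of the unitaries then $\|U_I\|=\sqrt{l}$ deterministically. The gap is repairable---for $(k+1)/N$ bounded below one can switch to a fixed $\epsilon$ and absorb the resulting constant factor into $\sqrt{(k+1)/N}$---but your write-up does not signal this case split, and the phrase ``$\epsilon$ chosen of order $\sqrt{(k+1)/N}$'' is precisely where the argument breaks.

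The paper sidesteps the issue with a different device. It uses a \emph{fixed} $1/4$-net throughout, performs the union bound over both the net and all $I$ in one shot to obtain a \emph{two-sided} estimate $1-\delta\le\|U_Ix\|\le 1+\delta$ on the net with $\delta=\sqrt{C(k+1)N^{-1}\ln(eNL/(k+1))}$, and then, in the regime $\delta<1$, squares to get $|\langle x|(U_I^\dag U_I-\id)|x\rangle|\le 3\delta$ on the net. A standard approximation argument for the Hermitian operator $A=U_I^\dag U_I-\id$ then yields $\|A\|\le 7\delta$, hence $\|U_I\|^2\le 1+7\delta$ and $\|U_I\|\le 1+\tfrac{7}{2}\delta$. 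This quadratic-form trick delivers the additive $1+O(\delta)$ bound directly, with no $\epsilon$-scaling, and handles all $k$ uniformly (for $\delta>1$ the crude $(1-1/4)^{-1}$ net bound already suffices). Your route can be completed, but the paper's is cleaner and also explains why the lower concentration bound on $\|Ux\|$ (via $\E\|Ux\|\ge 1-O(N^{-1/2})$) is actually used.
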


\begin{proof}[Proof of Theorem \ref{prop:uncertainty}]
Let $\constantCg$ be the constant from Lemma \ref{lemma:norms_multiple}. Define $M_1 =
\Big(1 + \sqrt{\constantCg\frac{2}{N}\ln\Big(\frac{eLN}{2}\Big)}\Big)^2 - 1$
and for $2
\le i \le NL-1$,
\begin{equation}
M_i = \Big(1 +
\sqrt{\constantCg\frac{i+1}{N}\ln\Big(\frac{eLN}{i+1}\Big)}\Big)^2 - \Big(1 +
\sqrt{\constantCg\frac{i}{N}\ln\Big(\frac{eLN}{i}\Big)}\Big)^2 .
\end{equation}
We have for $2 \le i \le NL-1$,
\begin{equation}
\begin{split}
M_i &= 2 \left(\sqrt{\constantCg\frac{i+1}{N}\ln\Big(\frac{eLN}{i+1}\Big)} -
\sqrt{\constantCg\frac{i}{N}\ln\Big(\frac{eLN}{i}\Big)}\right)
+
\Big(\constantCg\frac{i+1}{N}\ln\Big(\frac{eLN}{i+1}\Big)-\constantCg\frac{i}{N}\ln\Big(\frac{eLN}{i}\Big)\Big)\\
& = 2\sqrt{\constantCg L} \left(f(\frac{i+1}{LN}) - f(\frac{i}{LN}) \right) +
\constantCg L \left(g(\frac{i+1}{LN}) - g(\frac{i}{LN}) \right),
\end{split}
\end{equation}
where
$f,g\colon [0,e] \to \R$ are given by $f(x) = \sqrt{x\ln(e/x)}$ and $g(x)= x\ln(e/x)$.

Both $f$ and $g$ are concave and thus
\begin{equation}
\begin{split}
M_i &\le 2\sqrt{\constantCg L}\frac{1}{LN}\frac{d}{dx} f\left(\frac{i}{LN}
\right) + \constantCg L\frac{1}{LN}\frac{d}{dx}g \left(\frac{i}{LN} \right)\\
&= 2\sqrt{\constantCg
L}\frac{1}{LN}\frac{\ln(\frac{LN}{i})}{2\sqrt{\frac{i}{LN}\ln(\frac{eLN}{i})}}
+ \frac{\constantCg L}{LN}\ln\left(\frac{LN}{i} \right)\\
& =
\frac{1}{N}\left(\frac{\sqrt{\constantCg}\ln(\frac{LN}{i})}{\sqrt{L}\sqrt{\frac{i}{LN}
 \ln\Big(\frac{eLN}{i}\Big)}} + \constantCg\ln\Big(\frac{LN}{i}\Big)\right) =:
\tilde{M}_i.
\end{split}
\end{equation}
Since
\begin{equation}
M_1+ \sum_{i=2}^{LN-1}\tilde{M}_i \ge \sum_{i=1}^{LN-1} M_i = \Big(1 +
\sqrt{\constantCg\frac{LN}{N}\ln\Big(\frac{eLN}{LN}\Big)}\Big)^2 - 1 > L-1,
\end{equation}
there exists maximum $N_0 < LN- 1$ such that
\begin{equation}
M_1 + \sum_{i=2}^{N_0} \tilde{M}_i \le L-1,
\end{equation}
(note that for $N$ sufficiently large, independent of $L$, $M_1 \le L-1$).

Define a vector $W \in \R^{N_0+1}$ by $W_1 = M_1$,
\begin{equation}
W_i = \tilde{M}_i,
\end{equation}
for $i = 2,\ldots,N_0$ and $W_{N_0+1} = L-1 -\sum_{i=1}^{N_0} W_i$.

Let $z_i$, $i =1,\ldots,NL$ be the non-increasing  rearrangement of
$p_1\oplus\cdots\oplus p_L$. Using Theorem \ref{th:strong-many_measurements} and
Lemma \ref{lemma:norms_multiple} we get that with probability tending to one as
$N \to \infty$,  for $k \le N_0+1$,
\begin{equation}
\begin{split}
z_1 + \ldots + z_k & \le \mathcal{S}_{k-1} 
\le \Big(1 + \sqrt{\constantCg\frac{k}{N}\ln\Big(\frac{eNL}{k}\Big)}\Big)^2\\
& = 1 + M_1 + \ldots+ M_{k-1} 
\le 1 + W_1 + \ldots + W_{k-1}.
\end{split}
\end{equation}
Also for $k > N_0+1$ we have $z_1 + \ldots+z_k \le L = 1 + W_1+\ldots+W_{N_0+1}$, so $z \prec 1\oplus R$ and as a consequence
\begin{equation}\label{eq:majorization-bound-mm}
\sum_{i=1}^L H(p_i) \ge -\sum_{i=1}^{N_0+1} W_i\ln W_i.
\end{equation}

Now, using the definition of $M_1$ and $\tilde{M}_i$, it is easy to see that
$M_1+ W_{N_0+1} \le  \constantCk\left(\sqrt{\frac{\ln(LN)}{N}} +
\frac{\ln(LN)}{N}\right)$. In particular for large $N$ (uniformly in $L \ge 2$)
we have $ M_1\ln M_1 + W_{N_0+1} \ln W_{N_0+1} \le \constantCl\ln^2 L$ and so

\begin{align}\label{eq:bound-on-HW}
-\sum_{i=1}^{N_0+1} W_i\ln W_i \ge &\sum_{i=1}^{N_0} - W_i \ln W_i\nonumber \\
 \ge & - \constantCl \ln^2 L + \left(\sum_{i=2}^{N_0} W_i \right)\ln N
 \nonumber\\
 & - \sum_{i=2}^{N_0}
 \frac{1}{N}\left(\frac{\sqrt{\constantCg}\ln(\frac{LN}{i})}{\sqrt{L}\sqrt{\frac{i}{LN}
  \ln(\frac{eLN}{i})}} + \constantCg\ln \left(\frac{LN}{i} \right)\right)
 \ln\left(\frac{\sqrt{\constantCg}\ln(\frac{LN}{i})}{\sqrt{L}\sqrt{\frac{i}{LN}
 \ln(\frac{eLN}{i})}} + \constantCg\ln\left(\frac{LN}{i} \right)\right)
 \nonumber\\
= & (L-1) \ln N - (M_1 + W_{N_0+1})\ln N -  \constantCl \ln^2 L - B_N
\nonumber\\
\ge & (L-1)\ln N - \constantCl\ln^2 L -
\constantCk\left(\sqrt{\frac{\ln(LN)}{N}} +
\frac{\ln(LN)}{N}\right)\ln N - B_N,
\end{align}
where
\begin{equation}
B_N = \sum_{i=2}^{N_0} \frac{1}{N}
\left(\frac{\sqrt{\constantCg}\ln(\frac{LN}{i})}{\sqrt{L}\sqrt{\frac{i}{LN}
\ln(\frac{eLN}{i})}} + \constantCg\ln\left(\frac{LN}{i} \right) \right)
\ln\left(\frac{\sqrt{\constantCg}\ln(\frac{LN}{i})}{\sqrt{L}\sqrt{\frac{i}{LN}
\ln(\frac{eLN}{i})}} + \constantCg\ln \left(\frac{LN}{i} \right)\right).
\end{equation}

Now, the following holds for a sufficiently large absolute constant
$\constantCm$. If
$i < LN/\constantCm$, then $\ln(\frac{eLN}{i}) \le (\frac{LN}{i})^{1/4}$ and
using
the inequality $L \ge 2$, we get
\begin{equation}
\left(\frac{\sqrt{\constantCg}\ln(\frac{LN}{i})}{\sqrt{L}\sqrt{\frac{i}{LN}
\ln(\frac{eLN}{i})}} + \constantCg\ln \left(\frac{LN}{i} \right)\right)
\ln\left(\frac{\sqrt{\constantCg}\ln(\frac{LN}{i})}{\sqrt{L}\sqrt{\frac{i}{LN}
\ln(\frac{eLN}{i})}} + C\ln \left(\frac{LN}{i} \right)\right)
\le \constantCn\Big(\frac{LN}{i}\Big)^{7/8},
\end{equation}
for some absolute constant $\constantCn$.

On the other hand if $i > LN/\constantCm$, then
\begin{equation}
\left(\frac{\sqrt{\constantCg}\ln(\frac{LN}{i})}{\sqrt{L}\sqrt{\frac{i}{LN}
\ln(\frac{eLN}{i})}} + \constantCg\ln \left(\frac{LN}{i} \right) \right)
\ln\left(\frac{\sqrt{\constantCg}\ln(\frac{LN}{i})}{\sqrt{L}\sqrt{\frac{i}{LN}
\ln(\frac{eLN}{i})}} + \constantCg\ln \left(\frac{LN}{i} \right)\right) \le
\constantCo,
\end{equation}
where $\constantCo$ is another absolute constant.

Thus, using $N_0 \le LN$, we get
\begin{equation}\label{eq:B_N-bounded}
\begin{split}
B_N &\le \frac{\constantCo}{N}  (N_0- LN/\constantCm)_+ +
\frac{\constantCn(LN)^{7/8}
}{N}\sum_{i=2}^{\lfloor LN/\constantCm\rfloor } \frac{1}{i^{7/8}}\\
&\le \constantCo L + \frac{\constantCn(LN)^{7/8}
}{N}\constantCp\Big(\frac{LN}{\constantCm}\Big)^{1/8} \le \constantCq L.
\end{split}
\end{equation}
It remains to bound from above the term $\constantCl\ln^2 L +
\constantCk\left(\sqrt{\frac{\ln(LN)}{N}} + \frac{\ln(LN)}{N}\right)\ln N$
appearing
in \eqref{eq:bound-on-HW}. It is easy to see that for sufficiently large $N$
(uniformly in $L\ge 2$) it is bounded by $\constantCr L$.

Combining this estimate with \eqref{eq:bound-on-HW} and \eqref{eq:B_N-bounded} gives
\begin{equation}
-\sum_{i=1}^{N_0+1} W_i\ln W_i \ge (L-1)\ln N - \constantCc L,
\end{equation}
with $\constantCc = \constantCq + \constantCr$.
By \eqref{eq:majorization-bound-mm} this ends the proof of the theorem.
\end{proof}


To relate the above result with earlier literature recall that
Wehner and Winter \cite{WW10} defined a function
\begin{equation}\label{eq:WW_conjecture1}
h(L) = \lim_{N \to \infty} \max_{U_1,\ldots,U_L \in U(N)} \frac{1}{\ln N} \min_{\psi} \frac{1}{L}\sum_{i=1}^L H(p^{(\psi,i)}).
\end{equation}
They ask whether $h(L) = 1 - \frac{1}{L}$ for all $L \ge 2$.
A related weaker question is whether there exists an increasing function
$f\colon \mathbb{N} \to
[0,\infty)$, such that $\lim_{L\to \infty} f(L) = \infty$ and
$h(L) \ge 1  - \frac{1}{f(L)}$.
Clearly any such function must be bounded from above by $L$.
Theorem~\ref{prop:uncertainty} immediately yields the following corollary.
\begin{corollary}\label{cor:WW-conjecture}
The conjecture by Wehner and Winter holds true, i.e. for every $L$ the limit
$h(L)$ in \eqref{eq:WW_conjecture1} exists and
\begin{equation}
h(L) = 1 - \frac{1}{L}.
\end{equation}
\end{corollary}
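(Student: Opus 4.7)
The plan is to establish the two inequalities $h(L) \ge 1 - 1/L$ and $h(L) \le 1-1/L$ separately, which will simultaneously verify that the limit defining $h(L)$ exists and equals $1-1/L$. Both bounds are short: the nontrivial direction is an immediate consequence of Theorem \ref{prop:uncertainty}, while the reverse direction follows from a trivial choice of test state.

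For the lower bound, I would first fix $L \ge 2$ and let
\begin{equation}
h_N(L) := \frac{1}{\ln N}\max_{U_1,\ldots,U_L \in U(N)} \min_{\psi} \frac{1}{L}\sum_{i=1}^L H(p^{(\psi,i)}).
\end{equation}
Theorem \ref{prop:uncertainty} guarantees that for all sufficiently large $N$ there exists at least one choice of unitaries $U_1,\ldots,U_L$ (for instance, a realization of independent Haar-distributed matrices, since the event in \eqref{multi} has probability tending to $1$) such that $\min_{\psi}\frac{1}{L}\sum_{i=1}^L H(p^{(\psi,i)}) \ge \frac{L-1}{L}\ln N - \constantCc$. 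Dividing by $\ln N$ and sending $N \to \infty$ gives $\liminf_N h_N(L) \ge \frac{L-1}{L}$.

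For the matching upper bound, I would note that for any unitaries $U_1,\ldots,U_L \in U(N)$ one may simply take $\ket{\psi} = \ket{u^{(1)}_1}$, the first column of $U_1$. Then $p^{(\psi,1)}$ is the Dirac mass at coordinate $1$, so $H(p^{(\psi,1)}) = 0$, while for every $i \ge 2$ one has the trivial upper bound $H(p^{(\psi,i)}) \le \ln N$. Consequently
\begin{equation}
\min_{\psi} \frac{1}{L}\sum_{i=1}^L H(p^{(\psi,i)}) \le \frac{1}{L}\sum_{i=1}^L H(p^{(\psi,i)}) \le \frac{L-1}{L}\ln N,
\end{equation}
and this holds uniformly over the choice of $U_1,\ldots,U_L$. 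Therefore $h_N(L) \le \frac{L-1}{L}$ for every $N \ge 2$, and in particular $\limsup_N h_N(L) \le \frac{L-1}{L}$.

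Combining the two estimates yields $\lim_N h_N(L) = \frac{L-1}{L} = 1 - \frac{1}{L}$, which is exactly the claim. There is no real obstacle here: the heavy lifting was carried out in Theorem \ref{prop:uncertainty}, and the upper bound is an entirely routine observation. The only subtlety worth flagging is that in \eqref{eq:WW_conjecture1} the existence of the limit is part of the statement, but this is automatic from the sandwich argument above.
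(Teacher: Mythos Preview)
Your proof is correct and follows exactly the approach the paper has in mind: the paper merely states that the corollary ``immediately'' follows from Theorem~\ref{prop:uncertainty}, implicitly relying on the same probabilistic existence argument for the lower bound and the same trivial test-state argument (choosing $\ket{\psi}$ to be a column of one of the $U_i$) for the upper bound, both of which you have spelled out in full.
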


In \cite{Fawzi2011} Fawzi et al. showed  by probabilistic methods that for any
$L \ge 2$ and $N > 2$ there exist $L$  unitary matrices $U_1, \ldots,U_L$ such
that
\begin{equation}
\min_{\psi}\frac{1}{L}\sum_{i=1}^L H(p^{(\psi,i)}) \ge \Big(1 -
\sqrt{\frac{c'\ln L}{L}}\Big)\ln N - \ln\Big(\frac{18L}{c'\ln L}\Big) -
H\Big(\sqrt{\frac{c'\ln L}{L}},1-\sqrt{\frac{c'\ln L}{L}}\Big)
\end{equation}
for some universal constant $c'$. In particular this proves the weak form of the
Wehner and Winter conjecture with $f(L) = \sqrt{\frac{L}{c'\ln N}}$. We remark
that Fawzi et al. \cite{Fawzi2011} obtained also more explicit constructions of
matrices satisfying entropic uncertainty principles for $N$ being a power of
$2$, as their constructions can be efficiently performed by quantum circuits.
However the number of measurements $L$ in their scheme is  bounded from above by
a polynomial in $\ln N$.

Following the strategy of \cite{Fawzi2011}, our Theorem~\ref{prop:uncertainty} can
be directly applied to protocols of  locking of the classical information in
quantum states \cite{DHLST04,DFHL11}. Our bounds for the average entropy,  valid
for large dimension $N$ and an arbitrary number of measurements $L$, provide
more precise estimations concerning  the information leaked by a measurement
from a quantum system used in an information locking scheme.


\section{Concluding remarks}
\label{sec:concluding}

In this work we analyzed truncations of $N\times N$ random unitary matrices and
obtained estimations \eqref{trun1},  \eqref{eq:harmonic} and
\eqref{eq:uniform-one-times-n} for their norms. These results allowed us to
study various entropic uncertainty relations providing the bounds for the sum of
entropies describing information gained in two orthogonal measurements of any
$N$-dimensional pure quantum state.

Our analysis reveals in particular that classical relations, known to be optimal
in extremal settings, do not perform well in generic situations. For instance,
the Maassen--Uffink bound (\ref{MU}) averaged with the Haar measure over the
unitary group behaves asymptotically as $\ln N - \ln \ln N - \ln 2$. As the
largest element of a random orthogonal matrix is typically larger by a factor of
$\sqrt{2}$ \cite{Jiang}, the same bound averaged over the orthogonal group gives
$\ln N - \ln \ln N - 2\ln 2$. These results can be compared with implications of
the strong entropic uncertainty relation which, averaged over the unitary group
gives a lower bound $\ln N - C$, which is close to the best possible one.
Although the exact value of the optimal constant $C$ is still unknown, the
sandwich form (\ref{sandw}) implies that $C\in (0.42, 3.49)$.

It is natural to conjecture that if $U$ is drawn from the Haar measure on the
unitary group $U(N)$ and $D_N = \min_{\psi} (H(p^\psi) + H(q^\psi))$, then
there exists a limit
\begin{equation}
\lim_{N\to \infty} (\ln N - \E D_N).
\end{equation}

Strong majorization entropic uncertainty relations can be also formulated for
$L$ orthogonal measurements, determined by a collection of $L$ unitary matrices
of order $N$. Making use of bounds for the norms of their submatrices we
established an estimate (\ref{multi}), which implies that the sum of $L$
entropies behaves asymptotically as $(L-1)\ln N  -  {\rm const}$. This result,
holding for an arbitrary number $L$ of measurements, is up to an additive
constant compatible with the estimate (\ref{sandw}) valid for $L=2$. In
particular it allows us to answer completely an open question by Wehner and
Winter on asymptotic behavior of constants in entropic uncertainty relations for
many measurements as the dimension of the underlying Hilbert space tends to
infinity. Furthermore, these bounds can be used to quantify the information
leaked due to measurements from a quantum system, in which information locking
protocol is applied \cite{Fawzi2011}.

A natural open question is to find more precise estimations for these additive
constants determining the typical behavior of entropic uncertainty relations. To
get tighter bounds for the averaged relation (\ref{eqn:strong-relation}) one
would need to improve  the bounds for the average norms of the leading
truncations of random unitaries. Note that the bounds (\ref{eq:harmonic}) and
(\ref{trun1}) derived in this work can be considered as complementary: The
former one holds for $m=1$ and an arbitrary $n\in [1,N]$, while the latter one
works for any sizes $n$ and $m$ of the submatrix, but provides non-trivial
estimates if $n$ is small with respect to the matrix size $N$. Therefore, it is
tempting to believe that establishing a new family of bounds for the norms
$||\widehat{U}^{(n,m)}||$, which share advantages of both known results, would
allow one to improve the quality of the asymptotic entropic uncertainty
relations. We also mention that obtaining optimal bounds on norms of submatrices
of a random unitary matrix seems to be an interesting problem in its own rights,
with potential applications in Random Matrix Theory and Asymptotic Geometric
Analysis.

\medskip

It is a pleasure to thank Patrick Coles and {\L}ukasz Rudnicki for fruitful
discussions and helpful remarks. We appreciate numerous constructive suggestions
of the referee which allowed us to improve the work. This work was supported  by
the Grants number
 DEC-2012/05/B/ST1/00412 (RA and RL),
 DEC-2012/04/S/ST6/00400 (ZP) and DEC-2011/02/A/ST1/00119 (K{\.Z})
 of the Polish National Science Centre NCN
 and in part by the Transregio-12 project C4 of the Deutsche Forschungsgemeinschaft.

\appendix

\section{Proofs of estimates for norms of submatrices} \label{submatr2}
\subsection{Notation}

Before we proceed with the proofs let us gather here some (rather standard)
notation we are going to use.

\medskip

For $|x\rangle \in \C^n$, by $\|x\|$, we will denote its standard Euclidean
norm, i.e. $\|x\|=\sqrt{\<x|x\>}$. In the course of the proof we will often
encounter the Euclidean norm of $A|x\>$, where $A$ is a matrix and $|x\> \in
\C^n$. To shorten the notation we will denote it by $\|Ax\|$, i.e.
\begin{equation}\label{eq:Euclidean_norm}
\|Ax\| = \sqrt{\<x|A^\dag A|x\>}.
\end{equation}
Recall also that if $A$ is a matrix, by $\|A\|$ we denote the operator norm of
$A$. We will also use the Hilbert-Schmidt norm of $A$, defined as
$\|A\|_{HS} = \sqrt{\tr AA^\dag}.$
By $|I|$ we will denote the cardinality of a finite set $I$.
For a positive integer $n$ by $S^{n-1}$ we will denote the unit sphere in $\R^n$
equipped with the standard Euclidean norm, while $S_\C^{n-1}$ will denote the
unit sphere in $\C^n$. Clearly $S_\C^{n-1}$ is isometric to $S^{2n-1}$.
By $\Delta_{n-1}$ we will denote the standard $(n-1)$-dimensional simplex in
$\R^n$, i.e.
\begin{equation}
\Delta_{n-1} = \Big\{ x = (x_1,\ldots,x_n) \in \R^n \colon \forall_{1\le i\le n}
\; x_i \ge 0\; \textrm{and}\; \sum_{i=1}^n x_i = 1 \Big\}.
\end{equation}

We will sometimes use the $O$ notation. For two sequences $(a_N)_{N\ge 1}$ and
$(b_N)_{N \ge 1}$ we will write $a_N = O(b_N)$ if there exists a constant $K$
such that for all $N \ge 1$, $a_N \le Kb_N$.
We recall that by $\prec$ we denote the majorization relation defined in Section
\ref{sec:II} after formula \eqref{eq:definition_of_R}.
\subsection{Proof of Theorem \ref{th:upper}}
Recall that a probability measure $\mu$ on a metric space $(X,d)$ satisfies a
log-Sobolev inequality with constant $C$ if for any locally Lipschitz
function $f$
\begin{equation}
\int f^2\ln f^2 d\mu-\int f^2 d\mu \ln \int f^2d\mu\leq 2C\int |\nabla f|^2d\mu,
\end{equation}
where $|\nabla f|$ is the length of gradient with respect to the metric $d$, i.e.
\begin{equation}
|\nabla f|(x) = \limsup_{y \to x} \frac{|f(y) - f(x)|}{d(x,y)}
\end{equation}
(see e.g. Chapter 3.1. of \cite{Le} or the Appendix of \cite{MM}). For any such
measure and any $L$-Lipschitz function $F$ we then have (cf. \cite[Section 5.1]{Le})
\begin{equation}
\int \exp\Big(\lambda \Big(F-\int Fd\mu\Big)\Big)d\mu \leq
\exp\Big(\frac{CL^2}{2}\lambda^2\Big)\quad \mbox{ for all }\lambda \in \er
\end{equation}
and
\begin{equation}
\label{conc}
\mu\Big(F\geq \int Fd\mu+t\Big)\leq \exp\Big(-\frac{t^2}{2CL^2}\Big)\quad \mbox{ for }t\geq 0.
\end{equation}

We will use the following estimate of the log-Sobolev constant for the unitary
group (cf. \cite[Theorem 15]{MM}).
\begin{theorem}\label{thm:MeckesMeckes}
The Haar measure on the unitary group $U(N)$ satisfies a log-Sobolev inequality
with constant $6/N$ with respect to the Hilbert-Schmidt distance.
\end{theorem}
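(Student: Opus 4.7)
The plan is to invoke the Bakry--\'Emery criterion, which asserts that on a compact connected Riemannian manifold whose Ricci curvature is bounded below by $\rho>0$, the normalized volume measure satisfies a log-Sobolev inequality with constant $1/\rho$. The task therefore reduces to establishing a Ricci lower bound of the form $\mathrm{Ric} \geq (N/6)\,g$ on $U(N)$ equipped with the bi-invariant Riemannian metric induced by the Hilbert--Schmidt inner product $\langle X,Y\rangle = \tr(X^\dagger Y)$ on the Lie algebra $\mathfrak{u}(N)$.

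First, I would verify that the intrinsic geodesic distance on $U(N)$ associated with this bi-invariant metric agrees (up to an irrelevant normalization) with the extrinsic Hilbert--Schmidt distance, so that the Lipschitz constant of $F$ with respect to either distance is the same and the LSI transfers between the two viewpoints. Next, for a bi-invariant metric on a compact Lie group, the Ricci curvature is determined algebraically from the Lie bracket, and on the semisimple part $\mathfrak{su}(N)$ one obtains a lower bound of order $N$ with the chosen normalization. The subtlety is that $U(N)$ is not semisimple: the one-dimensional center spanned by $i\,\mathrm{Id}$ is flat, so a direct Ricci lower bound on all of $U(N)$ degenerates. This is handled via the central decomposition $U(N) \cong (SU(N)\times U(1))/\mathbb{Z}_N$.

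The main obstacle is consequently managing this flat abelian direction while retaining a dimension-dependent LSI constant. The standard resolution is to apply Bakry--\'Emery on $SU(N)$ to get an LSI there, combine it with an elementary log-Sobolev inequality on the circle $U(1)$ (whose constant depends on the radius, hence on the chosen scaling), and then tensorize the two inequalities; the passage to the quotient by the finite central subgroup $\mathbb{Z}_N$ preserves LSI constants since the quotient map is a local isometry. Carefully tracking the normalizations through this chain, and absorbing the factor coming from the $U(1)$ direction, produces the stated value $6/N$. This precise calculation has been carried out in Theorem 15 of Meckes--Meckes \cite{MM}, which we invoke directly; we note that the constant $6/N$ is not necessarily sharp but is sufficient (and of the correct order in $N$) for all the concentration bounds required in the sequel.
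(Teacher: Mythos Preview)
The paper does not supply its own proof of this theorem; it is stated purely as a citation, introduced with ``cf.\ \cite[Theorem 15]{MM}'' and then used as a black box. Your proposal ultimately does the same thing---after sketching the Bakry--\'Emery strategy and the $SU(N)\times U(1)$ decomposition, you explicitly invoke Theorem~15 of \cite{MM}---so the two approaches coincide.

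One small remark on your sketch: the geodesic distance and the Hilbert--Schmidt distance on $U(N)$ do not literally agree; rather $d_{HS}\le d_{\mathrm{geo}}$ (chord versus arc), which is the inequality in the right direction for the LSI to transfer from the Riemannian to the extrinsic metric. This is harmless for the argument, but ``agrees up to an irrelevant normalization'' slightly overstates the relationship.
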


We recall, that for a $N\times N$ matrix $U=(U_{i,j})_{i,j=1}^N$ by
$\|\widehat{U}^{(n,m)}\| $ we denote the maximal norm of its $n\times m$
submatrices, i.e
\begin{equation}
\|\widehat{U}^{(n,m)}\| :=\max_{|I|=n,|J|=m}\|U(I,J)\|,
\end{equation}
where $U(I,J):=(U_{i,j})_{i\in I,j\in J}$.

\begin{proof}[Proof of Theorem~\ref{th:upper}]
The function $U\mapsto \|\widehat{U}^{(m,n)}\|$ is $1$-Lipschitz with respect to
the Hilbert-Schmidt norm. Therefore the estimate \eqref{conc_Ank} immediately
follows by \eqref{conc} and Theorem \ref{thm:MeckesMeckes}.

Observe that for any $|x\>\in \ce^N$ with $\|x\|=1$, the random variable $U|x\>$
is uniformly distributed on $S_\ce^{N-1}\simeq S^{2N-1}$. It is well known that
for any $l$, the uniform distribution on $S^{l}$ satisfies log-Sobolev
inequality with constant $1/l$, (cf. formula (5.7) in \cite{Le}). For any
$|y\>\in \ce^N$ the function $z\mapsto \Re \langle y|z\rangle$ is
$\|y\|$-Lipschitz on $S^{N-1}$. Therefore, using the fact that $\E \langle y|U|
x\rangle = 0$, we get
\begin{equation}
\Ex e^{\lambda \Re \langle y|U|x\rangle} \leq
\exp\Big(\frac{1}{2(2N-1)}\lambda^2\Big)\quad \mbox{ for all }\lambda \in \er,\ |x\>,|y\>\in S_\ce^{N-1}.
\end{equation}

Now suppose that we have a finite set $E\subset S_\ce^{N-1}\times S_\ce^{N-1}$. Then
\begin{equation}
\label{maxE}
\Ex \max_{(|x\>,|y\>)\in E}\Re  \langle y|U|x\rangle \leq \sqrt{\frac{2\ln |E|}{2N-1}}.
\end{equation}
Indeed we have for $\lambda>0$,
\begin{equation}
\Ex\exp\Big(\lambda \max_{(|x\>,|y\>)\in E}\Re \langle y|U|x\rangle\Big)
\leq \Ex \sum_{(|x\>,|y\>)\in E} e^{\lambda \Re \langle y|U|x\rangle }
\leq |E|\exp\Big(\frac{1}{2(2N-1)}\lambda^2\Big).
\end{equation}
Jensen's inequality gives
\begin{equation}
\Ex \exp\Big(\lambda \max_{(|x\>,|y\>)\in E}\Re \langle y|U|x\rangle\Big)\geq
\exp\Big(\lambda \Ex \max_{(|x\>,|y\>)\in E}\Re \langle y|U|x\rangle\Big),
\end{equation}
hence
\begin{equation}
\Ex \max_{(|x\>,|y\>)\in E}\Re \langle y|U|x\rangle \leq\inf_{\lambda>0}\frac{1}{\lambda}\Big(\ln |E|+\frac{1}{2(2N-1)}\lambda^2\Big)
=\sqrt{\frac{2\ln |E|}{2N-1}}.
\end{equation}

Let us now estimate $\Ex \|\widehat{U}^{(m,n)}\|$.  For any $\emptyset\neq I\subset \{1,\ldots,N\}$  consider the ($|I|$-1)-dimensional unit sphere
\begin{equation}
S_I:=\{|x\> = (x_1,\ldots,x_N) \in S_\ce^{N-1}\colon  x_i = 0\;\textrm{for}\; i\notin I\}
\end{equation}
and choose an $\ve$-net $E_I$ in $S_I$ of cardinality at most $(1+2/\ve)^{2|I|}$
(such a net exists by standard volumetric estimates, see e.g. \cite{Pisier}).
Let $E_l:=\bigcup_{|I|=l}E_I$, then for any $1\leq l\leq N$,
\begin{equation}
|E_l|\leq \binom{N}{l}\Big(1+\frac{2}{\ve}\Big)^{2 l}\leq \Big(\frac{eN}{l}\Big)^l\Big(1+\frac{2}{\ve}\Big)^{2l}.
\end{equation}
Estimate \eqref{maxE} gives
\begin{equation}
\Ex\max_{|x\>\in E_n,|y\>\in E_m}\Re \langle y|U| x\rangle \leq
\sqrt{\frac{2}{2N-1}(\ln |E_n|+\ln|E_m|)}.
\end{equation}
Finally it is not hard to see that
\begin{equation}
\|\widehat{U}^{(m,n)}\| \leq \frac{1}{1-2\ve-\ve^2}\max_{|x\>\in E_n,|y\>\in E_m}\Re \langle y|U|x\rangle.
\end{equation}
Inequality \eqref{eq:norm_expectation_estimate} follows now easily by the three
last estimates. The bound \eqref{trun1} follows from
\eqref{eq:norm_expectation_estimate} by elementary calculations.
\end{proof}

\subsection{Proof of Theorem \ref{th:submatrices-fixed-size}}\label{sec:derivation1}

Note that the upper bound on $\|\widehat{U}^{(n,m)}\|$ follows from the already
proven Theorem \ref{th:upper}. To complete the proof it is thus enough to show
that for all fixed positive integers $n,m$ and $\varepsilon > 0$,
\begin{align}\label{eq:goal}
\p\Big(\|\widehat{U}^{(n,m)}\| \le (1-\varepsilon)\sqrt{(n+m)\frac{\ln N}{N}}\Big) \to 0 ,
\end{align}
as $N\to \infty$.

Let $\Gamma = (\Gamma_{ij})_{i,j=1}^N$ be a $N\times N$ matrix whose entries
are i.i.d. standard complex Gaussian variables (i.e. their real and imaginary
parts are independent, with Gaussian distribution of mean zero and variance
$1/2$, or equivalently with the density $g(x) = \frac{1}{\sqrt{\pi}}e^{-x^2}$
with respect to the Lebesgue measure on $\R$).

Set $M = M_N = N/\ln^2 N$. By Theorem 6 in \cite{Jiang} (applied with $m = M_N$,
$r = 1/\ln N$, $s = \ln N/(\ln\ln N)^{1/2}$, $t = \sqrt{\ln N/\ln\ln N}$, cf.
formula (2.10) in \cite{Jiang}) we can assume that
\begin{equation}\label{eq:A19}
\max_{i\le N,j\le M_N}|\sqrt{N}U_{ij} - \Gamma_{ij}|\le
\constantCd
\sqrt{\ln N/\ln\ln  N},
\end{equation}
with probability at least
$1 - \constantCe \exp(-\ln^{3/2} N)$.


By 
\eqref{eq:A19}
it is enough to show that with probability
tending to 1,
\begin{equation}
\|\widehat{\Gamma'}^{(n,m)}\| \ge (1-\varepsilon)\sqrt{(n+m)\ln N},
\end{equation}
where $\Gamma'$ is the $M_N\times M_N$ principal submatrix of $\Gamma$.

Since $\frac{\ln M_N}{\ln N} \to 1$ as $N \to \infty$, \eqref{eq:goal} will
follow if we prove
\begin{proposition} For any positive integers $n,m$ and any $\varepsilon > 0$,
\begin{align}\label{eq:g_goal}
\p\Big(\|\widehat{\Gamma}^{(n,m)}\| \le (1-\varepsilon)\sqrt{(n+m)\ln N}\Big) \to 0.
\end{align}
\end{proposition}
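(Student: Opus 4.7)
The plan is the second moment method applied to
$X = |\{(I,J)\colon |I| = n,\ |J| = m,\ \|\Gamma(I,J)\| \ge t\}|$
with $t = (1-\varepsilon)\sqrt{(n+m)\ln N}$: if $\p(X > 0) \to 1$ then \eqref{eq:g_goal} follows. Two ingredients are needed: a sharp right tail for the operator norm of a single $n \times m$ complex Gaussian matrix, and a decorrelation estimate for overlapping submatrices.

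For the tail I would use the joint density of eigenvalues of the complex Wishart matrix $AA^\dagger$ (with $A$ an $n \times m$ complex Gaussian, $n \le m$), which is proportional to $\prod_{i<j}(\lambda_i - \lambda_j)^2 \prod_k \lambda_k^{m-n} e^{-\lambda_k}$. When $\lambda_1$ is large the remaining $\lambda_k$ stay of order $1$, so integrating them out gives an asymptotic marginal density $c_{n,m}\lambda^{n+m-2}e^{-\lambda}$ for $\lambda_1$, hence $\p(\|A\|^2 \ge s) \sim c_{n,m}s^{n+m-2}e^{-s}$ as $s \to \infty$. Plugging in $s = t^2$ and using $\binom{N}{n}\binom{N}{m} \sim N^{n+m}/(n!m!)$ yields $\E X = \Omega(N^{(n+m)\varepsilon}(\ln N)^{n+m-2}) \to \infty$.

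For the second moment I would group pairs $((I_1,J_1),(I_2,J_2))$ by overlap $(a,b) = (|I_1 \cap I_2|,\,|J_1 \cap J_2|)$. Pairs with $ab = 0$ share no entries, so the corresponding events are independent and contribute $(1+o(1))(\E X)^2$ to $\E X^2$. The pairs with $a, b \ge 1$ number $\Theta(N^{2(n+m)-a-b})$, so the crux is a decorrelation inequality
\[
\p\bigl(\|\Gamma(I_1,J_1)\| \ge t,\ \|\Gamma(I_2,J_2)\| \ge t\bigr) = O\bigl(\p(\|A\| \ge t)^2\bigr)
\]
with implicit constant depending only on $n, m$. Granted this, the contribution of overlapping pairs to $\E X^2$ is $O(N^{-2})(\E X)^2 = o((\E X)^2)$, whence $\E X^2 = (1+o(1))(\E X)^2$ and Chebyshev yields $\p(X > 0) \to 1$.

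The hardest step is the decorrelation inequality. My approach would be to condition on the shared $a \times b$ sub-block $C = \Gamma(I_1 \cap I_2,\, J_1 \cap J_2)$: conditionally on $C$ the two completions are independent, so $\p(\text{both}) = \E[q(C)^2]$ for a single conditional tail function $q(c) = \p(\|\Gamma(I_1,J_1)\| \ge t \mid C = c)$. A naive bound $\|\Gamma(I_i,J_i)\| \le \|C\|_F + \|W_i\|$, with $W_i$ the independent Gaussian completion, loses too much in the exponent, because shifting $t$ by $\|C\|_F$ costs a factor $e^{2t\|C\|_F}$ in the polynomial-exponential tail. The technical core is therefore to carry out a Laplace-type integration over $C$ that balances this factor against the Gaussian density of $C$, using the sharp form of the tail of $\|A\|$ and the concentration of $\|C\|_F$ around $\sqrt{ab}$ to obtain a constant bound independent of $N$ and hence complete the argument.
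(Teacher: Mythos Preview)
Your second-moment plan has a genuine gap: the decorrelation inequality $\p(\text{both}) = O(\p(\text{one})^2)$ that you isolate as the crux is false, and the failure is structural rather than technical. Take $n=2$, $m=1$, overlap $a=b=1$, i.e.\ two $2\times 1$ subvectors in the same column sharing one entry. Writing $Z_1,Z_2,Z_3$ for the three squared moduli (i.i.d.\ standard exponentials), the two events are $\{Z_1+Z_2 \ge t^2\}$ and $\{Z_1+Z_3 \ge t^2\}$. A direct computation gives $\p(\text{one}) = (1+t^2)e^{-t^2}$ while conditioning on $Z_1$ yields $\p(\text{both}) = 2e^{-t^2} - e^{-2t^2} \sim 2e^{-t^2}$, so $\p(\text{both})/\p(\text{one})^2 \sim 2e^{t^2}/t^4 \to \infty$. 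The mechanism is that both operator-norm events are dominated by the single scenario $Z_1 \approx t^2$, and that entry is shared. Plugging into your count: the number of such pairs is $\Theta(N^4)$, and with $t^2 = 3(1-\varepsilon)^2\ln N$ their contribution to $\E X^2$ is $\Theta(N^{4-3(1-\varepsilon)^2})$, which for $\varepsilon < 1-\sqrt{2/3}$ dominates $(\E X)^2 = \Theta\bigl((\ln N)^2 N^{6-6(1-\varepsilon)^2}\bigr)$. So the second moment method on the operator-norm event fails outright for small $\varepsilon$; the Laplace integration you sketch cannot rescue it because the inequality you are trying to prove is simply not true.

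The paper sidesteps this with two moves. First, Gaussian concentration for the $1$-Lipschitz map $\Gamma\mapsto\|\widehat{\Gamma}^{(n,m)}\|$ reduces the goal from $\p(\cdot)\to 1$ to merely $\p(\cdot)\ge d$ for some constant $d>0$. Second---and this is the key device---instead of the operator-norm event the paper runs the Bonferroni (second-moment) argument on the surrogate
\[
\mathcal{E}(I,J)=\Bigl\{\ \forall\, i\in I,\ j\in J:\ |\Gamma_{ij}|^2 \ge \tfrac{n+m}{nm}\ln N\ \text{ and }\ \Arg\Gamma_{ij}\in[0,2\pi\delta)\ \Bigr\},
\]
on which the all-ones vector witnesses $\|\Gamma(I,J)\|\ge (1-\varepsilon)\sqrt{(n+m)\ln N}$. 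Because $\mathcal{E}(I,J)$ is a product of independent single-entry events, $\p(\mathcal{E}(I,J)\cap\mathcal{E}(I',J'))$ factors exactly over the union of the entry sets, and the overlap bookkeeping goes through cleanly. The point is that $\mathcal{E}(I,J)$ forces every entry to be only moderately large, so no single shared entry can carry both events the way $Z_1$ does above. Your framework can be repaired by adopting this surrogate event; with the operator-norm event it cannot.
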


\begin{proof}
First note that by the concentration property of Gaussian measures (see e.g.
\cite{Le}) and the fact that $\|\widehat{\Gamma}^{(n,m)}\|$ is 1-Lipschitz with
respect to the Hilbert-Schmidt norm, we have
\begin{equation}
\p\Big(\Big|\|\widehat{\Gamma}^{(n,m)}\| - \Ex \|\widehat{\Gamma}^{(n,m)}\| \Big| \ge t\Big) \le 2\exp(-t^2).
\end{equation}
Thus to prove the proposition it is enough to show that for every $\varepsilon >
0$, and $N$ large enough $\Ex \|\widehat{\Gamma}^{(n,m)}\| \ge
(1-\varepsilon)\sqrt{(n+m)\ln N}$. Assume that $\Ex \|\widehat{\Gamma}^{(n,m)}\|
< (1-\varepsilon)\sqrt{(n+m)\ln N}$. Then, again by concentration
$\p(\|\widehat{\Gamma}^{(n,m)}\| \ge (1-\varepsilon/2)\sqrt{(n+m)\ln N}) \le
1/N^{(n+m)\varepsilon^2/4} \to 0$ as $N \to \infty$. Therefore, to prove
\eqref{eq:g_goal} it is enough to show that for every $\varepsilon > 0$, there
exists $d > 0$ such that for $N$ large enough, we have
\begin{align}\label{eq:reduction}
\p\Big(\|\widehat{\Gamma}^{(n,m)}\| \ge (1-\varepsilon)\sqrt{(n+m)\ln N}\Big) > d.
\end{align}
It is well known that $|\Gamma_{ij}|^2$ are standard exponential variables (i.e.
they have a density $g(x) = e^{-x}\mathbf{1}_{[0,\infty)}(x)$), therefore
\begin{align}
\p\Big(|\Gamma_{ij}|^2 \ge \frac{n+m}{nm}\ln N\Big) = \frac{1}{N^{\frac{n+m}{nm}}}.
\end{align}
Moreover $\Gamma_{ij}$ are rotationally invariant, so for any $\delta \in (0,1)$,
\begin{align}\label{eq:single_entry}
\p\Big(|\Gamma_{ij}|^2\ge \frac{n+m}{nm}\ln N, \Arg \Gamma_{ij} \in [0,2\pi \delta)\Big) = \frac{\delta}{N^{\frac{n+m}{nm}}}.
\end{align}
Consider any $I,J \subset \{1,\ldots,N\}$ with $|I| = n$, $|J| = m$ and define
the event
\begin{equation}
\calE(I,J) = \Big\{ \forall_{i\in I, j \in J} \; |\Gamma_{ij}|^2 \ge \frac{n+m}{nm}\ln N, \Arg \Gamma_{ij} \in [0,2\pi \delta)\Big\}.
\end{equation}
Note that for $\delta$ small enough, depending on $\varepsilon$, on the event
$\calE(I,J)$ we have
\begin{align}\label{eq:implication}
\|\Gamma(I,J)\|\ge (1-\varepsilon)\sqrt{(n+m)\ln N},
\end{align}
where $\Gamma(I,J) = (\Gamma_{ij})_{i\in I,j\in J}$.

Indeed for the unit vector $|z\> = m^{-1/2}(1,\ldots,1) \in \C^m$ we have
(recall our notation introduced in \eqref{eq:Euclidean_norm})
\begin{equation}
\|\Gamma(I,J)z\|^2 = \<z|\Gamma(I,J)^\dag\Gamma(I,J)|z\> = \sum_{i\in I} \frac{1}{m}\Big|\sum_{j\in J} \Gamma_{ij}\Big|^2.
\end{equation}
Now for $\delta$ small enough,
\begin{align}
\Big|\sum_{j\in J} \Gamma_{ij}\Big|^2
&= \sum_{j,j' \in J} |\Gamma_{ij}||\Gamma_{ij'}|
   \cos \left(\Arg \Gamma_{ij}\overline{\Gamma_{i,j'}}\right) \nonumber\\
&\ge \sum_{j,j' \in J} |\Gamma_{ij}||\Gamma_{ij'}|\cos \left(2\pi \delta\right)
 \ge (1-\varepsilon)^2m^2\frac{n+m}{nm}\ln N
\end{align}
and thus
\begin{equation}
\|\Gamma(I,J)\|^2 \ge \|\Gamma(I,J)z\|^2 \ge (1-\varepsilon)^2(n+m)\ln N,
\end{equation}
which proves \eqref{eq:implication}.

Thus to prove the proposition it is enough to show that for $N$ large enough,
\begin{align}\label{eq:to_prove}
\p\Big(\bigcup_{|I| = n, |J| = m} \calE(I,J)\Big) \ge d.
\end{align}
By the Bonferroni inequality we have
\begin{align}
\p\Big(\bigcup_{|I| = n, |J| = m} \calE(I,J)\Big) &\ge \sum_{|I| = n, |J| = m} \p(\calE(I,J))
- \sum_{{|I| = |I'|= n, |J| = |J'| =  m}\atop {(I,J) \neq (I',J')}}\p(\calE(I,J)\cap\calE(I',J'))\\
&=:A-B.\nonumber
\end{align}
By \eqref{eq:single_entry} and independence of the entries of $\Gamma$,
\begin{equation}
A=\binom{N}{m}\binom{N}{n}  \frac{\delta^{mn}}{N^{n+m}} \to \frac{\delta^{nm}}{n!m!} ,
\end{equation}
as $N \to \infty$.

Now we group the summands in $B$, depending on the cardinality of $I\cup I'$ and
$J\cup J'$ and obtain
\begin{align}\label{eq:grouping}
B=\sum_{{n\le r \le 2n, m\le s\le 2m} \atop {r + s > n+m}} \sum_{{|I| = |I'|= n, |J| = |J'| =  m}\atop {|I\cup I'| = r, |J\cup J'| = s}}\p(\calE(I,J)\cap\calE(I',J')).
\end{align}
For fixed $r,s$ there are at most $C_{rs}N^{r+s}$ pairs $(I,J),(I',J')$ such
that $|I| = |I'|= n$, $|J| = |J'| =  m$,  $|I\cup I'| = r, |J\cup J'| = s$ where
$C_{rs}$ is a constant depending only on $r$ and $s$. For each such pair the
event $\calE(I,J)\cap\calE(I',J')$ is the intersection of $rs - 2(r-n)(s-m)$
independent events of the form \eqref{eq:single_entry}. Therefore,
\begin{equation}
\p(\calE(I,J)\cap\calE(I',J')) = \delta^{rs - 2(r-n)(s-m)}N^{-{(rs-2(r-n)(s-m))(n+m)} / {(nm)}}
\end{equation}
and as a consequence
\begin{align}
\sum_{{|I| = |I'|= n, |J| = |J'| =  m}\atop {|I\cup I'| = r, |J\cup J'| = s}}
\p(\calE(I,J)\cap\calE(I',J'))
&\le
C_{rs}\delta^{rs - 2(r-n)(s-m)}N^{r+s - {(rs-2(r-n)(s-m))(n+m)}/{(nm)}}\\
&= C_{rs}\delta^{rs - 2(r-n)(s-m)}N^{{(s-2m)(r-n)}/{n} + {(r-2n)(s-m)}/{m}}.\nonumber
\end{align}
One can see that if $r \neq 2n$ or $s \neq 2m$ then
$\frac{(s-2m)(r-n)}{n}+\frac{(r-2n)(s-m)}{m} < 0$ and so the contribution to
\eqref{eq:grouping} from such pairs converges to $0$ as $N \to \infty$.
Therefore, for $\delta$ small enough and large $N$,
\begin{equation}
\p(\bigcup_{|I| = n, |J| = k} \calE(I,J)) \ge \frac{\delta^{nm}}{2n!m!} -
C_{2n,2m}\delta^{2nm} > \frac{\delta^{nm}}{4n!m!}.
\end{equation}
Thus \eqref{eq:to_prove} holds with $d = \frac{\delta^{nm}}{4n!m!}$, which ends
the proof of the proposition.
\end{proof}

\subsection{Proof of Theorems \ref{th:single-column} and \ref{th:one-times-n}}

\begin{proof}[Proof of Theorem \ref{th:single-column}]
Note that the  first column $(U_{1i})_{i=1}^N$ (or any other column or row of
$U$) is uniformly distributed on $S_\ce^{N-1}\simeq S^{2N-1}$. Hence the squares
of the moduli of its entries, $q_i=|U_{1i}|^2$, $i=1,\dots N$, form a random
probability vector uniformly distributed on the simplex $\Delta_{N-1} \subset
{\mathbb{R}}^{N}$ (this observation seems to be a part of the folklore, it can
be easily obtained by 1) expressing the uniform measure on $S_\ce^{N-1}$ in
terms of normalized complex Gaussian vectors, 2) using the fact that the square
of the absolute value of a standard complex Gaussian variable has standard
exponential distribution, 3) invoking the well known fact that a self normalized
vector with i.i.d. standard exponential coordinates is distributed uniformly on
$\Delta_{N-1}$, see e.g. \cite{Kotz2000}).

To look for the largest component of the vector we order $q_1,\ldots,q_N$ in a
weakly decreasing order, $q_1^{\downarrow} \ge q_2^{\downarrow} \ge \dots \ge
q_N^{\downarrow}$. It is not hard to notice that the random vector
$q^{\downarrow}=(q_1^{\downarrow}, q_2^{\downarrow}, \dots, q_N^{\downarrow})$
is uniformly distributed on the simplex  ${\tilde \Delta_{N-1}}$ with vertices
$(1,0,\dots,0)$, $\frac{1}{2}(1,1,0,\dots,0)$,\dots,$\frac{1}{N}(1,1,\dots,1)$.

\begin{figure}[htbp]
\centering
\includegraphics[width=0.75\linewidth]{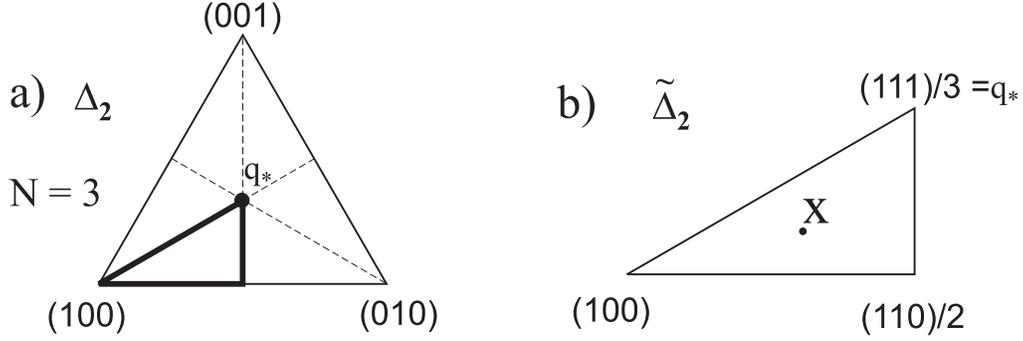}
\caption{
a) Simplex $\Delta_2$  for $N=3$ centered at $q^*=\left(\frac13,\frac13,\frac13\right)$ and b) its asymmetric part ${\tilde\Delta_2}$.
The barycenter $X$ of ${\tilde\Delta_2}$  
with components $\left(\frac{11}{18},\frac{5}{18},\frac{2}{18}\right)$ represents
the averaged ordered vector $\Ex q^{\downarrow}$.
}
\label{fig:simpl3}
\end{figure}

Thus the mean value of $q^{\downarrow}$ is the barycenter of ${\tilde
\Delta_{N-1}}$. Its coordinates can be expressed in terms of the {\sl harmonic}
numbers $H_m:=\sum_{j=1}^m 1/j$, which asymptotically behave as $\ln m +\gamma$,
where $\gamma\approx 0.5772$ denotes the Euler constant. Namely,
\begin{equation}
\Ex q_m^{\downarrow}  = \frac{1}{N} \sum_{j=m}^N \frac{1}{j} = \frac{1}{N}(H_N-H_{m-1}).
\label{xm}
\end{equation}

Denote by $X_{n,i}$ ($i = 1,\ldots,N$) the maximum norm of a subvector of
dimension $n\leq N$ of the $i$-th column of $U$. The average of $X_{n,i}^2$ is
equal to the sum of the first $n$ components of the ordered vector
$q^{\downarrow}$, averaged over the simplex ${\tilde \Delta_{N-1}}$, i.e.
\begin{equation}
  \Ex X_{n,i}^2 =
 \Ex \sum_{m=1}^n q^\downarrow_m  = \frac{1}{N} \sum_{m=1}^n  \sum_{j=m}^N \frac{1}{j} \ .
\label{maxA1}
\end{equation}
To evaluate this sum we divide the summation region in the $(m,j)$ plane into a
triangle and a rectangle and change the summation order,
 \begin{equation}
 \Ex X_{n,i}^2
 = \frac{1}{N}
 \Bigl[  \sum_{j=1}^n \frac{1}{j} \sum_{m=1}^j 1 +
       \sum_{j=n+1}^N \frac{1}{j} \sum_{m=1}^n 1 \Bigr] =
 \frac{n}{N}  \Bigl[ 1+ \sum_{j=n+1}^N \frac{1}{j} \Bigr]
 =\frac{n}{N} \Bigl[ 1+  H_N -H_n\Bigr].
\label{maxA2}
\end{equation}
We can now easily finish the proof, since we have
\[
||\widehat{U}^{(n,1)}||=\max_{i\leq N}X_{n,i},
\]
and \eqref{eq:harmonic} follows by the concentration of measure (recall that the
uniform distribution on $S_\ce^{N-1}$ satisfies the log-Sobolev inequality with
constant $1/(2N-1)$).
\end{proof}

\begin{proof}[Proof of Theorem \ref{th:one-times-n}]
Let us fix $\varepsilon > 0$ and let $n_0 = n_0(\varepsilon)$ be a sufficiently
large constant depending on $\varepsilon$, to be chosen later on. By Theorem
\ref{th:submatrices-fixed-size}, with probability tending to 1 as $N \to
\infty$, we have for all $n \le n_0$
\begin{equation}\label{eq:small-size}
(1-\varepsilon)\sqrt{\frac{n+1}{N}(1+  \ln \Big(\frac{N}{n}\Big)\Big) } \le \|\widehat{U}^{(n,1)}\| \le (1+\varepsilon)\sqrt{\frac{n+1}{N}(1+  \ln \Big(\frac{N}{n}\Big)\Big) }.
\end{equation}
(note that in this range of $n$, $(1+  \ln (N/n)) = (1+o(1))\ln N$ as $N \to
\infty$).

Consider any $n \ge n_0$. As in the proof of Theorem \ref{th:single-column},
denote by $X_{n,i}$ ($i = 1,\ldots,N$) the maximum norm of a $n\times 1$
submatrix of the $i$-th column of $U$.

Using \eqref{maxA2} we get $\E X_{n,i}^2 = \frac{n}{N}(1 + H_N - H_n)$ and so
\begin{equation}
(1-\varepsilon/8)^2 \frac{n+1}{N}(1 + \ln(N/n)) \le \E X_{n,i}^2
\le (1+\varepsilon/8)^2 \frac{n+1}{N}(1 + \ln(N/n)),
\end{equation}
where we used the fact that $n > n_0$.

Now, by integration by parts and \eqref{conc} it is easy to see that for large
$N$,
\begin{equation}
\E X_{n,i} \ge  \sqrt{\E X_{n,i}^2} - O(1/\sqrt{N}) \ge (1 - \varepsilon/8)\sqrt{\E X_{n,i}^2},
\end{equation}
where the second inequality holds for $n > n_0$ and $n_0$ large enough. Thus
\begin{equation}
(1-\varepsilon/2) \sqrt{\frac{n+1}{N}(1 + \ln(N/n))}
\le \E X_{n,i} \le (1+\varepsilon/2)\sqrt{ \frac{n+1}{N}(1 + \ln(N/n))}.
\end{equation}

Now, using again \eqref{conc} together with the union bound we get
\begin{equation}\label{eq:whatever}
(1-\varepsilon) \sqrt{\frac{n+1}{N}(1 + \ln(N/n))} \le X_{n,i}
\le (1+\varepsilon)\sqrt{ \frac{n+1}{N}(1 + \ln(N/n))},
\end{equation}
for all $n > n_0$ and $i \le N$, with probability at least
\begin{equation}
1 - N\sum_{n=n_0+1}^N\exp\Big(-\frac{\varepsilon^2}{24} (n+1)\ln(eN/n)\Big),
\end{equation}
which can be made arbitrarily close to one for $N \to \infty$ if one chooses
$n_0(\varepsilon)$ sufficiently large (as can be easily seen by splitting the
second term  into two separate sums over  $n_0< n < \sqrt{N}$ and $\sqrt{N}\le n
\le N$ respectively).
The proof is concluded by combining \eqref{eq:small-size} and
\eqref{eq:whatever}.
\end{proof}


\section{Proof of proposition~\ref{prop:uniform-bound-for-sk}} \label{sec:uniform-bound-for-sk}

\begin{proof}[Proof of Proposition \ref{prop:uniform-bound-for-sk}]
Note, that if for any $n,m$, such that $n+m = k+1$, we have
\begin{align}\label{eq:local_goal}
\E \|\widehat{U}^{(n,m)}\| \le \sqrt{ D \frac{k+1}{N} \left( 1 +
\ln\left(\frac{2 N}{k+1}\right) \right) },
\end{align}
then \eqref{eq:basic} holds with $\constantCf = D + \delta$, for any $\delta >
0$, since
by Theorem \ref{thm:MeckesMeckes} and \eqref{conc} we get
\begin{equation}
\begin{split}
\p &\left( \exists_{k \le N}\, s_k > \sqrt{\constantCf\frac{k+1}{N}\left(1 +
\ln\left(\frac{2 N}{k+1}\right)\right)} \right)
\le \sum_{k=1}^{N-1} \sum_{n=1}^k
\p \left( \|\widehat{U}^{(n,k+1-n)}\| > \sqrt{\constantCf\frac{k+1}{N}\left( 1
+ \ln\left(\frac{2N}{k+1}\right)\right)}\right) \\
& \ \ \le \sum_{k=1}^N k\exp\left(-c_{D,\delta}N \frac{k+1}{N} \left( 1 + \ln\left(\frac{2 N}{k+1}\right)\right) \right) \to 0 \text{ as } N \to \infty.
\end{split}
\end{equation}
By Theorem~\ref{th:upper} we get
\begin{equation}\label{eq:yet_another}
\E \|\widehat{U}^{(n,m)}\| \leq
\frac{1}{1-2\ve-\ve^2}\sqrt{\frac{2}{2N-1}}\left(m\ln \frac{eN}{m}+n\ln\frac{eN}{n}+2(n+m)\ln \left(1+\frac{2}{\ve} \right)\right)^{1/2},
\end{equation}
for $\varepsilon < 1/3$.

Note that when $k+1 > N/D$, the right hand side of \eqref{eq:local_goal} exceeds
1, so the inequality is satisfied trivially. We can therefore assume that $k+1
\le N/D$. We maximize the right hand side of  \eqref{eq:yet_another} under the
constraint $n + m = k+1$ and get
\begin{equation}
\begin{split}
\E \|\widehat{U}^{(n,m)}\| &\leq
\frac{1}{1-2\ve-\ve^2}\sqrt{\frac{2}{2N-1}}\Big((k+1) \ln \frac{2eN}{k+1}+ 2(k+1)\ln(1+\frac{2}{\ve})\Big)^{1/2}\\
&\le \sqrt{\frac{2}{2N-1}}\Big((k+1) \ln\frac{2eN}{k+1}\Big)^{1/2}
 \Big(\frac{1}{(1-2\ve-\ve^2)^2}\Big(1 + \frac{2\ln\Big(1+\frac{2}{\ve}\Big)}
 {\ln(2eD)}\Big)\Big)^{1/2},
\end{split}
\end{equation}
where we used the assumption $N/(k+1) \ge D$.

Now we set $D = 4.175$ and perform a minimization with respect to $\varepsilon
\in (0,1/3)$ of the expression
\begin{equation}
\frac{1}{(1-2\ve-\ve^2)^2} \left(1 + \frac{2\ln(1+\frac{2}{\ve})}{\ln (2eD)} \right).
\end{equation}
The numerical value of the minimum is approximately $4.172 \le 4.175$ (obtained
for $\varepsilon = 0.039$). This shows \eqref{eq:local_goal} with $D = 4.175$
and thus the proposition holds true with $\constantCf = 4.18$.
\end{proof}


\section{Proof of Lemma \ref{lemma:norms_multiple}}\label{sec:lemma:norms_multiple}

\begin{proof}[Proof of Lemma \ref{lemma:norms_multiple}]
Recall the notation introduced in equation \eqref{eq:Euclidean_norm}. Let us
note that by Theorem \ref{thm:MeckesMeckes} and the tensorization property of entropy, $U$ satisfies the
log-Sobolev inequality with parameter $6/N$ with respect to the
Hilbert-Schmidt metric. In particular, since for any unit vector $|x\> =
(x_1,\ldots,x_{NL}) \in \C^{NL}$, the map $U \mapsto \|Ux\|$ is 1-Lipschitz, we
get
\begin{equation}\label{eq:many-conc}
\p(\|Ux\| \ge \E \|Ux\| + t) \le 2e^{-Nt^2/12}.
\end{equation}

Denote the columns of $U$ by $|Y_i\>$, $i = 1,\ldots,NL$.
We also have
\begin{equation}
\E \|Ux\|^2 = \E \langle x|U^\dag U|x\rangle
= \sum_{i=1}^{NL} |x_i|^2 \E \|Y_i\|^2 +
\sum_{i\neq j} x_i\bar{x}_j \E \langle Y_i|Y_j\rangle = 1,
\end{equation}
where we used the fact that for each $i \neq j$, $|Y_i\>$ and $|Y_j\>$ are of mean
zero and either stochastically independent or orthogonal with probability one.
Thus $\E\|Ux\| \le 1$. Moreover, by \eqref{eq:many-conc} and integration by parts
\begin{equation}
1 = \sqrt{\E\|Ux\|^2} \le \E\|Ux\| + \sqrt{\frac{24}{N}}.
\end{equation}

Consider now a fixed set $I \subset \{1,\ldots,NL\}$ with $|I| = k+1$ and let
$\mathcal{N}_I$ be a $1/4$-net in the unit ball of $\C^I = \{|x\> =
(x_1,\ldots,x_{NL}) \in \C^{NL} \colon x_i = 0 \textrm{\; for $i \notin I$}\}$
of cardinality $10^{2(k+1)}$ (it exists by standard volumetric estimates, see
\cite{Pisier}). If $\constantCh$ is a sufficiently large absolute constant,
then by the
union bound, with probability at least
\begin{equation}
1 - 10^{2(k+1)}\binom{LN}{k+1}e^{-\constantCh(k+1)\ln(eNL/(k+1))/13} \ge 1 -
e^{-(k+1)\ln(eNL/(k+1))},
\end{equation}
we have
\begin{equation}\label{eq:net}
1 - \sqrt{\constantCh\frac{k+1}{N}\ln\Big(\frac{eNL}{k+1}\Big)} \le \|Ux\| \le
1 + \sqrt{\constantCh\frac{k+1}{N}\ln\Big(\frac{eNL}{k+1}\Big)} ,
\end{equation}
for all $I$ with $|I| = k+1$ and $|x\> \in \mathcal{N}_I$.

Let $\delta = \sqrt{\constantCh\frac{k+1}{N}\ln\Big(\frac{eNL}{k+1}\Big)}$. If
$\delta >
1$, then the second inequality in \eqref{eq:net} implies that
\begin{equation}
\|U_I\| \le \frac{4}{3}(1+\delta) \le 1 +
\sqrt{\constantCi\frac{k+1}{N}\ln\Big(\frac{eNL}{k+1}\Big)} ,
\end{equation}
for $\constantCi$ sufficiently large (depending only on $\constantCh$).

If $\delta < 1$, then on the event where \eqref{eq:net} holds, we have for $I$
with $|I| = k+1$ and $|x\> \in \mathcal{N}_I$,
\begin{equation}
1 - 2\delta \le \|U_Ix\|^2 = \langle x|U_I^\dag U_I |x\rangle \le 1 + 3\delta ,
\end{equation}
which implies that the operator $A = U_I^\dag U_I - \id$ on $\C^{I}$ (where $\id$ is the identity matrix), satisfies
\begin{equation}
|\langle x|A|x \rangle| \le 3\delta ,
\end{equation}
for $|x\> \in \mathcal{N}_I$.

Let now $|y\>$ be any unit vector in $\C^I$ and $|x\>$ a point in
$\mathcal{N}_I$ such that
$\big\| \ket{x} - \ket{y}\big\| < 1/4 $. We have
\begin{equation}
\big|\langle y|A|y\rangle \big|
\le
\big|\left(\langle y| - \langle x|\right) A \left(| y \rangle - | x \rangle \right) \big|
+
\big|\langle x| A \left(| y \rangle - | x \rangle \right) \big|
+
\big |\left(\langle y| - \langle x|\right)  A|x\rangle \big|
+
\big|\langle x|A|x\rangle \big|
\le
\frac{1}{16}\|A\|
+
\frac{1}{2}\|A\|
+
3\delta.
\end{equation}

Taking the supremum over $|y\>\in \mathbb{S}_\C^{N-1}$, using the fact that $A$
is Hermitian and performing easy calculations we get
\begin{equation}
\|A\| \le 7\delta,
\end{equation}
which implies that
\begin{equation}
\|U_I\|^2 \le 1 + 7\delta
\end{equation}
and as a consequence
\begin{equation}
\|U_I\| \le 1 + \sqrt{\constantCj\frac{k+1}{N}\ln\Big(\frac{eNL}{k+1}\Big)}.
\end{equation}
Now it remains to set $\constantCg = \max(\constantCi,\constantCj)$, take the
union bound
over all $k
\le NL-1$ and note that
\begin{equation}
\sum_{k=1}^{NL-1} e^{-(k+1)\ln(eNL/(k+1))} \to 0,
\end{equation}
as $N \to \infty$ for $L \ge 2$.
\end{proof}

\section{Proofs of Propositions \ref{th:asymptotic-MU},\ref{prop:CP-asymptotic},\ref{prop:majorization-weak}}\label{sect:simple-maj}

\begin{proof}[Proof of Proposition~\ref{th:asymptotic-MU}]
Plugging the estimation from Theorem~\ref{th:jiang}
 to the Maassen-Uffink relation, we obtain that with probability tending to one as $N \to \infty$,
\begin{equation}
- \ln c(U)^2 = -\ln \left((1+o(1))\frac{2}{N} \ln N \right)
= \ln N - \ln \ln N - \ln 2 + o(1).
\end{equation}
\end{proof}

\begin{proof}[Proof of Proposition \ref{prop:CP-asymptotic}]
Note that $c(U)^2 + c_2(U)^2 \geq \|\widehat{U}^{(1,2)}\|^2$. Thus, by Theorem~\ref{th:submatrices-fixed-size}, we obtain that for all $\varepsilon > 0$,
\begin{equation}\label{eq:c_2}
\p \left(
c_2^2 \ge (1-\varepsilon) \frac{\ln N}{N}
\right)
\to 1 \text{ as } N \to \infty.
\end{equation}
In particular (again by Theorem \ref{th:submatrices-fixed-size}), with
probability tending to one as $N \to \infty$, $c(U)/c_2(U) \le 3$. Therefore,
with probability tending to one as $N \to \infty$,
\begin{equation}
B_{\textrm{CP}} =
-\ln c(U) ^2+\left( \frac12 -\frac{c(U)}{2} \right) \ln\frac{c(U)^2}{c_2(U)^2}
= - \frac12 (\ln c(U)^2 + \ln c_2(U)^2) + o(1).
\end{equation}
To prove \eqref{eqn:CP-asymptotic} it is now enough to combine the above
estimate with Proposition \ref{th:asymptotic-MU} and \eqref{eq:c_2}.
\end{proof}

\begin{proof}[Proof of Proposition \ref{prop:majorization-weak}]
Denote $q_1 = R_1$, $q_i = R_i - R_{i-1}$ for $i = 2,\ldots,N$. We have $q_1 \ge
\frac{1}{4}$. If $q_1 = 1$, then $H(Q) = 0$, otherwise we have with $I = \{2\le
i \le N\colon q_i > 0\}$,
\begin{align}
H(Q) &= -q_1\ln q_1 + \sum_{i \in I} -q_i\ln q_i\\
& = -q_1\ln q_1 + (1-q_1)\sum_{i\in I} \frac{q_i}{1-q_1}\ln \frac{1}{q_i}\\
& \le -q_1\ln q_1 + (1-q_1)\ln\Big(\sum_{i\in I} \frac{q_i}{1-q_1}\frac{1}{q_i}\Big)\\
& = -q_1\ln q_1 - (1-q_1)\ln(1-q_1) + (1-q_1)\ln |I|\label{eq:almost_last}\\
&\le H(q_1,1-q_1) + (1-q_1)\ln (N-1).\label{eq:D_last}
\end{align}
where in the first inequality we used concavity of the logarithm and the fact
that $\sum_{i\in I} q_i = 1- q_1$. Now, the expression \eqref{eq:D_last}
decreases for $q_1 \in [\frac{1}{N},1]$. In particular for $N \ge 4$ this
implies \eqref{eq:majo-weak}.

\end{proof}


\end{document}